\newlist{todolist}{itemize}{2}
\setlist[todolist]{label=$\square$}
\algrenewcommand\textproc{}
\newif\iftodo
\newif\ifbreakpage
\newcommand{\separate}{\ifbreakpage \clearpage \fi}
\theoremstyle{definition}
\newtheorem{definition}{Definition}
\theoremstyle{observation}
\newtheorem{observation}{Observation}
\theoremstyle{lemma}
\newtheorem{lemma}{Lemma}
\theoremstyle{theorem}
\newtheorem{theorem}{Theorem}
\theoremstyle{corollary}
\newtheorem{corollary}{Corollary}
\theoremstyle{remark}
\renewenvironment{proof}[1][\proofname] {\par\pushQED{\qed}\normalfont\topsep6\p@\@plus6\p@\relax\trivlist\item[\hskip\labelsep\bfseries#1\@addpunct{.}]\ignorespaces}{\popQED\endtrivlist\@endpefalse}
\newcommand{\xxx}[1]{\textit{STRETCH}}
\lstdefinestyle{mystyle}{
  commentstyle=\color{codegreen},
  keywordstyle=\color{magenta},
  numberstyle=\tiny\color{codegray},
  stringstyle=\color{codepurple},
  basicstyle=\footnotesize,
  breakatwhitespace=false,   
  frame=single,
  breaklines=true,                 
  captionpos=b,                    
  keepspaces=true,                 
  numbers=left,                    
  numbersep=5pt,                  
  showspaces=false,                
  showstringspaces=false,
  showtabs=false,
  tabsize=1,
  moreattributes={fromR,poll,peek,predicate,add,join,getNextReady,getStatesIndexes,get,process,initialize,hash},
  attributestyle = \bfseries\color{Blue},
  literate={\ \ }{{\ }}1
}
\newcommand{\draft}[1]{{\color{cyan} #1}}
\renewcommand{\draft}[1]{#1}
\newcommand{\scalegate}{$SG$}
\newcommand{\elasticscalegate}{$ESG$}
\newcommand{\elasticscalegatein}{$ESG_{in}$}
\newcommand{\elasticscalegateout}{$ESG_{out}$}
\newcommand{\WA}{\ensuremath{\textit{WA}}}
\newcommand{\WS}{\ensuremath{\textit{WS}}}
\newcommand{\I}{\ensuremath{\textit{I}}}
\newcommand{\psipar}{\ensuremath{\textit{WT}}}
\newcommand{\keybysingle}{\ensuremath{f_{\textit{SK}}}}
\newcommand{\keybysingleinmath}{\keybysingle}
\newcommand{\keybymulti}{\ensuremath{f_{\textit{MK}}}}
\newcommand{\keybymultiinmath}{\keybymulti}
\newcommand{\mapop}{$M$}
\newcommand{\aggop}{$A$}
\newcommand{\joinop}{$J$}
\newcommand{\aggplusop}{A^+}
\newcommand{\aggplusopinmath}{$\aggplusop$}
\newcommand{\joinplusop}{J^+}
\newcommand{\joinplusopinmath}{$\joinplusop$}
\newcommand{\generalopplus}{O^+}
\newcommand{\generalopplusinmath}{$\generalopplus$}
\newcommand{\generalopplusinst}{o^+}
\newcommand{\generalopplusinstinmath}{$\generalopplusinst$}
\newcommand{\opstate}{\sigma}
\newcommand{\opstateinmath}{$\opstate$}
\newcommand{\opmappingfun}{f_{\mu}}
\newcommand{\opmappingfuninmath}{$\opmappingfun$}
\newcommand{\opmapping}{f_{\mu}}
\newcommand{\opmappinginmath}{$\opmappingfun$}
\newcommand{\watermarkof}[1]{W_{#1}^\omega}
\newcommand{\watermarkofinmath}[1]{$\watermarkof{#1}$}
\newcommand{\X}{\ensuremath{\textit{TB}}}
\newcommand{\Xinmath}{$\X$}
\newcommand{\Xplus}{\X}
\let\oldnl\nl
\newcommand{\nonl}{\renewcommand{\nl}{\let\nl\oldnl}}
\renewenvironment{description}[1][0pt]
  {\list{}{\labelwidth=0pt \leftmargin=#1
   }}
  {\endlist}
\definecolor{Gray}{gray}{0.9}
\begin{document}
\title{STRETCH: Virtual Shared-Nothing Parallelism for Scalable and Elastic Stream Processing}

\author{Vincenzo~Gulisano, Hannaneh~Najdataei, Yiannis~Nikolakopoulos, Alessandro~V.~Papadopoulos, Marina~Papatriantafilou, and~Philippas~Tsigas
\IEEEcompsocitemizethanks{
\IEEEcompsocthanksitem A preliminary version of this paper appeared at the ACM International Conference on Distributed and Event-based Systems (DEBS'19)~\cite{najdataei2019stretch}.
\IEEEcompsocthanksitem V. Gulisano, H. Najdataei, M. Papatriantafilou and P. Tsigas are with Chalmers University of Technology, G{\"o}teborg; Y. Nikolakopoulos is with ZeroPoint Technologies and contributed to this work while he was with Chalmers University of Technology; A. V. Papadopoulos is with M{\"a}lardalen University, V{\"a}ster{\aa}s. E-mail: \{vincenzo.gulisano,hannajd,ptrianta,tsigas\}@chalmers.se; yiannis@zptcorp.com; alessandro.papadopoulos@mdh.se;}
}

\IEEEtitleabstractindextext{%
\begin{abstract}
Stream processing applications extract value from raw data through Directed Acyclic Graphs of data analysis tasks.
Shared-nothing (SN) parallelism is the de-facto standard to scale stream processing applications.
Given an application, SN parallelism instantiates several copies of each analysis task, making each instance responsible for a dedicated portion of the overall analysis, and relies on dedicated queues to exchange data among connected instances.
On the one hand, SN parallelism can scale the execution of applications both up and out since threads can run task instances within and across processes/nodes.
On the other hand, its lack of sharing can cause unnecessary overheads and hinder the scaling up when threads operate on data that could be jointly accessed in shared memory.
This trade-off motivated us in studying a way for stream processing applications to leverage shared memory and boost the scale up (before the scale out) while adhering to the widely-adopted and SN-based APIs for stream processing applications.

We introduce \xxx{}, a framework that maximizes the scale up and offers instantaneous elastic reconfigurations (without state transfer) for stream processing applications. We propose the concept of Virtual Shared-Nothing (VSN) parallelism and elasticity and provide formal definitions and correctness proofs for the semantics of the analysis tasks supported by \xxx{}, showing they extend the ones found in common Stream Processing Engines.
\draft{We also provide a fully implemented prototype and show that \xxx{}'s performance exceeds that of state-of-the-art frameworks such as Apache Flink and offers, to the best of our knowledge, unprecedented ultra-fast reconfigurations, taking less than 40 ms even when provisioning tens of new task instances.}
\end{abstract}

\begin{IEEEkeywords}
Stream Processing, Shared-Nothing Parallelism, Shared-Memory, Elasticity, Scalability.
\end{IEEEkeywords}}

\maketitle

\separate
\section{Introduction}\label{sec:introduction}

Stream processing applications process data (\textit{tuples}) coming from \textit{unbounded streams} (e.g., tweets or trade records).
Such applications are run by Stream Processing Engines (SPEs) such as Apache Flink~\cite{flink} or Storm~\cite{storm}.
SPEs provide users with semantically-rich operators composable into Directed Acyclic Graphs (DAGs) and automate the process of deploying and executing efficiently user-defined DAGs.

SPEs' de-facto standard to parallelize the execution of stream processing applications builds on \textit{Shared-Nothing} (SN) key-by parallelism.
Simply put, the idea is to create multiple instances of each operator, each with a dedicated state and queues to exchange and route each tuple to exactly one of such instances.
SN parallelism is also used in elasticity protocols, which aim at adjusting the parallelism degree of the DAGs run by SPEs, avoiding the overheads of over- or under-provisioned applications~\cite{gulisano2012streamcloud}. 

\subsubsection*{Trade-offs of shared-nothing parallelism.}
While able to scale the execution of a DAG both up and out SN parallelism can incur unnecessary overheads.
The first type of overhead is caused by operators' dedicated input/output queues.
When the semantics of a parallel operator $O$ require to route a tuple to multiple instances of $O$, this leads to data duplication.
Consider an application, which we use as a running example, in which $O$ computes the longest tweet on a per-hour, per-hashtag basis. 
Its analysis can be parallelized by having each parallel instance of $O$ responsible for one observed hashtag.
Tweets carrying multiple hashtags, though, might need to be shared with multiple instances.
This overhead is exacerbated by the fact that SPEs might leave to users the task of correctly customizing the routing of tuples.
The second type of overhead is caused by the operators' dedicated internal state.
Since instances' states are not shared, state transfer is needed in elastic reconfigurations to adjust the workload distribution and/or parallelism degree of an operator~\cite{gulisano2012streamcloud}.
This overhead is exacerbated when SPEs request users to implement serialization/deserialization methods for custom states~\cite{flinkstate}.

While these overheads are unavoidable for operator instances running distributedly, they are not for instances that share memory within the same process, and could thus share tuples and states too.
Following the principle that applications should be properly scaled up before being scaled out~\cite{gibbons2015big},
we thus pose the following question: 
\textit{How can we seamlessly leverage shared memory to boost parallel/elastic executions of common SPEs' operators, while avoiding data duplication and state transfer overheads?}

\subsubsection*{Contributions}
\draft{We formally show that it is possible to define parallel and elastic SPE operators that, by \textit{virtualizing} the common Application Programming Interfaces (APIs) based on SN parallelism, can leverage shared memory to first scale streaming applications up while allowing to rely on SN parallelism to later scale them out.}
We thus introduce \xxx{} and the concept of \emph{Virtual Shared-Nothing} (VSN) parallelism/elasticity in stream processing. 
These are our contributions:
\begin{itemize}[leftmargin=*]
    \item we prove the potential need for data duplication in SN parallelism and propose a unified generalized model for SN parallelism encapsulating common SPEs' operators,
    \item we prove that VSN parallelism can correctly enforce the semantics of our generalized model while circumventing the overheads of data duplication and state transfer,
    \item we provide a fully implemented prototype, which builds on state-of-the-art data structures such as ScaleGate~\cite{cederman2013concurrent} and our extended \emph{Elastic ScaleGate} implementation, as well as  a thorough evaluation with several use-cases and both synthetic and real data.
\end{itemize}

\noindent\textit{Outline:} 
\autoref{sec:prel} covers preliminaries, 
\autoref{sec:smps} formalizes our problem,
\autoref{sec:generalizedmodel} discusses the data duplication overhead and introduces our generalized model,
\autoref{sec:stretch}-\autoref{sec:implementation} cover \xxx{}'s model and implementation,
\autoref{sec:eval} evaluates \xxx{}, \autoref{sec:rw} discusses related work, and \autoref{sec:conc} wraps up our work.
A table of abbreviations/symbols is found in Appendix~\ref{apx:symbols}.

\separate
\section{Preliminaries}
\label{sec:prel}

\subsection{Stream processing basics}\label{sec:dsbasics}

In accordance with the DataFlow model~\cite{akidau2015dataflow},
\textit{streams} are unbounded sequences of \textit{tuples}.
Tuples have two \textit{attributes}: the metadata and the payload $\varphi$.
The metadata carries the timestamp $\tau$ and possibly further \textit{sub-attributes}. 
We write $t.\tau$ to refer to the $\tau$ sub-attribute of tuple $t$'s metadata.
We reference $\varphi$'s $\ell$-th sub-attribute as $t.\varphi[\ell]$.
A tuple's combined notation is $\langle \tau,\dots, \left[ \varphi[1], \varphi[2], \dots \right] \rangle$.

\textit{Stream processing queries} (or simply queries) are composed of \textit{ingresses}, \textit{operators}, and \textit{egresses}.
Ingresses forward \textit{ingress tuples} 
(e.g., events reported by sensors or other applications).
Each \textit{ingress stream} can be fed to 
one or more  operators, the basic units manipulating tuples. 
Operators, connected 
in a DAG, process input tuples and produce output tuples;
eventually, \textit{egress tuples} are fed to egresses, 
which deliver results to end-users or other applications.

As ingress tuples correspond to events, $\tau$ is the \emph{event time} set by the ingress to when the event took place.
Operators set $\tau$ of each output tuple according to their semantics, while $\varphi$ is set by user-defined functions.
Event time is expressed in time units from a given epoch, 
and progresses in SPE-specific discrete $\delta$ increments (e.g., milliseconds~\cite{flink}). 

The operators of a query are either \textit{stateless} or \textit{stateful}.
Stateless operators process each tuple individually. The Map/Flatmap (\mapop{}) operator, for instance, transforms each input tuple $t_{in}$ into one or more output tuples $t_{out}$ by setting $t_{out}.\tau$ to $t_{in}.\tau$ (we write $t_{out}.\tau \xleftarrow{} t_{in}.\tau$) and using a user-defined function to create $t_{out}.\varphi$ from $t_{in}.\varphi$.
Stateful operators run their analysis on delimited 
groups of tuples called \textit{windows}, as explained next.

\subsubsection*{Stateful analysis over time-based windows}
We focus on common stateful operators running their analysis over \textit{time-based windows}, namely \textit{Aggregates} and \textit{Joins}. 
For conciseness, for operator $O$ we use the notation
$$O(\WA,\WS,\I,\keybysingle,\psipar,S,f_1,f_2,\ldots)$$ 
to refer to the parameters that such stateful operators share. 
\begin{description}
\item [\draft{Parameters Window Advance (\WA{}) and Size (\WS{})}] define the advance/size of $O$'s windows, respectively, which cover periods $[\ell WA, \ell WA+ WS)$, with $\ell \in \mathbb{Z}$.
Consecutive periods overlap if $\WA<\WS$; the window is then called \textit{sliding} and a tuple can fall into several window \textit{instances}.
\item [\draft{Parameter Inputs (\I{})}] is the number of $O$'s input streams, one for each of $O$'s upstream peers.
\item [\draft{Single Key-by function $\bm{\keybysingleinmath{}}$}] \draft{extracts exactly one key from a tuple $t$, usually returning a subset of $t.\varphi$~\cite{akidau2015dataflow}.} $O$ maintains distinct window instances for each of its $\I$ input streams and for groups of tuples that share the same \textit{key}.
\item [\draft{Parameter Window Type ($\bm{\psipar}$)}] defines how window instances are internally maintained by $O$. If $\psipar = \text{single}$, a single instance is maintained per key and updated based both on tuples entering and leaving it. If $\psipar = \text{multi}$, multiple overlapping window instances are maintained per key and updated according to the incoming tuples. Hence, new window instances are continuously created while old ones are eventually discarded. As discussed in~\cite{geethakumari2019time}, $\psipar = \text{single}$ is preferable when $\WA\ll \WS$. \autoref{fig:windows_examples} shows how two operators (that only differ in $\psipar$) maintain their $w$ instances (each $w$ contains the tuples falling in it).
As shown, for each key $O$ maintains a list of sets, each with $I$ windows (one single set if $\psipar = \text{single}$, or one or more sets if $\psipar = \text{multi}$).
\item [\draft{Parameter Schema ($\bm{S}$)}] defines $O$'s output tuples.
\item [Functions $\bm{f_1,f_2,\ldots}$] are operator-specific functions to update $O$'s state and produce output tuples.
\end{description}
 
\begin{figure}[ht!]
\includegraphics[width=1\linewidth]{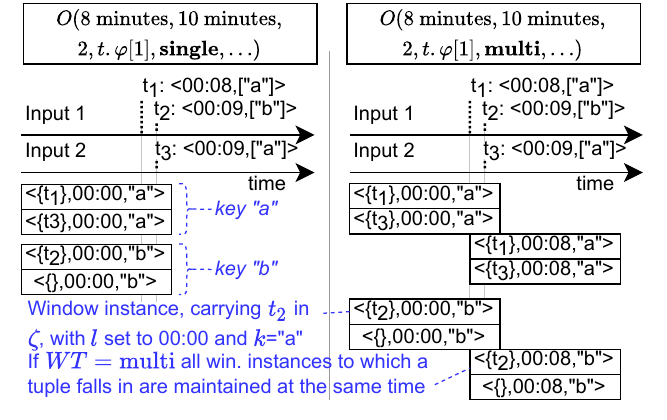}
\caption{Sample $w$ instances maintained by two $O$ operators with $\psipar = \text{single}$ (left) and $\psipar = \text{multi}$ (right).}\label{fig:windows_examples}
\end{figure}

In the following, we use $\langle \zeta,l,k \rangle$ for the combined notation of 
a window instance $w$, where $\zeta$ is $w$'s internal state (e.g., the tuples falling in $w$), $l$ is the event time of $w$'s left boundary (inclusive), and $k$ is $w$'s key. The right boundary of $w$ (exclusive) is computed as $w.l+\WS$.
As common in related works~\cite{flink,storm,beam}, when an output tuple $t_{out}$ is created in connection to a window instance $w$, $t_{out}.\tau$ is set to $w$'s right boundary. Since $w$'s right boundary is exclusive:
\begin{observation}
\label{obs:outtuplestimestamp}
Any output tuple $t_{out}$ produced from a window instance to which $t_{in}$ falls in is such that $t_{out}.\tau>t_{in}.\tau$.
\end{observation}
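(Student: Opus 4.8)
The plan is to simply unfold the two definitions that link $t_{in}$, the window instance $w$ it falls in, and the output tuple $t_{out}$ derived from $w$, and then read off a single strict inequality from the half-open shape of a window's period. No induction, case analysis, or appeal to the operator-specific functions $f_1, f_2, \ldots$ is needed, since the argument depends only on the timestamping convention and not on how $\zeta$ or $\varphi$ are computed.

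First I would recall that every window covers a period of the form $[\ell WA, \ell WA + WS)$; in the $\langle \zeta, l, k \rangle$ notation this says $w$ spans $[w.l,\, w.l + WS)$, with the left boundary inclusive and the right boundary exclusive. By definition, $t_{in}$ ``falls in'' $w$ means precisely $w.l \le t_{in}.\tau < w.l + WS$, so in particular $t_{in}.\tau < w.l + WS$. Next I would invoke the convention stated just before the observation: when $t_{out}$ is produced in connection to $w$, its timestamp is set to $w$'s right boundary, i.e.\ $t_{out}.\tau = w.l + WS$. Chaining the two facts gives $t_{in}.\tau < w.l + WS = t_{out}.\tau$, which is exactly the claim.

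The only point that needs a word of care --- and it is the closest thing to an obstacle --- is that the right boundary is exclusive, which is what makes the inequality strict rather than merely $t_{out}.\tau \ge t_{in}.\tau$ and thus rules out the degenerate case $t_{in}.\tau = t_{out}.\tau$. One might also remark that $WS$ is a positive number of time units, but this is not even required: the mere fact that $t_{in}$ lies in the (necessarily non-empty) half-open interval $[w.l, w.l + WS)$ already forces $t_{in}.\tau < w.l + WS$.
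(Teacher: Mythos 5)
Your proposal is correct and matches the paper's own justification, which is given in the sentence immediately preceding the observation: $t_{out}.\tau$ is set to the right boundary $w.l+\WS$ of the window, and since that boundary is exclusive while $t_{in}$ falls in $[w.l, w.l+\WS)$, the strict inequality $t_{in}.\tau < t_{out}.\tau$ follows. Your added remark on why the inequality is strict (rather than merely non-strict) is exactly the point the paper is flagging with ``Since $w$'s right boundary is exclusive.''
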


The Aggregate $A(\WA,\WS,1,\keybysingle,\psipar,S,f_A,f_R)$
defines $f_A$ to aggregate the tuples falling in one window instance $w$ into the $\varphi$ attribute of the output tuple created for $w$, and $f_R$ to incrementally update $w.\zeta$.
As we show in \autoref{sec:generalizedmodel}, \aggop{} can produce an output tuple both incrementally, updating its state for each new tuple it receives, or only upon the expiration of a window instance.
The Join $J(\WA,\WS,2,\keybysingle,\psipar,S,f_J)$ defines $f_J$ to match pairs of tuples (one from each stream) that fall to window instances sharing the same boundaries and associated with the same key.
Each matched pair can result in up to one output tuple.

\subsection{Shared-nothing parallelism (model and notation)}
\label{ssc:snparallelism}

\begin{figure}[ht!]
\includegraphics[width=\linewidth]{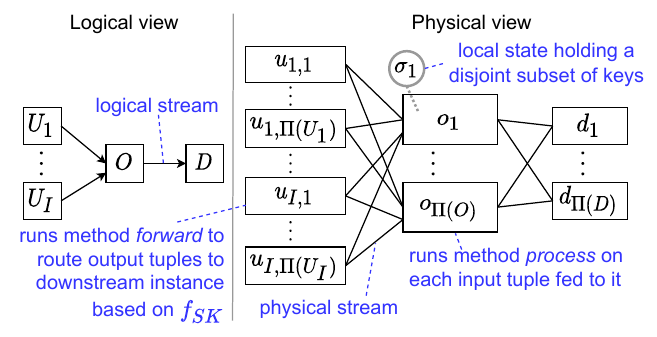}
\caption{Logical/physical views of $O$ with SN parallelism.}\label{fig:sn}
\end{figure}

SPEs let users define operators such as \aggop{}/\joinop{} as \textit{logical}, and later convert them into \textit{physical instances} (\autoref{fig:sn}).
We use $U_i$ and $D$ to refer to a generic upstream ingress/operator and downstream operator/egress, respectively.
$\Pi(X)$ denotes the parallelism degree of the operator, ingress, or egress $X$.
Each logical stream connecting a pair of logical operators (or an ingress/egress) is converted into one or more physical streams.
Each instance $o_j$ of operator $O$ is assigned a subset of the keys observed in their input streams (and thus windows), which it maintains in its local state $\opstate_j$. 

With SN parallelism, each instance $o_j$ has a dedicated queue for the tuples of each of its physical input or output streams.
Moreover, $o_j$ owns its dedicated $\opstate_j$, which is not concurrently accessed by other operator instances.
From an implementation perspective, all the operations carried out by $o_j$ are encapsulated in two methods: \texttt{process} and \texttt{forward}.
Method \texttt{process} encapsulates $o_j$'s analysis, and runs every time a tuple forwarded by an $u_{i,j}$ instance is available for processing, being $u_{i,j}$ the $j$-th instance of $U_i$.
Method \texttt{forward} encapsulates the routing of tuples to downstream instances and runs every time one or more tuples produced by \texttt{process} should be sent to a $D$ instance.
\draft{To route tuples, method \texttt{forward} has access to a \textit{mapping function} \opmappinginmath{} that maps output tuples' keys to $D$ instances according to $D$'s semantics.} We say $o_j$ is responsible for key $k$ if, given the $\opmapping{}$ used by $U_i$'s method \texttt{forward}, $\opmapping(k)=j$.

\subsection{Correctness conditions}
\label{ssc:correctness}

When deploying and running operators, users expect SPEs to enforce operators' semantics correctly.
For operator $O$, correct execution within an SPE can be defined as follows.
\begin{definition}
\label{def:correctness}
$O$'s instances execution is correct if, according to $O$'s specifications, any subset of $O$'s input tuples that could jointly contribute to an output tuple is indeed processed together and results in such an output tuple (if any). 
\end{definition}

For an Aggregate \aggop{}, Definition~\ref{def:correctness} implies that all the input tuples falling into one specific window instance $w$ should be jointly processed by $f_A$ and/or used by $f_{R}$ to update $w.\zeta$.
For a Join \joinop{}, it implies that any pair of tuples (\draft{one from the \textit{left} stream $L$ and one from the \textit{right} stream $R$}) $\left<t_L,t_R\right>$ such that $t_{L}\in w_{L}, t_{R} \in w_{R}, w_{L}.l=w_{R}.l, \text{ and } \keybysingle(t_{L})=\keybysingle(t_{R})$ is processed by $f_J$.

As discussed in~\cite{gulisano2020role} the correct execution for an instance $o_j$ requires to consistently maintain its \textit{watermark} \watermarkofinmath{o_j}:
\begin{definition}
\label{def:watermark}
The watermark \watermarkofinmath{o_j} of operator instance $o_j$ at a point in wall-clock time\footnote{From here on, we only differentiate wall-clock time (or simply time) from event time if such distinction is not clear from the context.} $\omega$ is the earliest event time a tuple $t^\ell$ to be processed by $o_j$ can have from time $\omega$ on (i.e., $t^{\ell}.\tau \ge \watermarkof{o_j}, \forall t^\ell \text{ processed from } \omega \text{ on}$).
\end{definition}

We say a window instance is \textit{expired} if its right boundary falls before the operator instance's watermark.
We say a tuple is expired if it only contributes to expired window instances.
Based on \watermarkofinmath{a_j}, an instance $a_i$ of \aggop{} can safely invoke $f_A$ for any $w$ such that $w.l+\WS\leq \watermarkof{a_j}$ since no more tuples that fall in $w$ will be received or processed by $f_R$, and it is thus safe to shift (if $\psipar = \text{single}$) or discard (if $\psipar = \text{multi}$) $w$ (cf. \autoref{sec:dsbasics}).
\draft{Appendix~\ref{apx:exampleofwatermark} builds on~\autoref{fig:windows_examples} to provide an example visualizing expired and non-expired windows.}
Based on \watermarkofinmath{j_i}, an instance $j_i$ of \joinop{} can safely
shift/delete $w$ if $w.l+\WS\leq \watermarkof{j_i}$, since $j_i$ will no longer invoke $f_J$ on $w$'s expired tuples.
In related work, \watermarkofinmath{o_j} is updated based on one of the following ways.

\subsubsection*{Implicit watermarks}
The first way to update watermarks assumes that each $o_j$'s physical input stream is timestamp-sorted~\cite{gulisano2012streamcloud}.
The tuples of such physical streams are merge-sorted in timestamp-order and fed to $o_j$ once \textit{ready}, as defined next based on~\cite{cederman2013concurrent}:
\begin{definition}
\label{def:ready}
\draft{$t^\ell_i$, the $\ell$-th tuple from timestamp-sorted stream $i$, is \emph{ready} to be processed if $t^\ell_i.\tau \leq \min_\text{i}\{\max_{m}(t^{m}_\text{i}.\tau)\}$,
the minimum among the latest $m$-th tuple timestamps from each timestamp-sorted stream $i$.}
\end{definition}
\noindent In such a case, \watermarkofinmath{o_j} can be safely updated by $o_j$ to $t.\tau$ for each incoming ready tuple $t$ fed to $o_j$.

\subsubsection*{Explicit watermarks}
The second way to update watermarks~\cite{flink,gulisano2020role} assumes ingresses/operators periodically propagate watermarks through the DAG as special tuples.
This allows handling both out-of-timestamp-order streams as well as timestamp-sorted streams whose rate might drop to zero during periods of time (in the latter case this could affect the sorting performed when relying on implicit watermarks).
Upon receiving a watermark, $o_j$ stores the watermark's time, updates \watermarkofinmath{o_j} to the minimum of the latest watermarks received from each input stream, and propagates \watermarkofinmath{o_j}.

Note that, while both options support correct execution, implicit watermarks also ensure streams are fed in total order to $o_j$. Hence, $o_j$ can seamlessly support timestamp-order-sensitive analysis.
Explicit watermarks are only safe for timestamp-order-insensitive analysis or require $o_j$ to sort its input tuples before processing them~\cite{gulisano2020role,akidauwatermarks}.

\subsection{The ScaleGate object}
\label{ssc:scalegate}

\draft{\emph{ScaleGate}~\cite{scalejoin} (SG) is a shared data object, that in this work is extended and used to support \xxx{}'s algorithmic implementation.}
It allows several \textit{sources} to concurrently and efficiently merge timestamp-sorted streams into a timestamp-sorted stream of ready tuples (cf.~Definition \autoref{def:ready}).
Also, it allows several \textit{readers} to consume all the ready tuples of the latter stream. Its lock-free, linearizable algorithmic implementation supports efficient deterministic processing~\cite{slackscalegate,walulya2018viper}.
\draft{A fixed set of sources/readers can interact with an SG object using the API methods:}
\begin{itemize}[leftmargin=*]
\item \texttt{addTuple(tuple,i)}: which merges a \texttt{tuple} from the $i$-th source in the timestamp-sorted stream of ready tuples.
\item \texttt{getNextReadyTuple(i)}: which provides to the $i$-th reader the next earliest ready tuple that has not been yet consumed by it. Note that each tuple, once ready, will be returned to all readers invoking the method.
\end{itemize}
To ease notation, we refer to methods \texttt{addTuple} and \texttt{getNextReadyTuple} as \texttt{add} and \texttt{get} in the remainder.

\subsection{Elasticity}
\label{ssc:elasticity}
The computational cost of a stream processing application varies over time~\cite{gulisano2012streamcloud}.
Hence, an execution in which the parallelism degree of operators is fixed can lead to imbalances in the work of such operators.
When the overall work of an operator is unbalanced but can still be carried out by its instances, a \textit{load balancing} reconfiguration is needed to change the work distribution (i.e., the mapping of each key and the instance responsible for it).
If new instances are to be \textit{provisioned}, the new work distribution re-assigns some keys to the newly allocated instances.
Since over-provisioned systems can lead to high latency~\cite{walulya2018viper} and unnecessary costs~\cite{gulisano2012streamcloud}, existing instances should also be \textit{decommissioned} when fewer instances suffice for the overall workload.
We use the term \textit{reconfiguration} to refer to any of these actions
(note provisioning and decommissioning imply load balancing).\looseness=-1

\separate
\section{Problem Definition and Approach}\label{sec:smps}

This work focuses on intra-process parallel and elastic execution of stateful stream processing analysis for time-based sliding windows (or simply windows), thus assuming $\WA<\WS$.
For any given stateful operator $O$, we do not make any assumption on the frequency, periodicity, or timings with which tuples are fed to it. As such, we cannot infer the event time of the $\ell+1^{th}$ input tuple fed to an instance of $O$ based on that of the $\ell^{th}$ tuple.
We assume the information needed by an instance $o_j$ to update its watermark \watermarkofinmath{o_j} is carried by tuples' metadata (cf.~\autoref{sec:dsbasics}), be it through $t.\tau$ and implicit watermarks, or by forwarding explicit watermarks (as special tuples or additional metadata of regular tuples).
We assume ingresses/operator instances are continuously delivering tuples/watermarks and have completed their bootstrap phase (i.e., all have started delivering tuples/watermarks).
We assume the threads in charge of the physical execution of $O$ share memory (physical/logical) and can read/write shared objects stored in it.

\draft{Within this setup, we first show that SN parallelism might require data duplication to parallelize the execution of operators whose semantics are more general than those of \aggop{}/\joinop{} (e.g., those of our running example from~\autoref{sec:introduction}).}
We then prove that for such generalized semantics, implementations that rely on SN parallelism (with an arbitrary degree of data duplication) can be transformed to semantically equivalent implementations that rely on shared memory and do not incur the overheads of data duplication nor those of state transfer during elastic reconfigurations.
Finally, we also provide an extensive discussion and evaluation (with several state-of-the-art baselines) based on a fully implemented prototype.
For ease of presentation, we center most of our discussions around a single stateful operator.
Nonetheless, 
\xxx{} can support the execution of multiple stateful operators within one query,  
as we discuss in~\autoref{sec:stretch} and~\autoref{sec:implementation}.

\draft{When it comes to elasticity, note that \xxx{} does not aim at embedding a specific policy about when/how to balance load or provision/decommission instances (e.g., based on energy~\cite{de2017proactive} or CPU consumption~\cite{gulisano2012streamcloud}), but rather defines a generic API for external modules. We show in \autoref{sec:eval} the use of \xxx{} in conjunction with two such modules.}

\separate
\section{A Generalized Stateful Operator}
\label{sec:generalizedmodel}

As introduced in~\autoref{sec:introduction}, data duplication is a potential drawback of SN parallelism.
We prove herein that there exist operators whose semantics are richer than those of \aggop{} and \joinop{} (cf.~\autoref{sec:prel}), and which might need data duplication for some of their tuples.
\draft{To frame the type of stateful operator modeled by \xxx{}, we then introduce a unified and \textit{generalized operator} \generalopplusinmath{}, later used when arguing about the semantic equivalence between SN and VSN setups.}
With \generalopplusinmath{} each input tuple $t$ can be shared with an arbitrary number of instances, thus accounting for any data duplication level.

\subsection{SN parallelism and data duplication}

Before proving the data duplication need, we introduce the following lemma and definitions.

\begin{lemma}
\label{lem:consecutive}
Let $\mathbb{W}$ be the set of non-expired window instances held by $o_j$ for key $k$.
The next tuple with key $k$ fed to $o_j$ can fall in a window instance $w$ such that $w \in \mathbb{W}$.
\end{lemma}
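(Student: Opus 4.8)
The plan is to reduce the statement to a single chain of inequalities relating $\WS$, the timestamp of the next key-$k$ tuple, and $o_j$'s watermark, and then to exhibit a concrete admissible continuation of the input that realizes it. Write $t$ for the next tuple with key $k$ fed to $o_j$, and let $\omega'\ge\omega$ be the time at which $o_j$ processes it. I would first record two facts about watermarks. $(i)$ $\watermarkof{o_j}$ is non-decreasing in wall-clock time: this follows directly from the update rules of \autoref{ssc:correctness} (it is always the minimum of the latest per-stream watermarks, each of which only grows, and in the implicit case the readiness threshold of \autoref{def:ready}, a minimum of maxima of observed timestamps, is itself non-decreasing). $(ii)$ By \autoref{def:watermark}, any tuple $o_j$ processes from $\omega'$ on has timestamp $\ge\watermarkof{o_j}$ evaluated at $\omega'$; combined with $(i)$, it is therefore enough to pick an event time $\tau^{*}$ with $\tau^{*}\ge\watermarkof{o_j}$ at the current instant, and then to keep the watermark from exceeding $\tau^{*}$ until $t$ is delivered. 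So the whole task boils down to: choose $w\in\mathbb{W}$ and $\tau^{*}$ with $w.l\le\tau^{*}<w.l+\WS$ and $\tau^{*}\ge\watermarkof{o_j}$, then argue an upstream peer may deliver $t$ with $t.\tau=\tau^{*}$ as the next key-$k$ tuple without the watermark overtaking $\tau^{*}$ beforehand.

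Next I would produce such a pair $(w,\tau^{*})$. Since $o_j$ is about to be fed a key-$k$ tuple, it is responsible for $k$ and, by \autoref{def:correctness}, maintains (at least conceptually) every non-expired window instance for $k$; in the non-degenerate case relevant to the lemma $\mathbb{W}$ thus contains the single held set if $\psipar=\text{single}$, or one of the non-expired sets if $\psipar=\text{multi}$. Take any $w\in\mathbb{W}$. Being non-expired, its right boundary does not fall before the watermark, i.e.\ $w.l+\WS>\watermarkof{o_j}$ under the expiry convention of \autoref{ssc:correctness} (a window is shifted/discarded only once $w.l+\WS\le\watermarkof{o_j}$). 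Hence the half-open interval $[\max(w.l,\watermarkof{o_j}),\,w.l+\WS)$ is non-empty, and since event time is discrete it contains an admissible value $\tau^{*}$; by construction $w.l\le\tau^{*}<w.l+\WS$, so any key-$k$ tuple with timestamp $\tau^{*}$ falls in $w$. Read in the other direction, the same chain ($t$ falls in $w\Rightarrow w.l+\WS>t.\tau\ge\watermarkof{o_j}\Rightarrow w$ non-expired) also shows that \emph{every} window a future key-$k$ tuple can fall into is non-expired, a by-product worth stating.

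The remaining step, and the one I expect to be the only delicate point, is checking that $(w,\tau^{*})$ is consistent with the watermark-propagation machinery, i.e.\ that some legal execution has $t$ with $t.\tau=\tau^{*}$ as the next key-$k$ tuple $o_j$ consumes while $\watermarkof{o_j}\le\tau^{*}$ still holds. For explicit watermarks this is immediate, since \autoref{sec:smps} imposes no constraint on the timing of tuples/watermarks: the upstream peer simply emits $t$ before advancing its watermark past $\tau^{*}$. For implicit watermarks one must respect \autoref{def:ready}, so I would schedule the (again unconstrained) sorted input streams so that $\tau^{*}$ is the smallest ready timestamp exactly when $t$ is consumed, which is possible precisely because $\tau^{*}\ge\watermarkof{o_j}$. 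Two secondary subtleties to handle with care are: the $\le$-versus-$<$ boundary in the expiry rule, so that the particular $w$ chosen is still held (not shifted/discarded) at $\omega'$; and, for $\psipar=\text{single}$, the fact that the single held set may need to be shifted forward to reach $\tau^{*}$, in which case one argues that the shifted position still satisfies $w.l+\WS>\watermarkof{o_j}$ and hence still belongs to $\mathbb{W}$. Everything else is a direct unfolding of the definitions of ``falls in'', ``(non-)expired'', and ``watermark''.
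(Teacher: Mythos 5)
Your proof is correct and rests on the same key fact as the paper's own argument: a non-expired window satisfies $w.l+\WS>\watermarkof{o_j}$, while the next key-$k$ tuple may carry any timestamp $\geq\watermarkof{o_j}$ (Definition~\ref{def:watermark} plus the assumption of \autoref{sec:smps} that future timestamps cannot be inferred), so the interval $[\max(w.l,\watermarkof{o_j}),\,w.l+\WS)$ is non-empty and contains an admissible event time. The paper packages this as a two-case contradiction on the latest window $w^*$, whereas you construct the witness timestamp directly and additionally check its realizability under both watermark schemes — a more explicit rendering of the same argument rather than a genuinely different one.
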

\begin{proof}
Being $w^*$ one of the latest non-expired window instance in $\mathbb{W}$ (i.e., one with the highest left boundary $l$) and $t^*$ the next tuple fed to $o_j$, 
if the statement does not hold, then either (i) $\watermarkof{o_j} \geq w^*.l+\WS$, which contradicts that $\mathbb{W}$ contains non-expired window instances, or (ii) $t^*.\tau \geq w^*.l+\WS$, which contradicts our assumption from~\autoref{sec:smps} about future tuples' timestamps being not known.
\end{proof}

\begin{definition}
\label{def:fpluskb}
\draft{$\keybymulti(t)$ is a Multi Key-by function that, differently from $\keybysingle(t)$, does not result in one key but rather in a set (possibly empty) of keys.}
\end{definition}

\begin{definition}
\label{def:aplusjplus}
$\aggplusop$ and $\joinplusop$ are generalizations of \aggop{} and \joinop{}, respectively, for which \keybymultiinmath{} replaces \keybysingleinmath{}.
\end{definition}

Given Lemma~\ref{lem:consecutive} and Definitions~\ref{def:fpluskb} and \ref{def:aplusjplus}, we can introduce our theorem about data duplication.

\begin{theorem}
\label{thm:notgeneral}
If any arbitrary pair of tuples $t^\ell,t^m$ fed to \aggplusopinmath{}/\joinplusopinmath{} can share a key, then \aggplusopinmath{}/\joinplusopinmath{} might need data duplication for SN parallelism to correctly enforce \aggplusopinmath{}/\joinplusopinmath{}'s semantics.
\end{theorem}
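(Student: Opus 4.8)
The plan is to exhibit a concrete scenario in which a single input tuple $t$ legitimately carries two distinct keys $k_1 \neq k_2$ that, under the SN parallelism model of \autoref{ssc:snparallelism}, are assigned to two \emph{different} instances $o_{j_1} \neq o_{j_2}$, and then to argue that correctness (Definition~\ref{def:correctness}) forces $t$ to be delivered to both — i.e., duplicated. First I would invoke the hypothesis of the theorem: some pair $t^\ell, t^m$ shares a key. I would actually want something slightly stronger that the hypothesis gives me — a single tuple with two keys that land on different instances — so I would set up a minimal configuration: take $\Pi(O^+) \geq 2$, let $\opmapping$ be any mapping function used by the upstream's \texttt{forward} method, and note that since keys are partitioned across instances there must exist keys $k_1, k_2$ with $\opmapping(k_1) = j_1 \neq j_2 = \opmapping(k_2)$ (otherwise the parallelism is degenerate). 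By Definition~\ref{def:fpluskb}, $\keybymulti$ may return $\{k_1, k_2\}$ on some input tuple $t$; feed such a $t$ to $O^+$.

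Next I would spell out why both instances must receive $t$. For \aggplusopinmath{}: by the semantics inherited from \aggop{} (cf.~\autoref{sec:dsbasics}) and the generalization in Definition~\ref{def:aplusjplus}, the tuple $t$ with $\keybymulti(t) = \{k_1,k_2\}$ contributes to window instances for \emph{both} keys — one set of windows maintained in $\opstate_{j_1}$ under key $k_1$, another in $\opstate_{j_2}$ under key $k_2$. By Definition~\ref{def:correctness}, every input tuple that can jointly contribute to an output tuple must be processed together with the other such tuples; hence $t$ must be processed by $f_R$ (or $f_A$) at $o_{j_1}$ \emph{and} at $o_{j_2}$. Since in SN parallelism each instance owns a dedicated input queue and a private state $\opstate_j$ never accessed by another instance, the only way $t$ reaches both is for the upstream \texttt{forward} to enqueue a copy of $t$ on $o_{j_1}$'s queue and another on $o_{j_2}$'s queue — that is data duplication. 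The \joinplusopinmath{} case is analogous: a left-stream tuple $t_L$ with $\keybymulti(t_L) \ni k_1, k_2$ may need to be matched by $f_J$ against right-stream tuples with key $k_1$ at $o_{j_1}$ and against right-stream tuples with key $k_2$ at $o_{j_2}$, again forcing duplication.

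Finally I would close the loop with the "might need" qualifier: I should argue that duplication is genuinely \emph{unavoidable} here, not merely one implementation choice, under the SN model's constraint that state is not shared. The argument is that the contributions of $t$ to the $k_1$-windows and to the $k_2$-windows are causally independent events that happen in two disjoint pieces of state on two instances; no amount of rerouting of \emph{other} tuples can substitute for $t$ itself being present in both. Lemma~\ref{lem:consecutive} is what lets me assert that the relevant window instances are actually "live" when $t$ arrives — since we cannot predict future timestamps, a freshly arriving $t$ with key $k_i$ can fall into a non-expired window currently held by $o_{j_i}$, so $o_{j_i}$ truly needs $t$ and cannot have legitimately discarded the target window in advance.

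I expect the main obstacle to be precision rather than depth: the theorem is almost definitional once the setup is right, so the delicate point is stating the minimal assumptions on $\opmapping$ and on $\Pi(O^+)$ (and on the non-degeneracy of the key space) cleanly, and making the "forced to duplicate" step airtight by explicitly appealing to the SN model's no-sharing property of queues and state — as opposed to hand-waving that "two instances need it." A secondary subtlety is handling the "any arbitrary pair of tuples can share a key" phrasing: I would reduce it to the single-tuple-two-keys case by noting that if arbitrary pairs can share keys then in particular a single tuple's $\keybymulti$ output can contain two keys mapped to different instances, which is the only ingredient the proof consumes.
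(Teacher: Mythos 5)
Your proposal is correct and rests on the same ingredients as the paper's proof --- Definition~\ref{def:correctness} (correctness), Lemma~\ref{lem:consecutive} (the target windows are still live because future timestamps cannot be predicted), and the SN model's private queues and states --- but it packages them differently. You argue by direct construction: exhibit one tuple whose $\keybymulti$ output contains two keys mapped by $\opmapping$ to distinct instances, and observe that correctness forces that tuple into both instances' disjoint states and hence into both dedicated queues. The paper instead argues by a reductio on the routing policy: assuming no duplication, any tuple that can share a key with an already-routed tuple must follow it to the same instance $a_+$; since by hypothesis every future tuple could share a key with a previous one, the only duplication-free policy sends all data to $a_+$, destroying parallelism --- hence parallel execution might require duplication. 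Your version is more concrete and makes the ``present at two instances'' step explicit; the paper's version additionally delivers the sharper dichotomy that the alternative to duplication is full serialization onto a single instance. The one soft spot, which you flag yourself, is the reduction from the theorem's pairwise hypothesis (``any pair of tuples can share a key'') to your single-tuple-two-keys configuration: that is not a logical consequence of the hypothesis alone but needs Definition~\ref{def:fpluskb}'s multi-key $\keybymulti$ (a tuple sharing key $k_1$ with a tuple held at $o_{j_1}$ and key $k_2$ with a tuple held at $o_{j_2}$). The paper reaches the same forcing scenario through the pairwise route, so the gap is presentational rather than substantive, but in a final write-up you should instantiate that configuration explicitly rather than assert it as a reduction.
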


\begin{proof}
Let us begin with \aggplusopinmath{} and let us assume $t^\ell$ is the last tuple forwarded to $a_+$, one of \aggplusopinmath{}'s parallel instances, and that $t^{\ell+1}$ is the next tuple that has to be forwarded to one of \aggplusopinmath{}'s instances.
If $t^\ell$ and $t^{\ell+1}$ indeed share one key,
and given Definition~\ref{def:correctness} and Lemma~\ref{lem:consecutive}, then $t^{\ell+1}$ must be sent to $a_+$ to correctly enforce \aggplusopinmath{}'s semantics.
The same holds for $t^{\ell+2},t^{\ell+3},\text{etc.}$
To avoid any data duplication, all data should be sent to $a_+$ to account for any future tuple that could share a key with a previous tuple. In this case, though, \aggplusopinmath{} would not run in parallel.
The same reasoning holds for \joinplusopinmath{} and for two tuples $t_{R}, t_{L}$ (one from stream $R$, one from $L$) that should be compared when sharing a common key.
\end{proof}

To support the following discussion about our generalized stateful operator \generalopplusinmath{}, we give the following corollary. 
\begin{corollary}
\label{cor:duplicationviamap}
\aggplusopinmath{} and \joinplusopinmath{} semantics can be enforced combining \mapop{}, \aggop{} and \joinop{} operators with SN parallelism.
\end{corollary}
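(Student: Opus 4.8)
The plan is to show constructively that any $\aggplusop$ (resp.\ $\joinplusop$) can be simulated by a small DAG of the ordinary operators $\mapop$, $\aggop$ and $\joinop$, each running with standard SN (single key-by) parallelism, so that the duplication forced by Theorem~\ref{thm:notgeneral} is made explicit as a $\mapop$ stage rather than hidden in a custom routing function. The key observation is that the only difference between $\aggplusop$ and $\aggop$ (and between $\joinplusop$ and $\joinop$) is that $\keybymulti$ emits a \emph{set} of keys instead of the single key returned by $\keybysingle$; everything downstream of key extraction --- the window bookkeeping governed by $(\WA,\WS,\psipar)$, the update function $f_R$, and the aggregation/join functions $f_A$/$f_J$ --- is identical. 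So it suffices to "flatten" the multi-key assignment into a stateless pre-step.

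For the $\aggplusop$ case I would do the following. First, place a $\mapop$ operator $M$ in front whose user-defined function takes each input tuple $t$, computes $\keybymulti(t)$, and emits one output tuple $t_k$ per key $k\in\keybymulti(t)$, copying $t.\varphi$ into $t_k.\varphi$, stamping the chosen key $k$ into a dedicated sub-attribute of $t_k.\varphi$, and setting $t_k.\tau\xleftarrow{}t.\tau$ (this is exactly the Map/Flatmap semantics from \autoref{sec:dsbasics}, and it preserves timestamps, so watermark propagation is unaffected). Second, feed the output of $M$ into an ordinary $\aggop(\WA,\WS,1,\keybysingle,\psipar,S,f_A,f_R)$ whose $\keybysingle$ simply reads back that stamped key sub-attribute. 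Then a tuple $t$ contributes to the window instance for key $k$ in the $\aggplusop$ iff its image $t_k$ contributes to the window instance for $k$ in $\aggop$, because the left/right boundaries are computed identically from $\tau$ and $(\WA,\WS)$, and $f_R$/$f_A$ act identically; by Definition~\ref{def:correctness} the two executions yield the same output tuples. The $\joinplusop$ case is analogous but needs one $\mapop$ on each of the two input streams $L$ and $R$, each flattening its own stream via $\keybymulti$ and stamping the key; the downstream $\joinop(\WA,\WS,2,\keybysingle,\psipar,S,f_J)$ then pairs $t_L$-image with $t_R$-image exactly when the original pair $\langle t_L,t_R\rangle$ satisfies $t_L\in w_L$, $t_R\in w_R$, $w_L.l=w_R.l$ and shares a common key in $\keybymulti(t_L)\cap\keybymulti(t_R)$, which is precisely the $\joinplusop$ matching condition lifted from the single-key condition in \autoref{ssc:correctness}.

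I would then note that each operator in this construction is a plain $\mapop$/$\aggop$/$\joinop$ and hence runs under standard SN parallelism: the $\aggop$/$\joinop$ instances each own a disjoint key range and dedicated queues, and the $\mapop$ --- being stateless --- parallelizes trivially while its \texttt{forward} method uses the ordinary single-key mapping function $\opmapping$ of the downstream stateful operator. The duplication that Theorem~\ref{thm:notgeneral} says is unavoidable now appears concretely: a tuple with $|\keybymulti(t)|>1$ becomes several tuples leaving $M$, possibly routed to several downstream instances. The main obstacle I anticipate is not the correctness of the simulation itself but being careful about the case $\keybymulti(t)=\emptyset$ (the tuple produces no images and correctly contributes to nothing) and about watermark/ordering semantics: I must argue that the $\mapop$ stage does not disturb the watermark $\watermarkof{\cdot}$ of the downstream instances --- which follows because $\mapop$ preserves $\tau$ and can forward explicit watermarks unchanged, or, under implicit watermarks, emits images in the same timestamp order as its inputs --- so that window expiration (and therefore the set of produced outputs, via Observation~\ref{obs:outtuplestimestamp} and the expiry discussion in \autoref{ssc:correctness}) is identical on both sides.
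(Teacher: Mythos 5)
Your construction is exactly the one the paper uses to justify Corollary~\ref{cor:duplicationviamap}: a preceding \mapop{} that emits one copy of each tuple per key in $\keybymulti(t)$, stamps that key as an extra sub-attribute of $\varphi$, and lets the downstream \aggop{}/\joinop{}'s $\keybysingle$ read it back. Your additional care about the empty-key case and watermark preservation goes slightly beyond the paper's brief justification but is consistent with it, so the proposal is correct and essentially identical in approach.
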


This is so since by preceding \aggop{} and \joinop{} with \mapop{} operators, \mapop{} can be used to process each incoming tuple $t^\ell$ so that (1) as many copies as keys in $t^\ell$ are made of $t^\ell$, (2) each such copy $t'^\ell$ carries one of $t^\ell$'s keys in $t'^\ell.\varphi$ as an extra attribute, and (3) such attribute is then returned by the \keybysingleinmath{} of \aggop{} or \joinop{}.

To further build on our running example (cf. \autoref{sec:introduction}), note that the operator computing the longest tweet per-hour and per-hashtag is an \aggplusopinmath{} operator.
In this case, \keybymultiinmath{} would extract each tweet's hashtag as a key, and $f_{A}$ would find the longest tweet of each such key.
\draft{Corollary~\ref{cor:duplicationviamap} shows how the programmer intervention is required to enforce \aggplusopinmath{}'s semantics, in this case expressing \aggplusopinmath{} as an \mapop{}/\aggop{} combination (we refer the reader to Appendix~\ref{apx:appendixcorollary} for an extended example).}

\subsection{\xxx{}'s generalized model for stateful analysis}
\label{ssc:stretchgeneralizedmodel}

After discussing \aggplusopinmath{}/\joinplusopinmath{} generalizations in Theorem~\ref{thm:notgeneral}, we now introduce a general operator \generalopplusinmath{}, and a model for SN parallelism that (1) 
encapsulates the semantics of \aggplusopinmath{}/\joinplusopinmath{} (and thus that of \aggop{} and \joinop{}, cf. Corollary~\ref{cor:duplicationviamap}), and 
(2) allows for an arbitrary degree of data duplication.

The logical and physical representations of \generalopplusinmath{} resemble those shown in~\autoref{fig:sn}.
\generalopplusinmath{} has $I$ upstream peers $U_1,\ldots,U_I$ (which could be an arbitrary mixture of operators and ingresses) and one downstream peer $D$ (an operator or an egress).
We refer to the $j$-th instance of $U_i$, \generalopplusinmath{}, and $D$ as
$u_{i,j}$,  $\generalopplusinst_j$, and $d_j$, respectively; $q_{a,b}$ is the queue between instances $a$ and $b$. 
\generalopplusinmath{} is defined as follows:
$$
\generalopplus(\WA,\WS,I,\keybymulti,\psipar,S,\opmapping,f_U,f_O,f_S)
$$

\draft{
Besides the aforementioned parameters $\WA$, $\WS$, $I$, $\keybymulti$, $\psipar$, $S$, and $\opmapping$, \generalopplusinmath{} defines an \textit{update} ($f_U$), an \textit{output} ($f_O$), and a \textit{slide} function ($f_S$) to maintain its window instances:
$f_U$ is invoked, upon the reception of $t$, to update the state of the instances of keys associated with $t$ and (optionally) to return a set of tuple payloads to be forwarded to $D$; $f_O$ is invoked when a set of instances expires, to return a set of tuple payloads
to be forwarded to $D$; $f_S$ is invoked when instances slide, to return the updated states for a set of $I$ window instances that have just advanced by $\WA$ time units.
}

\begin{table}[t]
\caption{Functions \generalopplusinmath{} uses to maintain window instances.}
\label{tab:opo_functions}
\begin{tabular}{@{}llp{1.7cm}p{3.2cm}@{}}
\toprule
\textbf{} & \textbf{Input}         & \textbf{Output}    &   \textbf{Default behavior}  \\
\midrule
$f_U$             & $\{w_1,\ldots,w_I\},t$ & $\{\zeta_1,\ldots,\zeta_I,$ $\varphi^1,\ldots,\varphi^\ell\}$ & Store $t$ in $w.\zeta$ of $t$'s sender. Returns no $\varphi$.\\
$f_O$             & $\{w_1,\ldots,w_I\}$   & $\{\varphi^1,\ldots,\varphi^\ell\}$ & Returns no $\varphi$.\\
$f_S$             & $\{w_1,\ldots,w_I\}$ & $\{\zeta_1,\ldots,\zeta_I\}$ & Purge stale tuples. \\
\bottomrule
\end{tabular}
\end{table}

\begin{algorithm}[b]
\footnotesize
\SetAlgoLined
\DontPrintSemicolon
\SetKwInput{KwParameters}{Instance-local variables}
\SetKwInput{KwAuxiliaryFunctions}{Auxiliary methods}
\SetKwProg{Proc}{Method}{}{}

\KwParameters{}
$\keybymulti,\opmapping$ \tcp{From \generalopplusinmath{}'s  definition}

\BlankLine
\Proc{\FuncSty{forwardSN}($\{t^1,\ldots,t^\ell\}$)}
{
    \For{$t \in \{t^1,\ldots,t^\ell\}$}{
        $\mathbb{P} \xleftarrow{} \{\}$ \tcp{empty set of downstream peers}
        $\mathbb{K} \xleftarrow{} \keybymulti (t)$ \tcp{get keys of $t$}\label{alg:u:getkeys}
        \lFor{$k \in \mathbb{K}$}{
            $\mathbb{P} \xleftarrow{} \mathbb{P} \cup \opmapping(k)$\label{alg:u:findinst}
        }
        \lFor{$p \in \mathbb{P}$}{
            $q_{u_{i,j},\generalopplusinst_p}$.\FuncSty{add($t$)}\label{alg:u:sendt}
        }
    }
}
\caption{Method \texttt{forwardSN} (\texttt{forward} for SN setups) for a $u_{i,j}$ instance. Invoked when a non-empty set of tuples is to be forwarded to \generalopplusinmath{}.}
\label{alg:u}
\end{algorithm}

\begin{algorithm*}[h]
\footnotesize
\SetAlgoLined
\DontPrintSemicolon

\SetKwInput{KwParameters}{Instance-local variables (besides $\WA,\WS,I,\keybymulti,\psipar,S,\opmapping,f_U,f_O,f_S$)}
\SetKwInput{KwAuxiliaryFunctions}{Auxiliary methods}
\SetKwProg{Proc}{Method}{}{}

\KwParameters{}
$\sigma_j$ \tcp{$\generalopplusinst_j$'s state, initially empty}
$W$ \tcp{$\generalopplusinst_j$'s watermark, initially 0}
$\rho$ \tcp{earliest $w.l$ of any $w \in \sigma_j$, initially 0}

\BlankLine
\KwAuxiliaryFunctions{}
\FuncSty{update$W$($t$)} \tcp{update $W$ based on $t$'s metadata}
\FuncSty{$\sigma_j$.remove($k,\ell$)} \tcp{remove the $\ell$-th set of $I$ window instances for key $k$}
\FuncSty{$\sigma_j$.set($k,\ell,\{\zeta_1,\ldots,\zeta_I\}$)} \tcp{set states of the $\ell$-th set of $I$ window instances for key $k$}
\FuncSty{$\sigma_j$.shift($k,\ell,\{\zeta_1,\ldots,\zeta_I\}$)} \tcp{for the $\ell$-th set of $I$ $w$ inst. for key $k$, increase $w.l$ by $\WA$ and set states}
\FuncSty{$\sigma_j$.check\&Create($k,l$)} \tcp{add set of $I$ $w$ instances for key $k$ and $w.l=l$ if they are not already in $\sigma_j$}
\FuncSty{earliestWinL($t$)} \tcp{get earliest $w.l$ for any $w$ in which $t$ falls}
\FuncSty{latestWinL($t$)} \tcp{get latest $w.l$ for any $w$ in which $t$ falls}
\FuncSty{prepareOutTuples($\{\varphi^1,\ldots,\varphi^\ell\}$)} \tcp{create an output tuple and set its metadata for each of the $\ell$ payloads}

\BlankLine
\Proc(\tcp*[h]{forward results (if any) and shift/remove the $w$ instances}){\FuncSty{forwardAndShift}($k$)} {
    \FuncSty{forward}(\FuncSty{prepareOutTuples}($f_O$($\sigma_j[k][1]$)))\; \label{alg:o:output}
    \If{$\psipar = \text{single}$}{\label{alg:o:shiftstart}
        $\{\zeta_1,\ldots,\zeta_I\} \xleftarrow{} f_S(\sigma_j[k][1])$\;
        \lIf{$\exists i \in \{1,\ldots,I\}|\zeta_i\neq \emptyset$}{
            \FuncSty{$\sigma_j$.shift($k,1,\{\zeta_1,\ldots,\zeta_I\}$)}\label{alg:o:shiftend}
        }
        \lElse{
            \FuncSty{$\sigma_j$.remove($k,1$)}\label{alg:o:remove1}
        }
    }
    \lElse{
        \FuncSty{$\sigma_j$.remove($k,1$)}\label{alg:o:remove2}
    }
}

\BlankLine
\Proc{\FuncSty{handleInputTuple}($t$)} {
    \If{$\exists k | k \in \keybymulti(t) \wedge \opmappingfun(k)=j$}{
        $\tau_1 \xleftarrow{} $\FuncSty{earliestWinL($t$)} \tcp{Find $w.l$ of window instance to update} \label{alg:o:findwinstart}
        \lIf{$\psipar = \text{single}$}{
            $\tau_2 \xleftarrow{} $\FuncSty{earliestWinL($t$)}
        }
        \lElse{
            $\tau_2 \xleftarrow{} $\FuncSty{latestWinL($t$)}
        }
        \lIf{$\tau_1<\rho$}{
            $\rho \xleftarrow{} \tau_1$\label{alg:o:checkrho}
        }
        \For(\tcp*[h]{Create/update window instance}){$\ell \in \{0,\ldots,(\tau_2-\tau_1)/\WA\}$}{\label{alg:o:forwin}
            \For{$k \in \keybymulti(t) | \opmappingfun(k)=j$}{\label{alg:o:forkey}
                \FuncSty{$\sigma_j$.check\&Create($k,\tau_1+\ell*\WA$)}\;\label{alg:o:updatestart}
                $\{\zeta_1,\ldots,\zeta_I,\varphi^1,\ldots,\varphi^\ell\}\xleftarrow{}f_U(\sigma_j[k][\ell])$\;\label{alg:o:f_u}
                \FuncSty{forward}(\FuncSty{prepareOutTuples}($\varphi^1,\ldots,\varphi^\ell$)) \tcp{forward results (if any)} 
                \FuncSty{$\sigma_j$.set($k,\ell,\{\zeta_1,\ldots,\zeta_I\}$)}\;\label{alg:o:updateend}
            }
        }
    }
}

\BlankLine
\Proc{\FuncSty{processSN}($t$)}
{\label{alg:o:process}

    \FuncSty{update$W$($t$)} \tcp{update $\generalopplusinst_j$ watermark}\label{alg:o:water}
    
    \While(\tcp*[h]{while $\generalopplusinst_j$ has expired window instances}){$\rho+\WS<W$}{\label{alg:o:outstart}
        \lWhile(\tcp*[h]{and an expired window inst. $w$ starts at $\rho$, handle $w$}){$\exists k | \sigma_j[k][1].l = \rho$}{\label{alg:o:checkearlieast}
            \FuncSty{forwardAndShift}($k$)
        }
        $\rho\xleftarrow{}\rho+\WA$ \tcp{update $\rho$ value}\label{alg:o:outend}
    }
    \FuncSty{handleInputTuple}($t$)\;
}
\caption{Method \texttt{processSN} (\texttt{process} for SN setups) for an $\generalopplusinst_j$ instance. Runs when $t$ is returned by $q_{u_{i,j},\generalopplusinst_j}$.}
\label{alg:o}
\end{algorithm*}

As shown in \autoref{tab:opo_functions}, a default behavior is associated with each function. To draw a parallel with Object-Oriented Programming, \generalopplusinmath{} acts as a \textit{superclass} of operators like \aggplusopinmath{} and \joinplusopinmath{} (we formally prove this later in the section). The latter can be instantiated by specializing some of its functions $f_U$, $f_O$, and $f_S$.
The user is expected to specialize at least one function, since the default implementations result in \generalopplusinmath{} maintaining all tuples that fall in each window instance (via $f_U$ and $f_S$) but without producing any output tuple.

\subsubsection*{Implementation of \texttt{forward} and \texttt{process} methods}

We start with the method \texttt{forwardSN} (\texttt{forward} for SN setups) run by $u_{i,j}$ instances (\autoref{alg:u}).
As soon as a tuple $t$ is ready to be forwarded to \generalopplusinmath{}, \texttt{forwardSN} retrieves the set of keys associated with $t$ (L\ref{alg:u:getkeys}).
For each key $k$, it then adds $t$ to the queue of each downstream peer responsible for at least one of $t$'s keys (L\ref{alg:u:findinst}-\ref{alg:u:sendt}).

We move now to the description of method \texttt{processSN} (\texttt{process} for SN setups, \autoref{alg:o}). 
Upon reception of $t$ (L\ref{alg:o:process}), the method updates $\generalopplusinst_j$'s watermark (L\ref{alg:o:water}) with the information contained in  $t$'s metadata (if any, cf.~\autoref{sec:prel}). It then proceeds by outputting the result of all expired window instances and shifting/removing them (L\ref{alg:o:outstart}-\ref{alg:o:outend}).
We use the notation $\sigma_j[k][\ell]$ to refer to the $\ell$-th set of $I$ window instances (one for each of the $I$ upstream operators) maintained in $\sigma_j$ for key $k$.
It starts with the earliest ones, having $\rho$ as left boundary (L\ref{alg:o:checkearlieast}), outputs their result invoking $f_O$ (L\ref{alg:o:output}) and shifts them (when $\psipar = \text{single}$) invoking $f_S$ (L\ref{alg:o:shiftstart}-\ref{alg:o:shiftend}) and/or removes them (L\ref{alg:o:remove1}-\ref{alg:o:remove2}), if $\psipar = \text{multi}$ or if all window instances have an empty state. When no window instances with left boundary $\rho$ remain, the method continues checking the ones starting at $\rho+\WA$ (L\ref{alg:o:outend}).
After taking care of expired window instances, the method proceeds identifying the set of window instances to which $t$ falls in (depending on $\psipar$) and adjusting $\rho$ if needed (L\ref{alg:o:findwinstart}-\ref{alg:o:checkrho}). For each such window instance (L\ref{alg:o:forwin}), and for each of $t$'s keys responsibility of \generalopplusinstinmath{} (L\ref{alg:o:forkey}) the method creates the corresponding window instances, if not already defined, and updates their state invoking $f_U$ (L\ref{alg:o:updatestart}-\ref{alg:o:updateend}).

Upon invoking $f_U$/$f_O$, \generalopplusinmath{} sets the event time of any resulting output tuple to the right boundary of the window instances on which $f_U$/$f_O$ have been invoked, as done by \aggop{} and \joinop{} (see~\autoref{sec:prel}), using method \texttt{prepareOutTuples}.

\subsubsection*{Formal Guarantees}

After covering the implementation of \generalopplusinmath{}, we now prove \generalopplusinmath{} encapsulates the semantics of \aggplusopinmath{}/\joinplusopinmath{} (and thus \aggop{}/\joinop{}).

\begin{table*}[ht!]
\centering
\caption{API of the $\Xplus$ shared data structure. Highlighted methods are only relevant for elastic setups.}
\label{tab:xapi}
\begin{tabular}{@{}lp{15.0cm}@{}}
\toprule
API Method & Description \\
\midrule
\textbf{\texttt{add($\bm{t,j}$)}} & invoked by source $j$ to add tuple $t$. \\
\textbf{\texttt{get($\bm{j}$)}} & invoked by reader $j$ to retrieve the next tuple conforming to the watermarks that are also delivered by the \texttt{get} method. \\
\rowcolor{Gray}
\textbf{\texttt{addReaders($\bm{\mathbb{R},j}$)}} & invoked by reader $j$ to add readers in $\mathbb{R}$ that are not already readers of $\Xplus$. Once invoked, $\Xplus$ delivers to $\mathbb{R}$ tuples and watermarks starting from the ones that will be also returned to $j$ once $j$ invokes \texttt{get}. If more readers invoke \texttt{addReaders} concurrently, only one succeeds. Returns \texttt{true} only if it adds all new readers in $\mathbb{R}$. \\
\rowcolor{Gray}
 \textbf{\texttt{removeReaders($\bm{\mathbb{R}}$)}} & removes each of the readers in $\mathbb{R}$ (that are readers of $\Xplus$) when invoked by a reader of $\Xplus$. If more readers invoke \texttt{removeReaders} concurrently, only one succeeds. Returns \texttt{true} only if it removes all existing readers in $\mathbb{R}$. \\
\rowcolor{Gray}
 \textbf{\texttt{addSources($\bm{\mathbb{S}}$)}} & adds each of the sources in  $\mathbb{S}$ that are not already sources of $\Xplus$ when invoked by a source of $\Xplus$. If more sources invoke \texttt{addSource} concurrently, only one succeeds. Returns \texttt{true} only if it adds all new sources in $\mathbb{S}$. \\ 
\rowcolor{Gray}
 \textbf{\texttt{removeSources($\bm{\mathbb{S}}$)}} & removes each of the sources in $\mathbb{S}$ (that are sources of $\Xplus$) when invoked by a source of $\Xplus$. If more sources invoke \texttt{removeSources} concurrently, only one succeeds. Returns \texttt{true} only if it removes all existing sources in $\mathbb{S}$. \\
\bottomrule
\end{tabular}
\end{table*}

\begin{theorem}
\label{thm:generalizationworks}
\generalopplusinmath{} guarantees \aggplusopinmath{} and \joinplusopinmath{} semantics (cf. Def.~\ref{def:aplusjplus}).
\end{theorem}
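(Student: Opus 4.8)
The plan is to show that \generalopplusinmath{}, as implemented by Algorithms~\ref{alg:u} and~\ref{alg:o}, can be specialized (by an appropriate choice of $f_U$, $f_O$, $f_S$, and by instantiating \keybymulti{} and $\opmapping$ to the ones of \aggplusopinmath{}/\joinplusopinmath{}) so that its input/output behavior coincides with that of \aggplusopinmath{}/\joinplusopinmath{} under a correct SN execution (cf.~Definition~\ref{def:correctness}). Concretely, I would proceed in two directions: first the \aggplusopinmath{} case, then the \joinplusopinmath{} case, each time exhibiting the specialization and then arguing that every tuple which should jointly contribute to an output does so, and that the produced output tuples (and their timestamps) are exactly those prescribed.

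For \aggplusopinmath{}, I would set $I=1$, let \generalopplusinmath{} inherit \aggplusopinmath{}'s \keybymulti{} and $\opmapping$, and define $f_U$ to incrementally fold an incoming tuple into $w.\zeta$ the way $f_R$ does for \aggop{} (optionally emitting an incremental $\varphi$), $f_O$ to emit the payload produced by $f_A$ on $w.\zeta$, and $f_S$ to purge the stale tuples leaving the window when $\psipar=\text{single}$ (or rely on the \texttt{remove} path when $\psipar=\text{multi}$). Then I would trace \texttt{processSN}: on each input tuple $t$, line~\ref{alg:o:water} advances $W$; the loop at L\ref{alg:o:outstart}--\ref{alg:o:outend} flushes every window instance with right boundary $\le W$, i.e.\ exactly the expired ones, calling $f_O$ on each, so $f_A$ is invoked on $w.\zeta$ once $w$ is expired and on no earlier occasion; and \texttt{handleInputTuple} inserts $t$ into all window instances (the single one if $\psipar=\text{single}$, all overlapping ones if $\psipar=\text{multi}$) associated with each key in $\keybymulti(t)$ for which $\generalopplusinst_j$ is responsible, invoking $f_U$. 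Combined with the fact that \texttt{forwardSN} routes $t$ to \emph{every} instance responsible for at least one of its keys (L\ref{alg:u:getkeys}--\ref{alg:u:sendt}), this gives that for every $(w,k)$ all tuples of key $k$ falling in $w$ are folded into $w.\zeta$ before $f_A$ is applied — which is precisely Definition~\ref{def:correctness} for \aggplusopinmath{} — and, since \texttt{prepareOutTuples} stamps each output with $w$'s right boundary $w.l+\WS$ exactly as \aggop{} does, the output tuples match as well. For \joinplusopinmath{} I would set $I=2$, make $f_U$ store $t$ in $w.\zeta$ of $t$'s sending stream and, additionally, scan the co-located window instance of the other stream sharing $w.l$ and the key, invoking $f_J$ on each matching pair and returning the resulting payloads (with $f_O$ empty and $f_S$ purging stale tuples); the correctness argument then reduces to showing every pair $\langle t_L,t_R\rangle$ with $t_L\in w_L$, $t_R\in w_R$, $w_L.l=w_R.l$, $\keybymulti(t_L)\cap\keybymulti(t_R)\ne\emptyset$ is considered exactly once — which follows because whichever of $t_L,t_R$ arrives second triggers the scan of the already-present first one in the same $\sigma_j[k][\ell]$ slot, and Lemma~\ref{lem:consecutive} together with the watermark-based expiry guarantees the window instance is still present when the second tuple arrives (neither tuple can be expired yet, or it would not contribute to a live join).

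The main obstacle I expect is the bookkeeping around \emph{when} a window instance is materialized versus when it is flushed — i.e.\ making the argument airtight that \texttt{check\&Create}, the $\rho$ update at L\ref{alg:o:checkrho}, and the expiry loop together ensure that no contributing tuple is ever dropped before being processed and no output is emitted prematurely or twice. This is where Observation~\ref{obs:outtuplestimestamp} and Lemma~\ref{lem:consecutive} do the real work: Lemma~\ref{lem:consecutive} rules out a future tuple falling into an already-expired (hence removed) instance, and Observation~\ref{obs:outtuplestimestamp} (plus the assumption that watermarks are monotone and carried in metadata) ensures the $\rho+\WS<W$ test fires exactly on genuinely-closed windows. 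I would also need a short argument that the \texttt{multi} vs.\ \texttt{single} branches of \texttt{forwardAndShift} each realize the intended semantics (shift-and-purge vs.\ discard), which is immediate from the definitions of $f_S$ and the \texttt{shift}/\texttt{remove} helpers. Everything else is routine unfolding of the pseudocode against the definitions of \aggop{}/\joinop{} extended by \keybymulti{}.
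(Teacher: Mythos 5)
Your proposal is correct and follows essentially the same route as the paper's (much terser) proof: observe that \texttt{forwardSN} realizes the data duplication of Corollary~\ref{cor:duplicationviamap}, and then exhibit the specialization of $f_U$, $f_O$, $f_S$ (with $I=1$ or $I=2$) that recovers $f_A$/$f_R$ for \aggplusopinmath{} and $f_J$ for \joinplusopinmath{}. The extra pseudocode tracing you add (expiry loop, \texttt{check\&Create}, timestamp assignment) is a more detailed justification of what the paper asserts in one line, not a different argument.
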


\begin{proof}[Proof]
The method \texttt{forwardSN} of $u_{i,j}$ implements the same semantics of those of an additional map, placed after each $u_{i,j}$ instance, that create copies of each tuple according to Corollary~\ref{cor:duplicationviamap}. Moreover, with \generalopplusinmath{}'s, \aggop{} can be instantiated by setting $I=1$ and using $f_A$ as $f_O$ and/or $f_R$ as $f_S$. Similarly, \joinop{} can be instantiated by setting $I=2$ and matching tuples via $f_J$ either incrementally with $f_U$ or upon window instance expiration with $f_O$.
\end{proof}

An additional property of \generalopplusinmath{} is about how watermarks can be delivered by its instances, as discussed next.

\begin{lemma}\label{lem:deliveringoutputs}
For each \generalopplusinstinmath{} instance, output tuples timestamps' represent valid implicit watermarks, since $t^m.\tau \geq t^\ell.\tau$  for any pair of consecutive output tuples $t^\ell$ and $t^m$.
\end{lemma}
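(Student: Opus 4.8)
The plan is to track where output tuples are produced in Algorithm~\ref{alg:o} (method \texttt{processSN}) and show that the sequence of right-boundary timestamps assigned to them is monotonically non-decreasing. There are exactly two sites where output tuples leave an \generalopplusinstinmath{} instance: (i) inside \texttt{forwardAndShift}, invoked from the expiration loop (L\ref{alg:o:outstart}--\ref{alg:o:outend}), where the payloads come from $f_O$ applied to $\sigma_j[k][1]$, the set of window instances with left boundary equal to the current $\rho$; and (ii) inside \texttt{handleInputTuple}, where the payloads come from $f_U$ applied to window instances with left boundary $\tau_1 + \ell\cdot\WA$ for $\ell \in \{0,\dots,(\tau_2-\tau_1)/\WA\}$. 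In both cases \texttt{prepareOutTuples} sets $t_{out}.\tau$ to the right boundary of the window instance, i.e. $(\text{left boundary}) + \WS$ (as stated just before Observation~\ref{obs:outtuplestimestamp} and restated after the algorithm). So it suffices to argue that the left boundaries of the window instances from which outputs are produced, in the order \texttt{processSN} produces them, form a non-decreasing sequence, both within a single invocation and across consecutive invocations.

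First I would handle a single invocation of \texttt{processSN}$(t)$. The expiration loop runs first: the outer \textbf{while} increases $\rho$ in steps of $\WA$, and the inner \textbf{while} drains all keys whose earliest window instance starts exactly at the current $\rho$ before $\rho$ is advanced. Hence all outputs produced in phase (i) have left boundaries that are non-decreasing (they equal the successive values of $\rho$), and the last of them is at most the final value of $\rho$, which satisfies $\rho + \WS \ge W$ on loop exit, i.e. $\rho \ge W - \WS$. Then \texttt{handleInputTuple}$(t)$ runs: it outputs from window instances with left boundaries $\tau_1,\tau_1+\WA,\dots,\tau_2$, which is again a non-decreasing sequence (the inner loop over keys at L\ref{alg:o:forkey} does not change the left boundary within one iteration of the $\ell$-loop). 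To glue phase (i) to phase (ii) I need: every left boundary used in phase (ii) is $\ge$ every left boundary used in phase (i). The largest left boundary in phase (i) is the terminal $\rho$; every window instance $t$ falls in is non-expired when \texttt{handleInputTuple} runs (its right boundary exceeds $W$, by Lemma~\ref{lem:consecutive} / the watermark reasoning of \S\ref{ssc:correctness}), so its left boundary $\ge W - \WS$; but after the expiration loop any remaining window instance — in particular $\sigma_j[k][1]$ for any surviving $k$ — has left boundary $> \rho - \WA$... the cleaner bound is: the terminal $\rho$ satisfies $\rho \ge$ the left boundary of every window instance $t$ falls in, because otherwise that instance would have started at some value $\le \rho - \WA < \rho$ with right boundary $< W$, contradicting non-expiration. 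This monotone-across-phases inequality is the crux, and I would state it as the key claim.

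Finally I would handle consecutive invocations: a \texttt{processSN}$(t')$ immediately following \texttt{processSN}$(t)$. Its first action is \texttt{updateW}, which can only increase $W$, and $\rho$ is persistent across invocations (instance-local variable), so the expiration loop of the new invocation resumes from a $\rho$ value at least as large as any left boundary touched in phase (i) of the previous call; combined with the intra-invocation argument and the phase-(i)-to-phase-(ii) inequality above, the last output timestamp of call $t$ is $\le$ the first output timestamp of call $t'$. This establishes $t^m.\tau \ge t^\ell.\tau$ for consecutive output tuples $t^\ell, t^m$, which by Definition~\ref{def:ready} (taking the single-stream case) makes the output timestamps valid implicit watermarks, as claimed.

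The main obstacle I expect is the phase-(i)-to-phase-(ii) step: one must carefully reconcile that \texttt{handleInputTuple} may also \emph{create} a fresh window instance at L\ref{alg:o:updatestart} (via \texttt{check\&Create}) with a left boundary that was never visited by $\rho$, and check that such a newly created instance still has left boundary $\ge$ the terminal $\rho$ — which again follows from $t.\tau \ge W - \WS$... actually from $t$ being ready and non-expired, so $t.\tau \ge W$, hence every window instance $t$ falls in has right boundary $> t.\tau \ge W$ and the argument closes. Getting the boundary arithmetic exactly right (inclusive left, exclusive right, the $\rho+\WS < W$ vs.\ $\le$ conventions in L\ref{alg:o:outstart}) is the fiddly part, but it is routine once the claim is stated.
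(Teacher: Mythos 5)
Your proposal is considerably more careful than the paper's own proof, which simply observes that the expiration loop (\autoref{alg:o}, L\ref{alg:o:outstart}--\ref{alg:o:outend}) emits results for successive values of $\rho$ and concludes timestamp order from that. Your intra-invocation analysis is correct: phase-(i) outputs have left boundaries $<W-\WS$ (they are the $\rho$ values for which the loop guard holds), phase-(ii) outputs come from windows containing $t$ and hence have left boundaries $>W-\WS$, and each phase is internally non-decreasing. The problem is the cross-invocation gluing, which is exactly where your argument has a gap. What you establish there is that phase (i) of the call on $t'$ resumes from a $\rho$ at least as large as every left boundary visited in phase (i) of the call on $t$. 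But the \emph{last} output of the call on $t$ is a phase-(ii) output, produced by $f_U$ from a window that merely \emph{contains} $t$ and may extend far past the current watermark; the \emph{first} output of the call on $t'$ is a phase-(i) output from a window that has only just expired. The inequality you need, ``phase (i) of $t'$ dominates phase (ii) of $t$,'' does not follow from the bounds you state, and in fact it is false in general. Take $\WA=1$, $\WS=10$, $\psipar=\text{multi}$, and an $f_U$ that returns a payload (which the model explicitly permits): processing $t$ with $t.\tau=100$ invokes $f_U$ on the windows $[91,101),\dots,[100,110)$ and emits timestamps $101,\dots,110$; processing the next tuple $t'$ with $t'.\tau=101$ first expires $[90,100)$ (not expired at $W=100$, expired at $W=101$) and emits timestamp $100$, and even absent $f_O$ the $f_U$ outputs for $t'$ begin at $102<110$. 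So for operators that emit incrementally over multiple overlapping window instances the stated monotonicity simply does not hold, and no proof can close that step without an extra hypothesis.

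The paper's proof avoids this by silently restricting attention to the outputs produced in the expiration loop: for $f_O$-produced tuples alone, $\rho$ is an instance-local variable that only ever increases, each such tuple gets timestamp $\rho+\WS$, and monotonicity is immediate across the whole execution. If you add the hypothesis that outputs are produced only at expiration (default $f_U$), or that $f_U$ emits from a single window per input tuple whose right boundary tracks $t.\tau$ (as in the $\psipar=\text{single}$, $\WA=\delta$ operators actually used in the paper, e.g.\ Operator~\ref{ex:j} and Operator~\ref{ex:o2}), your argument closes and is essentially a rigorous version of the paper's. As written, though, the cross-invocation step you flag as ``combined with the intra-invocation argument'' is not a routine combination --- it is the point at which the claim needs either a restriction on $f_U$ or fails.
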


\begin{proof}[Proof]
As shown in~\autoref{alg:o} L\ref{alg:o:outstart}-\ref{alg:o:outend}, \generalopplusinstinmath{} produces output tuples in timestamp order. As such, each output tuple's timestamp is also a valid watermark, since no tuple $t^{\ell+1}$ delivered after $t^\ell$ can have $t^{\ell+1}.\tau<t^{\ell}.\tau$.
\end{proof}

\draft{
We refer to Appendix~\ref{apx:examples} for several complete operator examples, including one for an
$\aggplusop$ running the example from~\autoref{sec:introduction} and one for ScaleJoin~\cite{scalejoin} (a $\joinplusop$ that we later use in \autoref{sec:eval}), and to Appendix~\ref{apx:appendixexecution} for an additional example that connects to Theorem~\ref{thm:generalizationworks} and shows how an execution of an $\aggplusop$ results in the same state updates observed when $\aggplusop$'s semantics are implemented using $\mapop{}$ and $\aggop{}$ operators.
}

Despite not being within the scope of this work to create a complete taxonomy of all the stateful analysis semantics \generalopplusinmath{} can cover, but rather to show \generalopplusinmath{} is a generalization of common stateful operators such as \aggop{} and \joinop{}, note that, in the spirit of a description of \generalopplusinmath{} semantics:
\begin{enumerate}
    \item differently from \aggop{} and \joinop{}, \generalopplusinmath{} can work with an arbitrary number of upstream peers each producing tuples with their own schema, and
    \item \generalopplusinmath{} can implement custom stateful operators (e.g., the ScaleJoin operator, presented in Appendix~\ref{apx:examples} and~\autoref{sec:eval}).
\end{enumerate}

\separate
\section{VSN Parallelism and Elasticity}
\label{sec:stretch}

After introducing \generalopplusinmath{}, we show \generalopplusinmath{} can leverage shared memory to avoid data duplication and state transfers during elastic reconfigurations while preserving its semantics. 
We first focus on how VSN parallelism can overcome the data duplication overhead.
For ease of exposition, we begin with a static setup in which $\Pi(\generalopplus)$ is fixed.

In \xxx{}, we assume that each pair of instances $\langle u_{i,j},\generalopplusinst_j \rangle$ has no dedicated queue (cf.~\autoref{ssc:snparallelism}), but rather that all $U_i$ and \generalopplusinmath{} instances share a single \textit{Tuple Buffer} (\Xinmath{}) object (cf.~\autoref{fig:vsn_elastic}) which behave according to the next definition.

\begin{definition}\label{def:xin}
\Xinmath{} is a data structure that allows a set of sources to concurrently add tuples to it, that delivers each tuple exactly once to each one of its readers, and that merges sources' watermarks into a single stream of non-decreasing watermarks, each delivered to all readers. It defines the methods presented in \autoref{tab:xapi}.
\end{definition}

\draft{We rely on a generic $\X$ data structure for generality. We discuss a specific data structure with such an API in \autoref{sec:esg_impl}.}

\subsubsection*{Implementation of the \texttt{forward} method}

In this new setup, $u_{i,j}$ instances run the method \texttt{forwardVSN} (\texttt{forward} method for VSN setups) in~\autoref{alg:u_vsn}. 
Each $u_{i,j}$ instance carries an $id$ (L\ref{alg:idparam}) that represents its index as source of $\X$ (i.e., 1 for $u_{1,1}$ and $\sum_i \Pi(U_i)$ for $u_{I,\Pi(U_I)}$) and passes such $id$ when adding tuples to $\X$ via the \texttt{add} method (L~\ref{alg:u_vsn_forstart}).
In this case, since all tuples are visible to each \generalopplusinstinmath{} instance, \generalopplusinstinmath{} should process only the tuples that carry at least one key that is its responsibility. 
Noting that, nonetheless, this is already the case in method \texttt{handleInputTuple} (\autoref{alg:o} L\ref{alg:o:forkey}), we make the following observation. 

\setlength{\algomargin}{\dimexpr\leftskip+\parindent}
\begin{algorithm}[h]
\footnotesize
\SetAlgoLined
\DontPrintSemicolon
\SetKwInput{KwParameters}{Instance-local variables}
\SetKwInput{KwAuxiliaryFunctions}{Auxiliary functions}
\KwParameters{}
$id$ \tcp{instance's id}\label{alg:idparam}

\BlankLine
\SetKwProg{Proc}{Function}{}{}
\Proc{\FuncSty{forwardVSN}($\{t_1,\ldots,t_m\}$)}
{
    \lFor{$t \in \{t_1,\ldots,t_m\}$}{\label{alg:u_vsn_forstart}
        $\X_{in}$.\FuncSty{add($t,id$)}\label{alg:u_vsn_forend}
    }
}
\caption{Method \texttt{forwardVSN} (\texttt{forward} for VSN setups) for a $u_{i,j}$ instance. Invoked when a non-empty set of tuples is to be forwarded to \generalopplusinmath{}.}
\label{alg:u_vsn}
\end{algorithm}

\begin{observation}
\label{thm:vsnstatic}
When $U_i$ and \generalopplusinmath{} instances are connected through $\X$ objects, and $U_i$ instances use~\autoref{alg:u_vsn} to forward tuples, \generalopplusinmath{} can run in parallel enforcing correctly \generalopplusinmath{}'s semantics without data duplication.
\end{observation}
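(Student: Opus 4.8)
The plan is to establish Observation~\ref{thm:vsnstatic} by reducing the VSN setup to the already-proven SN setup, arguing that the two are semantically equivalent from the point of view of each \generalopplusinstinmath{} instance. Concretely, I would show that the tuple stream observed by any instance $\generalopplusinst_j$ under the $\X$-based forwarding of \autoref{alg:u_vsn} contains (in a consistent order with consistent watermarks) exactly the tuples that $\generalopplusinst_j$ would have processed under the queue-based forwarding of \autoref{alg:u}, so that running the unchanged \texttt{handleInputTuple} logic (\autoref{alg:o}) yields identical state updates and identical output tuples.

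First I would argue the \emph{no data duplication} part: under \autoref{alg:u_vsn}, each $u_{i,j}$ invokes $\X_{in}.\texttt{add}(t,id)$ exactly once per tuple $t$, with no reference to $\keybymulti$ or $\opmapping$; hence each tuple enters $\X_{in}$ a single time regardless of how many of \generalopplusinmath{}'s instances are responsible for its keys. This contrasts with \autoref{alg:u} L\ref{alg:u:findinst}--\ref{alg:u:sendt}, where a tuple is copied into one queue per responsible peer. By Definition~\ref{def:xin}, $\X_{in}$ then delivers each such tuple exactly once to each reader, so no physical duplication of tuple payloads occurs anywhere in the pipeline.

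Next I would argue \emph{correctness}, invoking Definition~\ref{def:correctness}. The key point is the filtering guard in \texttt{handleInputTuple} (\autoref{alg:o} L\ref{alg:o:forkey}, and the enclosing test on L preceding \ref{alg:o:findwinstart}): even though $\X_{in}$ now hands every tuple to every $\generalopplusinst_j$, instance $\generalopplusinst_j$ only creates/updates window instances for keys $k$ with $\opmappingfun(k)=j$, and only if at least one such key exists in $\keybymulti(t)$. Thus the set of (tuple, key) pairs actually acted upon by $\generalopplusinst_j$ is precisely $\{(t,k): \opmappingfun(k)=j\}$ --- identical to what the SN forwarding routed into $q_{u_{i,j},\generalopplusinst_j}$ and what \texttt{processSN} then acted upon. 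Since by Definition~\ref{def:xin} the tuples of each source are delivered in order and the watermark stream delivered through \texttt{get} is non-decreasing and the minimum over sources (matching the watermark discipline of~\autoref{sec:prel}), the \texttt{update$W$} calls and the expiration/slide logic (L\ref{alg:o:outstart}--\ref{alg:o:outend}) behave as in the SN case. Therefore every subset of input tuples that could jointly contribute to an output tuple for a key $k$ is still processed together by the single instance $\opmappingfun(k)$, and produces the same output tuple, so Definition~\ref{def:correctness} holds; and distinct instances act on disjoint key sets, so they can run concurrently on the shared $\X_{in}$. Finally, since each $U_i$ instance now has a single $id$ as source of $\X_{in}$ and adds all its tuples there, the parallelism of \generalopplusinmath{} is unchanged, giving the ``run in parallel'' claim.

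The main obstacle I expect is pinning down the ordering/watermark semantics precisely enough: the SN correctness argument implicitly relied on per-stream timestamp-sorting (for implicit watermarks) or on explicit watermark propagation, and I must check that $\X_{in}$'s contract in Definition~\ref{def:xin} --- ``merges sources' watermarks into a single stream of non-decreasing watermarks, each delivered to all readers'' --- supplies to each $\generalopplusinst_j$ a watermark sequence that is at least as conservative (i.e., never overtakes a tuple that is still to be delivered) as the one it would have seen in the SN setup. In other words, the subtle step is verifying that no $\generalopplusinst_j$ prematurely expires a window because a watermark from $\X_{in}$ advanced past a tuple routed through a different logical queue in the SN world; this follows from $\X_{in}$ delivering the \emph{minimum} over all sources' progress, but it is the part that needs care rather than the routine payload-duplication bookkeeping.
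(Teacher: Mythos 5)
Your proposal is correct and takes essentially the same route as the paper, which justifies this statement only informally (hence its status as an Observation rather than a theorem): the single \texttt{add} per tuple in \autoref{alg:u_vsn} removes duplication, and the key-filtering guard already present in \texttt{handleInputTuple} (\autoref{alg:o} L\ref{alg:o:forkey}) ensures each $\generalopplusinst_j$ acts on exactly the (tuple, key) pairs it would have received under SN routing. Your additional care about the watermark contract of $\X_{in}$ being at least as conservative as the per-queue SN discipline is a valid elaboration of a point the paper leaves implicit in Definition~\ref{def:xin}.
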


\draft{Observation~\ref{thm:vsnstatic} focuses on the data duplication overhead for a single operator \generalopplusinmath{}. When considering a series of operators, note that tuples' data could be duplicated also when $D$ is stateful and \generalopplusinstinmath{} and $d_j$ instances share tuples through dedicated queues. 
Hence, for generality, we  also replace the queues of $\langle \generalopplusinst_j,d_j \rangle$ pairs with a $\X$, as shown in~\autoref{fig:vsn_elastic} (we name the objects $\X_{in}$ and $\X_{out}$ to differentiate them).
Note our model assumes one $D$ operator for simplicity; our results hold also if more $D$ operators invoke \texttt{get} on $\X_{out}$.}

\subsubsection*{From static to elastic setups}

We now show that if $\generalopplusinst_j$ instances share a global state $\sigma$ -- rather than per-instance $\opstate_j$ states -- this can enable state-transfer-free elasticity (cf. \autoref{sec:introduction}). We thus move from a static setup in which $\Pi(\generalopplus)$ and \opmappinginmath{} are fixed to one in which both can change over time.
It should be noted that, as we formally prove later in the section, although $\opstate$ can potentially be exposed to concurrent updates for distinct keys from the various $\generalopplusinst_j$ instances, \xxx{} ensures that each key is consistently updated by exactly one instance at a time.

A reconfiguration implies a change in \opmappinginmath{} to hold from a certain event time onward.
We use the term \textit{epoch} to refer to the event time period in between two watermarks during which the mapping of keys to instances does not change.
Hence, being $e$ the current epoch, $\mathbb{O}$ the set of \generalopplusinmath{} instances, and $\opmapping$ the mapping in $e$, a reconfiguration implies the start of a new epoch $e^{*}$ for which a new mapping $\opmapping^{*}$ is used for a (possibly different) set of operator instances $\mathbb{O}^{*}$.
We focus on a single transition from epoch $e$ to $e^{*}$ since subsequent epoch switches happen with the same logic.
To ease presentation, we define two temporary conditions:
\begin{itemize}[leftmargin=*]
    \item \textbf{Cond.~1:} For provisioning reconfigurations, the joining instances $\mathbb{O}^* \setminus \mathbb{O}$ are already instantiated, and start retrieving/processing tuples as soon as they are given access to $\Xplus_{in}$. For decommissioning reconfigurations, the leaving instances $\mathbb{O} \setminus \mathbb{O}^*$ will stop processing/outputting tuples once they are disconnected from $\Xplus_{in}$ and $\Xplus_{out}$.
    \item \textbf{Cond.~2:} All instances $\generalopplusinst_j \in \mathbb{O} \cup \mathbb{O}^*$ have access to a set of variables $\{e,e^*,\mathbb{O},\mathbb{O}^*,\opmappingfun^*,\gamma\}$ that represent the current epoch id number ($e$), the next epoch id number ($e^*$), the set of instances of the current epoch ($\mathbb{O}$), the set of instances of the next epoch ($\mathbb{O}^*$), the mapping function of the next epoch ($\opmappingfun^*$), and a shared event time ($\gamma$) greater than the current watermark of any instance $\generalopplusinst_j$, used by all $\generalopplusinst_j$ instances to trigger a reconfiguration as soon as their watermark is greater than or equal to $\gamma$. $\generalopplus$ instances receive special \textit{control} tuples, and set such parameters using a method named \texttt{prepareReconfig}. Control tuples can be distinguished with a method named \texttt{isControl} and are not processed to update \generalopplusinmath{}'s state.
\end{itemize}
\autoref{fig:vsn_elastic} shows \xxx{}'s setup.
We cover the actual coordination of instances, the implementation of \texttt{isControl} and \texttt{prepareReconfig}, and satisfy Cond. 1/2 in~\autoref{sec:implementation}.

\subsubsection*{Elasticity and correctness}

Before focusing on \xxx{}'s shared state and elasticity protocol, we want to draw attention to one challenging connection between elasticity and correctness guarantees.
A change in $\Pi(\generalopplus)$ implies a change in the number of instances delivering tuples/watermarks to $D$.
Independently of whether tuples/watermarks are delivered to $D$ through individual queues or $\X$ objects, a crucial question arises: \textit{can a newly provisioned instance deliver tuples/watermarks to a $d_l$ instance that conflict with $\watermarkof{d_l}$ (i.e., that carry a timestamp earlier than $\watermarkof{d_l}$)?}
Without assumptions on when such operator instance will deliver tuples/watermarks or which values they will carry, this is indeed the case.
\draft{Note this could violate correctness if $D$ is stateful and $t$ contributes to a window instance $w$ that $d_l$ has already treated as expired.}

One way, discussed in the literature~\cite{zacheilas2015elastic}, to deal with this is to relax the correctness guarantees. An alternative way, which we follow in this work, is to prove that, for an \generalopplusinmath{} operator, 
it is possible to guarantee a safe lower bound on the watermarks delivered by newly provisioned instances and thus consistently deliver tuples/watermarks to $D$ (this is shown in detail in Lemma~\ref{lem:addissafe}, later in this section).

\begin{figure}[ht!]
\includegraphics[width=\linewidth]{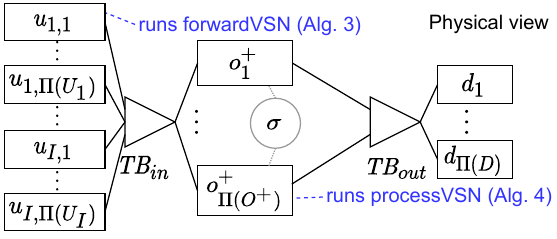}
\caption{\xxx{}'s model for VSN parallelism/elasticity.}\label{fig:vsn_elastic}
\end{figure}

\subsubsection*{Implementation of the \texttt{process} method}

\autoref{alg:o_vsn_elastic} overviews the implementation of \generalopplusinmath{} for VSN parallelism and elasticity.
It resembles the structure of the one in~\autoref{alg:o} but has important differences. 

The first difference between the algorithms is the additional L\ref{alg:o_vsn_elastic:adaptstart}-\ref{alg:o_vsn_elastic:adaptstop} to handle the elastic reconfigurations.
As shown, the algorithm first checks if the tuple is a control one (L\ref{alg:o_vsn_elastic:adaptstart}).
If that is the case, it proceeds storing $\gamma$ (the event time triggering the reconfiguration) and the future values of $e,\mathbb{O},\opmappingfun$ in the respective variables $e^*,\mathbb{O}^*,\opmappingfun^*$.
Notice that, as shown, a control tuple itself does not trigger a reconfiguration immediately. 
The reconfiguration is later triggered when a new incoming regular tuple increases the watermark and such increased watermark is greater than $\gamma$ (L\ref{alg:o_vsn_elastic:reconfigif}).
In this case, the algorithm enters a barrier and, once all instances reach such barrier, proceeds to apply the reconfiguration (L\ref{alg:o_vsn_elastic:barrier}-\ref{alg:o_vsn_elastic:adaptstop}) before processing $t$ (L\ref{alg:o_vsn_elastic:outstart} onward).

The second difference between the algorithms is that the code handling \generalopplusinmath{}'s state when producing output tuples (L\ref{alg:o_vsn_elastic:outstart}-\ref{alg:o_vsn_elastic:outend}). In this case, each instance $\generalopplusinst_j$ handles an expired window instance $w$ only if $w.k$ is $\generalopplusinst_j$'s responsibility (L\ref{alg:o_vsn_elastic:checkearlieast}). Together with method \texttt{handleInputTuple}, which handles only the keys responsibility of $\generalopplusinst_j$, this prevents concurrent modifications for the same key in $\opstate$.

\subsubsection*{Formal Guarantees}

Given~\autoref{alg:o_vsn_elastic}, we can now formally prove that \xxx{} ensures that each key maintained in the shared state $\opstate$ is consistently updated by exactly one instance at a time.


\begin{algorithm*}[h]
\footnotesize
\SetAlgoLined
\DontPrintSemicolon

\SetKwInput{KwParametersL}{Instance-local variables (besides $\WA,\WS,I,\keybymulti,\psipar,S,\opmapping,f_U,f_O,f_S$)}
\SetKwInput{KwParametersS}{Shared variables}
\SetKwInput{KwAuxiliaryFunctions}{Auxiliary functions}
\SetKwProg{Proc}{Function}{}{}

\KwParametersL{}
$W$ \tcp{$\generalopplusinst_j$'s watermark, initially 0}
$\rho$ \tcp{earliest $w.l$ of any $w \in \sigma$, initially 0}
$e,e^*$ \tcp{current/next epoch number}
$\mathbb{O},\mathbb{O}^*$ \tcp{set of current/next epoch instances}
$\opmappingfun^*$ \tcp{next epoch \opmappingfuninmath{}}
$\gamma$ \tcp{Event time to trigger reconfiguration}
\KwParametersS{}
$\sigma$ \tcp{shared state $\sigma$}

\BlankLine
\KwAuxiliaryFunctions{}
\BlankLine
 \tikzmark{squareanchor1}\FuncSty{update$W$($t$), }
\FuncSty{$\sigma$.remove($k,\ell$), } 
\FuncSty{$\sigma$.set($k,\ell,\{\zeta_1,\ldots,\zeta_I\}$), } 
\FuncSty{$\sigma$.shift($k,\ell,\{\zeta_1,\ldots,\zeta_I\}$), } 
\FuncSty{$\sigma$.check\&create($k,l$), } 
\FuncSty{earliestWinL($t$), } 
\FuncSty{latestWinL($t$), }
\FuncSty{prepareOutTuples($\{\varphi^1,\ldots,\varphi^\ell\}$), }
\FuncSty{forwardAndShift($k$), } 
\tikzmark{squareanchor2}\FuncSty{handleInputTuple($t$)}\;
\BlankLine
\FuncSty{waitForInstances($\mathbb{O}$)} \tcp{blocking call acting as a barrier} 
\FuncSty{isControl($t$)} \tcp{check if $t$ is a control tuple}
\FuncSty{prepareReconfig($t$)} \tcp{setup reconfig-related parameters}
\BlankLine

\Proc{\FuncSty{processVSN}($t$)}
{
    \lIf(\tcp*[h]{set up reconfiguration parameters. The reconfiguration is triggered as soon as $W$ grows beyond $\gamma$}){\FuncSty{isControl($t$)}}{\label{alg:o_vsn_elastic:adaptstart}
         $\{e^*,\mathbb{O}^*,\opmappingfun^*,\gamma\} \xleftarrow{} $\FuncSty{prepareReconfig}($t$)\label{alg:o_vsn_elastic:prep}
    } \Else {
    $\overline{W} \xleftarrow{} W$\; \label{alg:o_vsn_elastic:wold}
    \FuncSty{update$W$($t$)} \tcp{update $\generalopplusinst_j$ watermark}\label{alg:o_vsn_elastic:water}
    \If{$W>\overline{W} \wedge W > \gamma$}{\label{alg:o_vsn_elastic:reconfigif}
        \FuncSty{waitForInstances($\mathbb{O}$)}\;\label{alg:o_vsn_elastic:barrier}
        \lIf(\tcp*[h]{provision instances. The \FuncSty{if} clause ensures exactly one instance adds new instances first to $\X_{out}$ and then to $\X_{in}$}){$|\mathbb{O}^*|>|\mathbb{O}| \wedge \Xplus_{out}$.\texttt{addSources($\mathbb{O}^*\setminus \mathbb{O}$)}}{
            $\Xplus_{in}$.\texttt{addReaders($\mathbb{O}^*\setminus \mathbb{O},j$)}}\label{alg:o_vsn_elastic:addSources}
        \lIf(\tcp*[h]{decommission insts. The \FuncSty{if} clause ensures exactly one instance removes instances first from $\X_{in}$ and then from $\X_{out}$}){$|\mathbb{O}^*|<|\mathbb{O}| \wedge \Xplus_{in}$.\texttt{removeReaders($\mathbb{O}\setminus \mathbb{O}^*$)}}{
            $\Xplus_{out}$.\texttt{removeSources($\mathbb{O}\setminus \mathbb{O}^*$)}}\label{alg:o_vsn_elastic:rmSources}
        $\{e,\mathbb{O},\opmapping\} \xleftarrow{} \{e^*,\mathbb{O}^*,\opmapping^*\}$\;\label{alg:o_vsn_elastic:adaptstop}
    }
    \While(\tcp*[h]{while $\generalopplusinst_j$ has expired window inst. it is responsible for and starting at $\rho$}){$\rho+\WS<W$}{\label{alg:o_vsn_elastic:outstart}
        \lWhile{$\exists k | \sigma[k][1].l = \rho \wedge \opmapping(k)=j$}{\label{alg:o_vsn_elastic:checkearlieast}
            \FuncSty{forwardAndShift}($k$)
        }
        $\rho\xleftarrow{}\rho+\WA$ \tcp{update $\rho$ value}\label{alg:o_vsn_elastic:outend}
    }
    \begin{tikzpicture}[remember picture, overlay]
    \node (A1) at ([]pic cs:squareanchor1) {};
    \node (B1) at ([]pic cs:squareanchor2) {};
    \draw[draw=black]  let 
    \p1 = (A1), 
    \p2 = (B1) in (\x1-2,\y1+8) rectangle (\x2+100,\y2-5);
    \path let 
    \p1 = (B1) in node[blue,left]  at (\x1+100,\y1+22) {As defined in \autoref{alg:o}, but operating on $\sigma$ rather than $\sigma_j$};
    \end{tikzpicture}
    \FuncSty{handleInputTuple}($t$)\label{alg:o_vsn_elastic:handlein}
    }
}
\caption{Method \texttt{processVSN} (\texttt{process} for VSN setups) of a $\generalopplusinst_j$ instance. Runs when $t$ is returned by $\X_{in}$.}
\label{alg:o_vsn_elastic}
\end{algorithm*}

\begin{theorem}
\label{thm:vsndynamic}
If $\mathbb{O} \cup \mathbb{O}^*$ instances run the \texttt{processVSN} method presented in~\autoref{alg:o_vsn_elastic}, then elastic reconfigurations can be carried out while preserving \generalopplusinmath{}'s semantics (cf.~\autoref{ssc:stretchgeneralizedmodel}).
\end{theorem}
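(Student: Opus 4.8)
I need to show that the `processVSN` method preserves $\generalopplus$'s semantics across an epoch transition. Let me think about what "preserving semantics" means here: by Theorem~\ref{thm:generalizationworks}, $\generalopplus$ already encapsulates $\aggplus$/$\joinplus$ (and hence $A$/$J$). So I need to show that the *distributed, elastic* execution over the shared state $\sigma$ produces exactly the state updates and output tuples that a correct (non-elastic, single-instance-per-key) execution of $\generalopplus$ would produce — i.e. Definition~\ref{def:correctness} still holds despite (a) concurrency on $\sigma$, (b) the mapping changing mid-stream, and (c) new/removed instances connecting to $\Xplus_{in}$/$\Xplus_{out}$.

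So the structure should be: first establish the mutual-exclusion invariant (that each key is, at any wall-clock moment, "owned" by exactly one live instance — this is the theorem just before this one, presumably Theorem about $\sigma$ consistency, so I can cite it or it's proved as a lemma); then establish that the *union* over all instances and all epochs of the per-key updates equals the updates a single logical $\generalopplus$ would make; then handle the watermark/expiration safety at the boundary.

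Let me think about what could go wrong and needs care:

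1. **No key is "dropped" across the transition.** When the mapping changes at the reconfiguration event time $\gamma$, keys get reassigned. Because $\sigma$ is *shared*, a key's window instances are not transferred — the new owner just starts reading/writing the same entries in $\sigma$. So I need: (i) the old owner $\generalopplusinst_j$ with $\opmapping(k)=j$ stops touching $k$ exactly when it applies the reconfiguration (L\ref{alg:o_vsn_elastic:adaptstop} swaps in $\opmapping^*$), and (ii) the new owner $\generalopplusinst_{j'}$ with $\opmapping^*(k)=j'$ starts touching $k$ exactly then, and (iii) *every* instance applies the reconfiguration at the *same logical point*, namely when its watermark first exceeds $\gamma$. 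The barrier `waitForInstances` (L\ref{alg:o_vsn_elastic:barrier}) plus Cond.~2's shared $\gamma$ gives (iii).

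2. **No key is handled by two instances simultaneously.** Both `handleInputTuple` (via the `if ... \opmappingfun(k)=j` guard at \autoref{alg:o} L\ref{alg:o:forkey}) and `forwardAndShift` (via the guard at L\ref{alg:o_vsn_elastic:checkearlieast}) only act on keys for which the *current* $\opmapping$ maps to $j$. Since $\opmapping$ is a function, exactly one $j$ satisfies this per key. Combined with the barrier ensuring the swap is atomic w.r.t. tuple processing, no two instances ever hold $\opmapping(k)=j$ "at the same logical time". This is the crux of the mutual-exclusion argument.

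3. **Completeness of processing for key $k$.** For a fixed key $k$, stitch together its "processing timeline": before $\gamma$, owned by $j = \opmapping(k)$; after, owned by $j' = \opmapping^*(k)$. I want: the sequence of $f_U$ invocations on $\sigma[k]$, and the sequence of $f_O$/$f_S$ invocations (expiry handling), is *exactly* what a single instance owning $k$ throughout would do. Two sub-points: (a) the *input tuples* with key $k$ arriving before $\gamma$ are all seen by $j$ and those after all seen by $j'$ — this follows from $\Xplus_{in}$ delivering every tuple to every (live) reader, Cond.~1 (new readers start consuming from the point $\gamma$-crossing would return to the adder's reference — need to check the `addReaders` contract in Table~\ref{tab:xapi}: new readers get tuples/watermarks "starting from the ones that will be also returned to $j$"), and the fact that a tuple with key $k$ and timestamp $< \gamma$ cannot arrive at $j'$ *after* $j'$ joins because — hmm, this needs the watermark/ordering argument, see point 4. (b) expiry: $\rho$ is per-instance, but window instances in $\sigma[k]$ are shifted/removed only by $k$'s current owner, so no double-shift and no skipped shift, *provided* the new owner's $\rho$ is low enough to catch $k$'s still-live windows — this is where Lemma~\ref{lem:addissafe} (the "safe lower bound on watermarks of newly provisioned instances") comes in, and I should lean on it heavily: a newly provisioned instance must start with a watermark / $\rho$ no greater than the current global watermark so it doesn't skip expiry of a window it now owns.

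4. **Late tuples to $D$.** The question flagged in the text ("can a newly provisioned instance deliver tuples/watermarks that conflict with $\watermarkof{d_l}$?") is answered by Lemma~\ref{lem:addissafe}: a new $\generalopplusinst$ instance's watermark is lower-bounded safely, and by Lemma~\ref{lem:deliveringoutputs} its output timestamps are $\geq$ its watermark, so the stream it contributes to $\Xplus_{out}$ is consistent with $\watermarkof{D}$. I'd invoke this as a black box.

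**The main obstacle** I expect is step 3(b)/step 4 — the *expiry-side* correctness at the reconfiguration boundary and for freshly provisioned instances. Concurrency on $f_U$ updates is "easy" (clean per-key partition via the mapping guard + barrier). The subtle part is that the shared state $\sigma$ contains window instances whose expiry is driven by *per-instance* local watermarks $W$ and per-instance cursors $\rho$, yet the *owner* of a key can change. I must argue that (i) when instance $j$ relinquishes key $k$, every window of $k$ that *should* have expired under $j$'s view already has (so nothing is left half-done), and (ii) the new owner $j'$ will eventually expire every remaining live window of $k$ and never skip one — which forces the requirement that provisioned instances enter with a sufficiently small $\rho$/watermark, i.e. I genuinely need Lemma~\ref{lem:addissafe} here and should state precisely which bound it gives. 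The clean way to package all of this is an *invariant maintained across the main loop of* `processVSN`: "for every key $k$, $\sigma[k]$ equals the state a single correct $\generalopplus$ instance would hold after consuming all tuples with key $k$ and timestamp $\le$ (the minimum current watermark over live instances)", and show each line of \autoref{alg:o_vsn_elastic} preserves it, with the reconfiguration block (L\ref{alg:o_vsn_elastic:barrier}--\ref{alg:o_vsn_elastic:adaptstop}) being the only non-routine case.

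I'd also close by noting multi-operator generality: since $D$ receives a consistent watermark-carrying stream from $\Xplus_{out}$ (Definition~\ref{def:xin}, and Lemmas~\ref{lem:deliveringoutputs},~\ref{lem:addissafe}), the same argument applies inductively down the DAG, so Cond.~1/2 (whose realization is deferred to \autoref{sec:implementation}) suffice.
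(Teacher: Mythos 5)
Your proposal takes essentially the same route as the paper's proof: the barrier at L\ref{alg:o_vsn_elastic:barrier} together with the shared trigger time $\gamma$ and the non-decreasing watermarks delivered by $\X$ ensures every instance switches from $\opmappingfun$ to $\opmappingfun^*$ at the same logical point (all expiries up to $\overline{W}$ done under the old mapping, everything after under the new one), the per-key guards in \texttt{handleInputTuple} and at L\ref{alg:o_vsn_elastic:checkearlieast} give single-writer access to each key in $\sigma$, and exactly one instance succeeds in connecting/disconnecting the provisioned/decommissioned instances. One caution: you propose to ``lean heavily'' on Lemma~\ref{lem:addissafe} inside this proof, but in the paper that lemma is stated \emph{after} this theorem and its proof cites this theorem, so doing so naively would be circular; it is also unnecessary, since a freshly provisioned instance starts with $\rho=0$ and a watermark built only from what it reads off $\X_{in}$ from the reconfiguration point onward, so it can neither skip nor prematurely expire an inherited window, and the downstream-delivery question (your point 4) is deliberately scoped out of this theorem and handled separately by Lemma~\ref{lem:addissafe}.
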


\begin{proof}[Proof]
\draft{To begin, note that when processing each tuple, all instances in $\mathbb{O}$ use the same \opmappingfuninmath{} (L\ref{alg:o_vsn_elastic:adaptstart}-\ref{alg:o_vsn_elastic:adaptstop}).
Hence, each key $k$ is only updated by the instance responsible for $k$.}

We argue that each key $k$ is also updated exclusively and consistently by one instance in the presence of reconfigurations.
Each instance in $\mathbb{O}$ enters the \texttt{if} statement at L\ref{alg:o_vsn_elastic:reconfigif} only after it receives a tuple $t$ that increases its watermark to the first value greater than $\gamma$ and then proceeds to wait for all other instances (L\ref{alg:o_vsn_elastic:barrier}).
As specified in Definition~\ref{def:xin}, $\X$ objects deliver the same watermarks to all instances, and each watermark observed by an instance has non-decreasing values.
Before entering the barrier, all instances have already handled expired window instances whose right boundary fell before $\overline{W}$ using $\opmappingfun$.
Once leaving the barrier, all instances will consistently handle window instances whose right boundary falls after $\overline{W}$ using $\opmappingfun^*$ (both expired and non-expired ones).
Newly provisioned instances (if any) are connected to $\X_{in}/\X_{out}$, or alternatively instances being decommissioned (if any) are disconnected from $\X_{in}/\X_{out}$ by exactly one of the existing instances (the one that succeeds in adding the sources being provisioned or in removing the readers being decommissioned).
Hence, if $t$ carries a key $k$ whose responsibility has shifted from $\generalopplusinst_i$ to $\generalopplusinst_j$, independently of whether $\generalopplusinst_j$ is a newly deployed instance or not, $\generalopplusinst_j$ will not only process $t$ (in relation to key $k$) after $\generalopplusinst_i$ is done processing all tuples preceding $t$, but also process $t$ after having been connected to both $\Xplus_{in}$ and $\Xplus_{out}$, if $\generalopplusinst_j$ is a newly deployed instance.
If $\generalopplusinst_i$ is being decommissioned, $\generalopplusinst_i$ will not process $t$ (no key responsibility of $\generalopplusinst_i$ is returned by $\opmappingfun^*$) and $\generalopplusinst_i$ will also not retrieve any tuple 
from $\Xplus_{in}$ nor will it output to $\Xplus_{out}$ once disconnected from the latter.
\end{proof}

After proving~\autoref{alg:o_vsn_elastic} can support parallel execution and elastic reconfigurations for \generalopplusinmath{} while enforcing \generalopplusinmath{}'s semantics correctly, we now prove it also enables VSN parallelism/elasticity for any downstream peer. This is because, even in the presence of provisioning reconfigurations, \generalopplusinmath{} can consistently deliver tuples/watermarks to $\X_{out}$, and the latter can act as the $\X_{in}$ of $D$. 

\begin{lemma}
\label{lem:addissafe}
Being $t$ the tuple that triggers a reconfiguration (entering the \texttt{if} statement at L\ref{alg:o_vsn_elastic:reconfigif}) $t.\tau$ is a safe lower bound for the watermark of newly provisioned instances.
\end{lemma}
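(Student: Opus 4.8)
I want to show that when instance $\generalopplusinst_j$ is about to enter the reconfiguration barrier because a regular tuple $t$ has pushed its watermark $W$ above $\gamma$ (L\ref{alg:o_vsn_elastic:reconfigif}), the value $t.\tau$ is a valid lower bound on every watermark that a newly provisioned instance $\generalopplusinst_k \in \mathbb{O}^*\setminus\mathbb{O}$ will ever deliver to $\X_{out}$. The key facts to chain together are: (i) by Lemma~\ref{lem:deliveringoutputs}, every $\generalopplusinstinmath{}$ instance emits output tuples in non-decreasing timestamp order, so the \emph{first} output tuple a fresh instance produces already determines the lower bound on all of its subsequent output; and (ii) by Cond.~1, a newly provisioned instance only starts retrieving and processing tuples once it has been given access to $\X_{in}$, which — per the ordering of operations in L\ref{alg:o_vsn_elastic:addSources} and the barrier at L\ref{alg:o_vsn_elastic:barrier} — happens strictly after every instance has reached watermark $W > \gamma$, i.e. after every instance has already consumed the tuple that triggered the switch.

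\textbf{Key steps, in order.} First I would fix notation: let $t$ be the triggering tuple at $\generalopplusinst_j$, so $W$ has just been updated to some value $W > \gamma$ with $W = t.\tau$ (the watermark update from a ready/explicit-watermark tuple sets $W$ to $t.\tau$, cf.~\autoref{sec:prel}), and note that since $\gamma$ was chosen (Cond.~2) to exceed the current watermark of \emph{every} instance, all instances in $\mathbb{O}$ cross the barrier on a tuple whose timestamp is at least the $\gamma$-crossing value; combined with the fact (Def.~\ref{def:xin}) that $\X_{in}$ delivers the same non-decreasing watermark stream to all readers, every instance leaves the barrier having processed only input tuples with $\tau \le W$... — more precisely, having a watermark of at least $W$. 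Second, I would observe that a provisioned instance $\generalopplusinst_k$ is added to $\X_{in}$ inside the barrier block (L\ref{alg:o_vsn_elastic:addSources}), so by the \texttt{addReaders} contract in \autoref{tab:xapi}, $\generalopplusinst_k$ receives tuples and watermarks of $\X_{in}$ \emph{starting from} the ones returned to $j$ after this point; hence the first watermark $\generalopplusinst_k$ ever sees is $\ge W \ge t.\tau$, so its local $W$ is $\ge t.\tau$ before it does any processing. Third, by Observation~\ref{obs:outtuplestimestamp} every output tuple that $\generalopplusinst_k$ produces from a window instance carries a timestamp strictly greater than the input tuple that fell in it — and every such input tuple has $\tau \ge t.\tau$ — while any output from an expiring window (via $f_O$) has timestamp equal to that window's right boundary, which also exceeds $\generalopplusinst_k$'s watermark at the time it first processes anything, i.e. $\ge t.\tau$. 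Therefore all of $\generalopplusinst_k$'s output, and by Lemma~\ref{lem:deliveringoutputs} the watermarks it forwards to $\X_{out}$, are bounded below by $t.\tau$.

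\textbf{The main obstacle.} The delicate point is the synchronization argument: I must be careful that the barrier (L\ref{alg:o_vsn_elastic:barrier}) genuinely guarantees that \emph{no} existing instance has already emitted to $\X_{out}$, via its own \texttt{forwardAndShift}/\texttt{handleInputTuple} calls, a tuple with timestamp below $t.\tau$ \emph{after} the new instance is attached — and conversely that the new instance cannot ``see'' stale content. This reduces to the ordering inside the \texttt{if} block: \texttt{addSources} on $\X_{out}$ precedes \texttt{addReaders} on $\X_{in}$ (L\ref{alg:o_vsn_elastic:addSources}), and both precede the release of the barrier, so the new instance is a source of $\X_{out}$ before it can read anything from $\X_{in}$; then its first output timestamp is $\ge t.\tau$ by the watermark argument above, and since $\X_{out}$ merges sources' watermarks into a \emph{non-decreasing} stream (Def.~\ref{def:xin}), adding a source whose contributions are all $\ge t.\tau$ cannot retroactively lower $\watermarkof{d_l}$ below $t.\tau$ — and $d_l$'s watermark had already advanced to something consistent with $t.\tau$ before the reconfiguration. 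I would state exactly this as the crux and argue it from the $\X$ API contract rather than re-deriving the data-structure internals, which are deferred to \autoref{sec:implementation}.
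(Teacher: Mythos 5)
Your proposal follows essentially the same route as the paper's proof: both hinge on the facts that, at the barrier, everything producible before $t$ has already been delivered with timestamps $\le \overline{W} \le t.\tau$ (via Theorem~\ref{thm:vsndynamic} and Definition~\ref{def:watermark}), that any result depending on $t$ or later tuples carries a timestamp greater than $t.\tau$ (Observation~\ref{obs:outtuplestimestamp}), and that Lemma~\ref{lem:deliveringoutputs} lets output timestamps serve as the new source's watermarks, with the \texttt{addSources}/\texttt{addReaders} ordering supplying the synchronization. The one slip is in your $f_O$ sub-case --- an \emph{expired} window's right boundary lies \emph{below} the handling instance's watermark, not above it --- but the paper's own proof folds that case into the same two-way split without separate argument, so your treatment matches the original in both approach and level of rigor.
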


\begin{proof}[Proof]
The method \texttt{addSources} (L\ref{alg:o_vsn_elastic:addSources}) is invoked successfully by exactly one $\generalopplusinst_j$ instance only upon reception of a tuple $t$ that increases $\generalopplusinst_j$'s watermark (from $\overline{W}$ to $W$, L\ref{alg:o_vsn_elastic:wold}-\ref{alg:o_vsn_elastic:barrier}). 
All results that could have been produced before processing $t$ have already been delivered to $\Xplus$ by all instances (\autoref{thm:vsndynamic}) and have timestamp lower than or equal to $\overline{W}$ and thus lower than or equal to $t.\tau$ (since $t.\tau \geq \overline{W}$, cf. Definition~\ref{def:watermark}), while all results that could depend on $t$ or future tuples will have a timestamp greater than $t.\tau$ (cf.~Observation~\ref{obs:outtuplestimestamp}).
If according to Lemma~\ref{lem:deliveringoutputs}, the timestamps of the tuples produced by $\generalopplusinst_j$ can be used as watermarks by $\X$, then $t.\tau$ can be immediately delivered as a watermark for a newly provisioned instance $\generalopplusinst_j$.
\end{proof}

\separate
\section{Using ScaleGate as $\X$ Objects}
\label{sec:esg_impl}

\autoref{sec:stretch} covered \xxx{}'s algorithms for VSN parallelism and elasticity. The latter relies on the $\X$ data object, which exposes six methods, presented in Table~\ref{tab:xapi}.
Here, we show how an enhanced \scalegate{} object (cf.~\autoref{ssc:scalegate}) can offer all such methods and support a real implementation of \xxx{}.

We begin observing that the original \scalegate{} is already sufficient to provide methods \texttt{add} and \texttt{get}. 
More formally, under the assumption that each source delivers a timestamp-sorted stream of tuples, \scalegate{} objects allow concurrent insertion and retrieval of tuples for arbitrary sets of sources and readers, delivering each tuple exactly once, and also delivering non-decreasing implicit watermarks (cf.~\autoref{ssc:correctness}) through tuples' $t.\tau$ attribute.
As discussed in \autoref{sec:generalizedmodel}, each $\generalopplusinst_j$ outputs window results in timestamp-order (Lemma~\ref{lem:deliveringoutputs}).
It is thus safe for each $\generalopplusinst_j$ to deliver its output tuples to an \scalegate{} data structure that can support data-duplication-free parallelism for downstream peers too, in a composable fashion.
In this case, both \generalopplusinmath{} as well as $D$ (if the latter is a stateful operator) can support correct execution for both order-sensitive as well as order-insensitive functions.

\scalegate{} objects do not define methods to dynamically change their sources and readers. They can be extended to provide the API methods highlighted in \autoref{tab:xapi}, nonetheless. We refer to such extended objects as \textit{Elastic ScaleGate} (\elasticscalegate{}) objects.
In order to outline our \elasticscalegate{} implementation, we first overview the internals of the \texttt{add} and \texttt{get} methods.
\scalegate{} builds a skip list where tuples are maintained in timestamp order, along with some auxiliary book-keeping structures. 

\begin{figure}[h!]
\includegraphics[width=\linewidth]{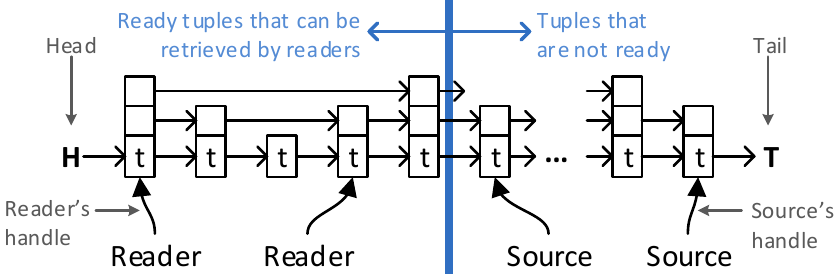}
\caption{ScaleGate's skip list, and readers'/sources' handles.}\label{fig:sgnutshell}
\end{figure}

The book-keeping structures contain handles to the skip list, for sources and readers, to continue inserting or reading nodes (tuples), respectively.
As shown in~\autoref{fig:sgnutshell}, readers' handles traverse the list from head to tail, retrieving the next tuple only if the latter is not pointed by a source's handle (thus returning only ready tuples). At the same time, sources' handles point to their last inserted tuples and facilitate the sorted insertion of subsequent tuples (also leveraging the skip list shortcuts).
Since each source adds a timestamp-sorted stream, each insertion ``falls'' after its previous one.
All the tuples before the earliest tuple pointed by a source (i.e., with earlier timestamps), are ready.

\noindent\textit{Adding new readers:}
Each reader has access to one of \elasticscalegate{}'s nodes through its own handle.
Each new reader simply needs a handle to the node pointed by the $j$-th reader (the one invoking the \texttt{addReaders} method) to retrieve next the same tuple that will be delivered to the $j$-th reader (according to the API), and then traverse the rest of the list in timestamp order in subsequent  \texttt{get} invocations.

\noindent\textit{Removing existing readers:}
Removing a set of existing readers only requires removing their thread-specific structures.

\noindent\textit{Adding new sources:}
According to Lemma~\ref{lem:addissafe}, the timestamp of the tuple $t$ triggering a reconfiguration is a safe lower bound for the watermark of newly provisioned instances in \xxx{}. Since $t$ is the last tuple observed by the instance invoking method \texttt{addSources}, and given that $t_o.\tau < \watermarkof{\generalopplusinst_j} < t.\tau$, being $t_o$ the last tuple produced by such $\generalopplusinst_j$ instance, the book-keeping handles of new sources can be copying the handles of the source invoking successfully the \texttt{addSources} method.
For the sake of the new thread, an initial \emph{dummy} tuple following the one pointed by the source invoking successfully the \texttt{addSources} method is inserted,  to initialize the functionality of its handles. Readers can move their pointer to the next tuple pointed by a source when such tuple is of type dummy but the tuple is not returned as ready to readers invoking \texttt{get}.

\noindent\textit{Removing existing sources:}
Removing a source consists mainly of adding, on behalf of the source, a special \textit{flush} tuple in \elasticscalegate{}, with a timestamp equal to the latest insertion of the source. Such a tuple will let the previously added tuples of the leaving source to be ready (according also to other sources' tuples). The removed source's associated book-keeping structures can then be removed. As for dummy tuples, readers can move their pointer to the next tuple pointed by a source when such tuple is of type flush, without returning it as ready via the \texttt{get} method.

\noindent\textit{Concurrent calls to the API methods:}
For concurrent calls of the same method that updates the set of threads (e.g. concurrent calls to \texttt{addReaders}), synchronization is in place (using a TestAndSet variable) so that only one of each type takes effect. Concurrent calls among competing such methods that modify the thread-specific book-keeping structures (e.g. \texttt{addReaders} and \texttt{removeReaders}) require extra synchronization to protect consistency; since these are low-contention operations, nonetheless, a simple lock can do.
Since each reconfiguration results in sources/readers being added or removed but not both (\autoref{alg:o_vsn_elastic} L\ref{alg:o_vsn_elastic:addSources}-\ref{alg:o_vsn_elastic:rmSources}), and each reconfiguration can only start after the previous is completed (cf. \autoref{sec:implementation}), we do not incur such extra synchronization overhead in our implementation.
If regular operations (\texttt{add} and \texttt{get}) are concurrent with those that update the set of threads and their book-keeping structures, the latter can overwrite, causing the former to have no effect. Note that in \xxx{} such invocations do not interfere. 

\separate
\section{Implementation - API and Architecture}
\label{sec:implementation}

We focus herein on how \xxx{} meets Cond. 1 and  Cond. 2 (cf.~\autoref{sec:stretch}).
We begin overviewing the overall architecture of our implementation and discussing Cond. 1.

\begin{figure}[ht!]
\includegraphics[width=\linewidth]{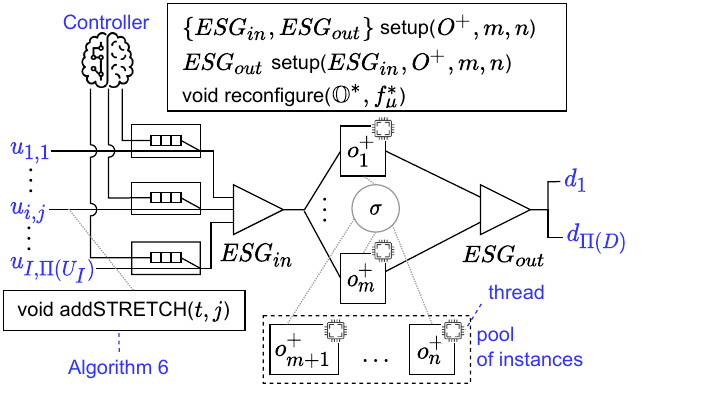}
\caption{API and architecture of \xxx{}'s implementation.}\label{fig:stretch}
\end{figure}

As shown in~\autoref{fig:stretch}, \xxx{}'s API defines two \texttt{setup} and one \texttt{reconfigure} method.
Method $\{ESG_{in},ESG_{out}\}=\texttt{setup}(\generalopplus,m,n)$ takes as input the \generalopplusinmath{} to instantiate, its initial parallelism degree $m$ and its maximum parallelism degree $n$.
Upon invocation of this method, \xxx{} creates $n$ $\generalopplusinst_j$ that share state \opstateinmath{}. Furthermore, it connects $m$ of them to \elasticscalegatein{} and \elasticscalegateout{}. The remaining $n-m$ instances are kept in a pool. The pool is also used to keep instances that are later decommissioned.

Each $\generalopplusinst_j$ instance is run by a thread.
The latter is constantly trying to get a tuple to process through the \texttt{get} method of \elasticscalegatein{}.
When no tuple is retrieved, either because no tuple is ready or because $\generalopplusinst_j$ belongs to the pool and is thus not connected to \elasticscalegatein{}, exponential backoff prevents the thread from creating contention on \elasticscalegatein{}.
Instances can promptly start retrieving tuples once provisioned and connected to \elasticscalegatein{}, thus invoking method \texttt{processVSN}, while decommissioned/pool instances will create negligible contention on \elasticscalegatein{}, and stop invoking the \texttt{processVSN}/\texttt{forwardVSN} methods, as per Cond. 1.

The \texttt{setup} method returns both \elasticscalegatein{} and \elasticscalegateout{}, for the latter to be fed by upstream and to feed downstream instances, respectively.
As mentioned in~\autoref{sec:smps}, we focus our discussions on a single stateful operator for simplicity, but \xxx{} can be used to instantiate many (connected) operators within a query. 
Because of this, the $\{ESG_{out}\}=\texttt{setup}(ESG_{in},\generalopplus,m,n)$ variation can be used to create an \generalopplusinmath{} operator and connect it to a previously created one through \elasticscalegatein{} (i.e., the \elasticscalegateout{} of such upstream peer).

\begin{algorithm}[h]
\setcounter{algocf}{4}
\footnotesize
\SetAlgoLined
\DontPrintSemicolon
\SetKwInput{KwParameters}{Instance-local variables}
\SetKwProg{Proc}{Function}{}{}
\KwParameters{}
$q[]$ \tcp{Queues holding reconfiguration messages}
$T[]$ \tcp{$\tau$ of the latest tuple added by source $i$}
\BlankLine
\Proc{\FuncSty{add\xxx{}}($t,i$)}
{
    $T[i]\xleftarrow{}t.\tau$\;\label{alg:modifiedass:lasttau}
    \While{$q[i]$.\FuncSty{size()}$>0$}{
        $x \xleftarrow{} q[i]$.\FuncSty{pop()}\;
        \elasticscalegatein{}.\FuncSty{add($\langle T[i],\text{control}, \left[ x.e, x.\mathbb{O}, x.\opmappingfun \right] \rangle$,$i$)}\;
    }
    \elasticscalegatein{}.\FuncSty{add($t$,$i$)}\;
}
\caption{Method \FuncSty{add\xxx{}}.}
\label{alg:modifiedadd}
\end{algorithm}

\begin{algorithm}[h]
\footnotesize
\SetAlgoLined
\DontPrintSemicolon
\SetKwProg{Proc}{Function}{}{}

\Proc{\FuncSty{prepareReconfig}($t$)}
{
    \If(\tcp*[h]{the reconfiguration id carried by $t$ is greater than \generalopplusinmath{}'s one}){$t.\varphi[1]>e$}{
        $e^* \xleftarrow{} t.\varphi[1]$ \tcp{set reconfig. parameters}
        $\mathbb{O}^* \xleftarrow{} t.\varphi[2]$\;
        $\opmappingfun^* \xleftarrow{} t.\varphi[3]$\;
        $\gamma \xleftarrow{} t.\tau$
    }
}
\caption{Method~\FuncSty{prepareReconfig}~used in~\autoref{alg:o_vsn_elastic}, to set the instance-local variables needed during a reconfiguration.}
\label{alg:preparereconfig}
\end{algorithm}

\begin{table*}[t]
\caption{\draft{Questions addressed in the evaluation, together with information about the operators used in the experiments.}}
\label{tab:evalq}
\begin{center}
\begin{tabular}{@{}lp{.6cm}p{7cm}p{7cm}l@{}}
\toprule
ID   & Setup & Question & Operator & Section                     \\ \midrule
$Q_1$ & Static & How do VSN (\xxx{}) and SN (Flink) compare for established baselines such as \textit{wordcount}? & $\aggplusop$ implementing \texttt{wordcount} and \texttt{paircount} (a variation that counts distinct pairs of words) & \autoref{ssc:q1} \\ 
$Q_2$ & Static & What is the maximum throughput/minimum latency for VSN/SN setups in \xxx{} and Flink? & Operator~\ref{ex:o2} (cf. Appendix~\ref{apx:examples}), processing tweets collected during October 2018. & \autoref{ssc:q2} \\ 
$Q_3$ & Static & How does \xxx{} compare with ad-hoc stateful operator implementations such as ScaleJoin (Operator~\ref{ex:j})? & ScaleJoin (Operator~\ref{ex:j}, cf. Appendix~\ref{apx:examples}) with $\WA$/$\WS$ set to $\delta$ and 5 minutes, resp., running the benchmark from~\cite{scalejoin}. & \autoref{ssc:q3} \\
$Q_4$ & Elastic & How long does it take for \xxx{} to complete an elastic reconfiguration? & ScaleJoin (Operator~\ref{ex:j}, cf. Appendix~\ref{apx:examples}) with $\WA$/$\WS$ set to $\delta$ and 5 minutes, resp. & \autoref{ssc:q4} \\
$Q_5$ & Elastic & What is \xxx{} performance under multiple reconfigurations?                         & ScaleJoin (Operator~\ref{ex:j}, cf. Appendix~\ref{apx:examples}) with $\WA$/$\WS$ set to $\delta$ and 1 minute, resp. & \autoref{ssc:q5} \\
$Q_6$ & Elastic & What is \xxx{} performance in real-word applications? & ScaleJoin (Operator~\ref{ex:j}, cf. Appendix~\ref{apx:examples}) with $\WA$/$\WS$ set to $\delta$ and 30 seconds, resp., analyzing financial trades & \autoref{ssc:q6} \\ \bottomrule
\end{tabular}
\end{center}
\end{table*}

\subsubsection*{Triggering elastic reconfigurations}

As discussed in~\autoref{sec:stretch}, elastic reconfigurations depend on the parameters $e^*$, $\mathbb{O}^*$, $\opmappingfun^*$, $\gamma$, which in \autoref{alg:o_vsn_elastic} are handled by methods \texttt{isControl} and \texttt{prepareReconfig}, detail next.

In \xxx{}, a reconfiguration is triggered by the external module sharing a new set of instances ids $\mathbb{O}^*$ and a mapping function $\opmappingfun^*$.
To deliver $\mathbb{O}^*$ and $\opmappingfun^*$ to \generalopplusinmath{} instances, \xxx{} encapsulates them in a special control tuple.
Method \texttt{isControl} distinguishes regular from control tuples based on the attributes carried in their metadata.

It should be noted that, even if regular as well as control tuples can be delivered by $U_{i,j}$ instances and the controller, respectively, \elasticscalegatein{} still expects each of its sources to deliver tuples in timestamp order. To fulfill such a condition,
\xxx{} defines one dedicated control queue per upstream instance (as shown in~\autoref{fig:stretch}), and wraps the \texttt{add} method of \elasticscalegatein{} within the method \texttt{addSTRETCH}, shown in \autoref{alg:modifiedadd}.
With this method, \xxx{} tracks the last timestamp $\tau$ forwarded by each upstream instance (\autoref{alg:modifiedadd} L\ref{alg:modifiedass:lasttau}) and, by having the \texttt{reconfigure} method add the next $\mathbb{O}^*,\opmappingfun^*$ in each control queue, creates and forwards a control tuple carrying timestamp $\tau$ and $\mathbb{O}^*,\opmappingfun^*$ in its metadata.
Control tuples can then be processed as shown in \autoref{alg:preparereconfig}. 

\subsubsection*{Formal Guarantees}

\begin{theorem}
\label{thm:atomicity}
Each reconfiguration that is applied based on~\autoref{alg:o_vsn_elastic}, \autoref{alg:modifiedadd}, and \autoref{alg:preparereconfig} takes place atomically and exactly once.
\end{theorem}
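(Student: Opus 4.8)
The plan is to argue two things separately: \emph{atomicity} (a reconfiguration, once triggered, is applied by all instances as a single indivisible transition, with no instance observing a partially-applied state), and \emph{exactly-once} (no reconfiguration is skipped, and none is applied twice). I would first fix notation for a single transition from epoch $e$ to $e^*$, recalling from~\autoref{sec:stretch} that subsequent transitions follow the same logic, so it suffices to reason about one.

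For \textbf{exactly-once}, the argument proceeds in steps. First, I would observe that the external module injects the pair $\langle\mathbb{O}^*,\opmappingfun^*\rangle$ into each per-upstream control queue, and that \texttt{addSTRETCH} (\autoref{alg:modifiedadd}) drains these queues and turns each entry into exactly one control tuple carrying a fresh epoch id $e^* > e$, inserted into \elasticscalegatein{} in timestamp order (using the last timestamp $T[i]$ forwarded by that source). Since \elasticscalegatein{} delivers every tuple exactly once to every reader (Def.~\ref{def:xin}), each \generalopplusinstinmath{} instance sees the control tuple for $e^*$ exactly once. Second, \texttt{prepareReconfig} (\autoref{alg:preparereconfig}) guards its assignment with the test $t.\varphi[1] > e$: an instance that has already advanced past epoch $e$ (i.e.\ already set $e \xleftarrow{} e^*$ at L\ref{alg:o_vsn_elastic:adaptstop}) ignores a stale control tuple, so a reconfiguration cannot be re-armed after it has been applied. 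Third, the trigger itself (L\ref{alg:o_vsn_elastic:reconfigif}) fires exactly once per armed reconfiguration: it requires $W > \overline{W} \wedge W > \gamma$; the first time a regular tuple pushes the watermark strictly above $\gamma$ this succeeds, and immediately afterwards L\ref{alg:o_vsn_elastic:adaptstop} sets $e \xleftarrow{} e^*$, so the next control tuple for $e^*$ (if any duplicate existed) would fail the $t.\varphi[1] > e$ guard — hence the reconfiguration body runs once. I would invoke Lemma~\ref{lem:addissafe} / Lemma~\ref{lem:deliveringoutputs} only to note that the watermark is monotone and well-defined, so ``the first tuple with $W > \gamma$'' is unambiguous.

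For \textbf{atomicity}, the key is the barrier \texttt{waitForInstances($\mathbb{O}$)} at L\ref{alg:o_vsn_elastic:barrier}. Because $\X$ objects deliver the \emph{same} non-decreasing watermark stream to all readers (Def.~\ref{def:xin}), every instance in $\mathbb{O}$ reaches the barrier at the same logical point — after having processed, with the old $\opmappingfun$, exactly the window instances whose right boundary is $\le \overline{W}$, and none of those beyond (this is exactly the invariant established in the proof of~\autoref{thm:vsndynamic}). Thus no instance crosses the $e \to e^*$ boundary while another is still on the old side; from the point of view of the downstream $\X_{out}$ and of the shared state $\sigma$, the switch of $\opmappingfun$, $\mathbb{O}$, and the (de)provisioning of readers/sources all occur within the barrier-delimited region, i.e.\ atomically with respect to tuple processing. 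Within that region, the \texttt{if}-guards on L\ref{alg:o_vsn_elastic:addSources} and L\ref{alg:o_vsn_elastic:rmSources} use the single-winner semantics of \texttt{addSources}/\texttt{addReaders} (resp.\ \texttt{removeReaders}/\texttt{removeSources}) from Table~\ref{tab:xapi}, so the structural change to $\X_{in}$ and $\X_{out}$ is performed once, by one instance, in the fixed order ($\X_{out}$ before $\X_{in}$ for provisioning, $\X_{in}$ before $\X_{out}$ for decommissioning) — this ordering, combined with Lemma~\ref{lem:addissafe}, ensures a joining instance is attached to both structures before it can emit, and a leaving instance is detached from both before it stops, so no reader ever observes an inconsistent source set. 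Finally all instances assign $\{e,\mathbb{O},\opmapping\} \xleftarrow{} \{e^*,\mathbb{O}^*,\opmapping^*\}$ and only then resume processing $t$, completing the indivisible transition.

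The main obstacle I expect is the atomicity argument's interaction with the shared, concurrently-updated $\sigma$: I must rule out the scenario where, during the barrier region, one instance's in-flight state update for some key $k$ races with the $\opmappingfun \to \opmappingfun^*$ switch and the (de)registration on $\X_{out}$ — i.e.\ a tuple for $k$ getting processed by the old owner after the new owner has already taken over, or an output tuple for $k$ being emitted to a stale source set. Closing this cleanly requires leaning on the already-proved invariant from~\autoref{thm:vsndynamic} (each key is touched by exactly one instance at a time, and the handover point coincides with the barrier) plus Cond.~1 (pool/decommissioned instances stop invoking \texttt{processVSN}/\texttt{forwardVSN}) and Cond.~2 (all instances share the same $\{e^*,\mathbb{O}^*,\opmappingfun^*,\gamma\}$); the subtlety is making precise that ``the barrier region'' is a well-defined serialization point given that $\X$ only guarantees a common \emph{watermark} stream, not a common tuple interleaving — which is why the argument must hinge on watermark monotonicity rather than on tuple order.
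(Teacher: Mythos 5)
Your proposal is correct and follows essentially the same route as the paper's (much terser) proof: exactly-once, same-order delivery of control tuples by \elasticscalegatein{}, the epoch guard in \texttt{prepareReconfig}, the single-winner semantics of \texttt{addSources}/\texttt{removeReaders}, and the barrier at L\ref{alg:o_vsn_elastic:barrier} leaning on the invariant from~\autoref{thm:vsndynamic}. The only point the paper makes that you leave implicit is the case of several \emph{distinct} armed reconfigurations: since all control tuples reach every instance in the same order, the latest $e^*$ seen before the trigger is the same for all instances, which is what guarantees they all apply the same (single) reconfiguration.
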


\begin{proof}[Proof]
As shown in~\autoref{alg:o_vsn_elastic}, all $\generalopplusinst_j$ perform an elastic reconfiguration based on their local parameters $e^*$, $\mathbb{O}^*$, $\opmappingfun^*$, and $\gamma$, with exactly one instance succeeding in connecting or disconnecting provisioned or decommissioned instances from $\X_{in}$/$\X_{out}$, respectively. 
These parameters are set by~\autoref{alg:preparereconfig} based on a control tuple $t$, delivered by \elasticscalegatein{} to all $\generalopplusinst_j$. Furthermore, if multiple such control tuples are delivered by \elasticscalegatein{}, all are delivered in the same order to $\generalopplusinst_j$ instances. Hence, all instances switch to the same $e^*$ at the same point in time. If multiple $e^*$ are delivered by multiple control tuples from \elasticscalegatein{}, the latest one is applied, and such latest one is the same for all $\generalopplusinst_j$ instances.
\end{proof}

\separate
\section{Evaluation}
\label{sec:eval}

We aim at comparing \xxx{} with other state-of-the-art baselines.
\draft{We place emphasis on stream joins since joins are among the most computationally heavy operators~\cite{ananthanarayanan2013photon}.
The choice of focusing on stream joins is also motivated by the existence of~\cite{scalejoin}, a custom highly-specialized implementation of the VSN parallelism offered by \xxx{} that, while supporting only a fraction of the stateful analysis that \xxx{} can support, provides the best performance figures at which \xxx{} can aspire.
To account also for other state-of-the-art baselines, we also compare with Apache Flink (or simply Flink).
First, we rely on an established baseline, word-count~\cite{miao2017streambox} and a variation counting distinct \textit{pairs} of words, to study the effects of different data duplication levels on throughput and latency metrics. We then study the maximum performance \xxx{} and Flink can offer for \generalopplusinmath{} operators with $I=2$ (i.e., including joins).
Since both ScaleJoin and \xxx{} support correct execution for order-sensitive functions, we assume that in SN setups input tuples are merged-sorted by both $\generalopplusinst_j$ and $d_j$ instances.
Our experiments seek answers to the questions found in~\autoref{tab:evalq}: $Q_1$-$Q_3$ assume a static setup, while $Q_4$-$Q_6$ focus on elastic setups and also on real-world applications.}

Experiments are run on a 2.10 GHz Intel(R) Xeon(R) E5-2695 CPU with 2 18-cores sockets, 72 logical threads with hyper-threading, and 64 GB memory. 
\xxx{} is implemented in Java and tested with Java HotSpot(TM) 64-bit Server VM.
For SN, we use Flink 1.6.0.

In the following, we present and discuss the results of the performance metrics of interest,
averaged over five runs. 
More concretely, we use input rate -- computed as the number of tuples/second (t/s) processed by an operator, throughput (for join operators) -- computed as the number of comparisons/second (c/s) sustained by the operator, and latency -- computed as the timestamp difference of each output tuple and the latest input tuple that contributes to it, 
while using \textit{flow control} to handle backpressure.
The implemented flow control mechanism is similar to that of Flink, 
in this case, putting a bound on \elasticscalegate{}'s size. 
For experiments concerning elasticity, we also report reconfiguration times -- measured as the time difference between the moment the controller invokes method \texttt{reconfigure} (cf.~\autoref{sec:implementation}) to the moment the reconfiguration is completed, and the number of threads used throughout the experiment. 

\subsection{\draft{Throughput and latency in VSN (\xxx{}) vs SN (Flink) for established baselines such as \textit{wordcount} ($Q_1$)}}
\label{ssc:q1}

\draft{
In our first experiment, we use the \texttt{wordcount} benchmark~\cite{miao2017streambox}, frequently used in applications like Sentiment Analysis ones~\cite{qian2013timestream}.
To account for different amounts of duplication, we also run a \texttt{paircount} variation, that counts pairs of rather than individual words.
In these experiments, \xxx{}'s shared-memory allows each input tuple to be shared with all the parallel threads, having each one responsible for one word/pair, while Flink's shared-nothing approach requires each tuple to be transformed in multiple output tuples, according to Corollary~\ref{cor:duplicationviamap}. 
The definitions for all the used operators are found in Appendix~\ref{apx:examples}. 
}

\draft{
We process a dataset consisting of 4.3 million tweets, between the 1$^{st}$ and 2$^{nd}$ of October 2018.
The input schema is defined as $\langle \tau, \left[ \text{user}, \text{tweet} \right] \rangle$.
With \texttt{wordcount}, each input tuple $t$ results in as many output tuples as words carried in $t.\varphi[2]$.
With \texttt{paircount}, each word in $t.\varphi[2]$ is paired and forwarded with its nearby words, up to a distance of 3, 10, and $+\infty$ for the Low (L), Medium (M), and High (H) duplication cases, respectively.
\texttt{Wordcount} gives the least amount of duplication, \texttt{paircount} H the greatest.
}

\draft{
Results are presented in~\autoref{fig:q0}. Flink performance is shown as a shaded region since, according to Corollary~\ref{cor:duplicationviamap}, a Map $\mapop{}$ is required to split each input tuple into distinct words/pairs, and such $\mapop{}$ can also be executed in parallel. The bottom and upper lines represent the minimum and maximum throughput/latency observed for any parallelism degree of $\mapop{}$ in $[1,36]$. As shown, in the \texttt{wordcount} case (the one with the least amount of duplication) \xxx{} and Flink are comparable. Despite its degradation (because of the contention on shared resources) after reaching its peak throughput, \xxx{} is nonetheless able to achieve $+17\%$ throughput and $-94\%$ latency. \xxx{}'s benefits become even more evident for the \texttt{paircount} cases, achieving $+137\%$, $+237\%$, and $+283\%$ throughput and $-89\%$, $-94\%$, and $-94\%$ latency for L, M and H, respectively. Note in this case the throughput is decreasing for an increasing duplication level (e.g., from counting words to counting pairs) since each input tuple (carrying a tweet) results in a higher number of pairs, thus resulting in a higher workload.
}

\begin{figure}[t]
    \centering
	    \includegraphics[width=\linewidth]{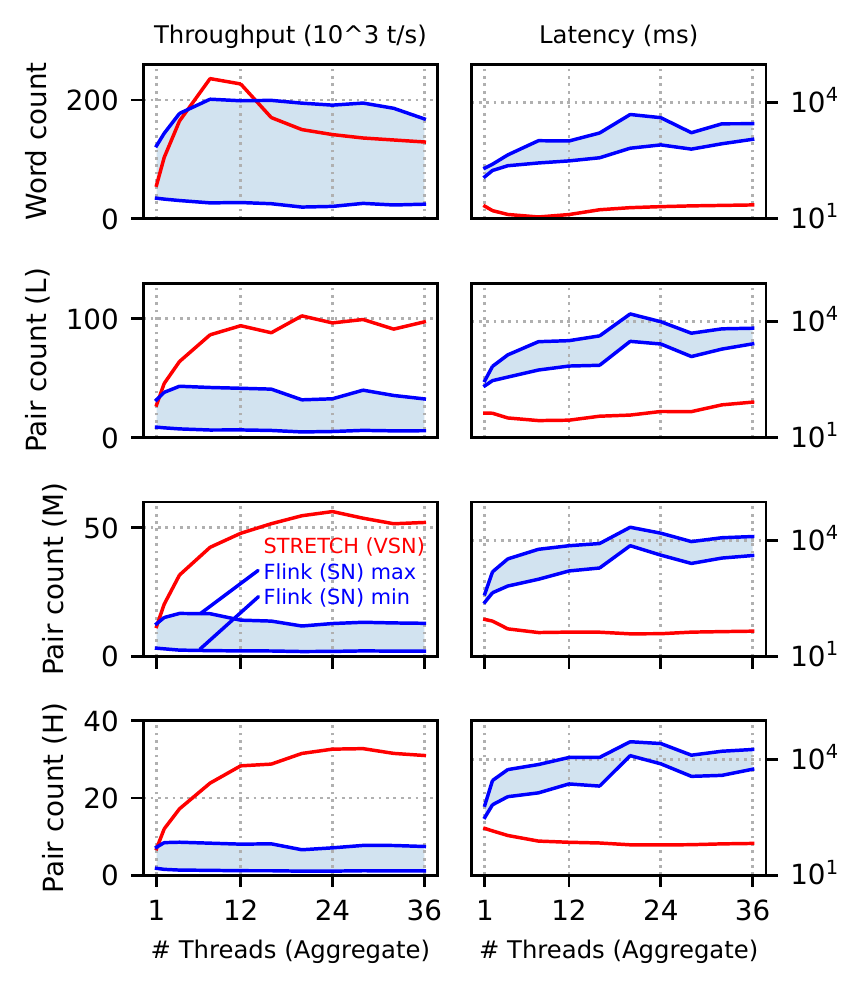}
    \caption{\draft{\xxx{} (VSN) and Flink (SN) Throughput/Latency for \texttt{wordcount} and \texttt{paircount}}} \label{fig:q0}
\end{figure}

\subsection{Maximum throughput and minimum latency in VSN (\xxx{}) vs SN (Flink) for an \generalopplusinmath{} with $I=2$ ($Q_2$)}
\label{ssc:q2}
\draft{This experiment compares \xxx{} performance using Flink as a baseline for SN operators that, like Joins, define two input streams.
Since Flink does not offer parallel execution of general Joins (but only EquiJoins), we evaluate the performance of an operator that simply forwards each incoming tuple, and thus measures the maximum throughput/minimum latency observed when the performance bottleneck is given by data sharing and sorting (its definition is found in Appendix~\ref{apx:examples}). 
We use the same dataset from $Q_1$.
}

\autoref{fig:modelComparison:twitter} shows the operators' scalability.
For an increasing $\Pi(\generalopplus)$, 
\xxx{}'s throughput decreases from approximately 120\,000 to 100\,000 t/s
due to the synchronization overheads induced by the higher $\Pi(\generalopplus)$.
Flink starts from a lower throughput and decreases faster,
from 40\,000 to 2\,000 t/s.
\xxx{} achieves from 3$\times$ to 50$\times$ better throughput.
Flink's latency, regardless of $\Pi(\generalopplus)$, is 
higher than 100 ms, while \xxx{}'s is always less than 30 ms.

\begin{figure}[t]
    \centering
	    \includegraphics[width=\linewidth]{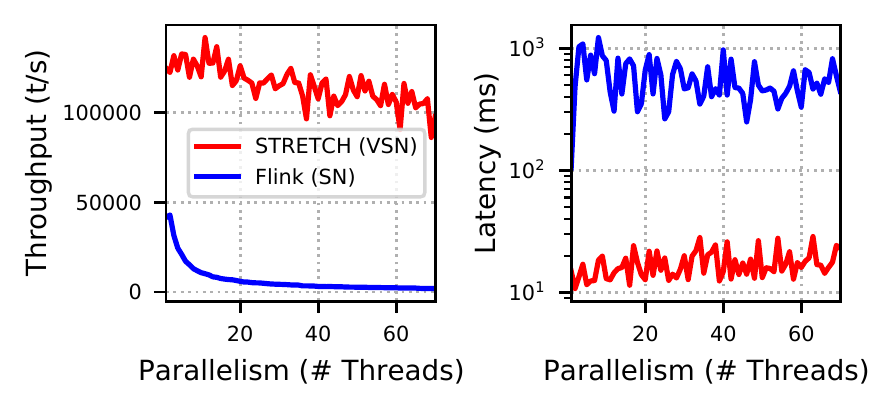}
    \caption{Max Throughput/min latency of \xxx{} (VSN) and Flink (SN) for a generic \generalopplusinmath{} with $I=2$ (Operator~\ref{ex:o2}).} \label{fig:modelComparison:twitter}
\end{figure}

\subsection{Throughput/latency of \joinplusopinmath{} in \xxx{} vs ad-hoc implementations ($Q_3$)}
\label{ssc:q3}

After studying Operator~\ref{ex:o2}'s maximum throughput/minimum latency, we focus now on \joinplusopinmath{} (Operator~\ref{ex:j}).
Since Flink supports only parallel equijoins, we compare \xxx{} performance with that of the original ScaleJoin implementation~\cite{scalejoin} and with an additional optimized single-threaded implementation, which we refer to as 1T.
The latter allows measuring the performance of an implementation that devotes as many CPU cycles as possible to data analysis rather than data communication when $\Pi(\joinplusop)=1$.

We follow the same benchmark used in ~\cite{scalejoin,roy2014low} to join two logical streams L and R.
L tuples' schema is $\langle \tau, \left[ x, y \right] \rangle$, where $x$ is type of \texttt{int} and $y$ is \texttt{float}.
R tuples' schema is $\langle \tau, \left[ a, b, c, d \right] \rangle$, where $a$, $b$, $c$, and $d$ are of type \texttt{int}, \texttt{float}, \texttt{double} and \texttt{boolean}, respectively.
$\WA$ and $\WS$ are set to $\delta$ (1 ms, as in Flink) and 5 minutes, respectively.
For each pair of tuples $t_L$ and $t_R$, an output tuple 
is produced if:
\begin{multline*}
t_R.\varphi[1] - 10 \leqslant t_L.\varphi[1] \leqslant t_R.\varphi[1] + 10\ \bigwedge \\ t_R.\varphi[2] - 10 \leqslant t_L.\varphi[2] \leqslant t_R.\varphi[2] + 10
\end{multline*}
Attributes $x$, $y$, $a$ and $b$ are randomly selected from a uniform distribution with interval $[1, 10\,000]$ which results on average in an output tuple every 250\,000 comparisons.

\autoref{fig:joinScalability} shows (i)~the average of the maximum sustainable input rate and the standard deviation for increasing $\Pi(\joinplusop)$, (ii)~the corresponding throughput, in terms of number of comparisons, and (iii)~the corresponding latency in logarithmic scale.
The throughput of \xxx{} and ScaleJoin when $\Pi(\joinplusop)=1$ is similar to 1T, but the latency for 1T is lower than the other two, because of \xxx{}'s and ScaleJoin's data communication costs.
\draft{As shown, despite being a general rather than a specialized operator, \xxx{}'s throughput still grows linearly with $\Pi(\joinplusop)$ and can match that of ScaleJoin (the latter shows some higher degradation caused by hyper-threading when $\Pi(\joinplusop)$ exceeds the 36 physical threads), despite a small overhead (in the order of 10 ms) in latency.}

\begin{figure}[t!]
	\centering
	\begin{subfigure}{.9\linewidth}
		\includegraphics[width=\textwidth]{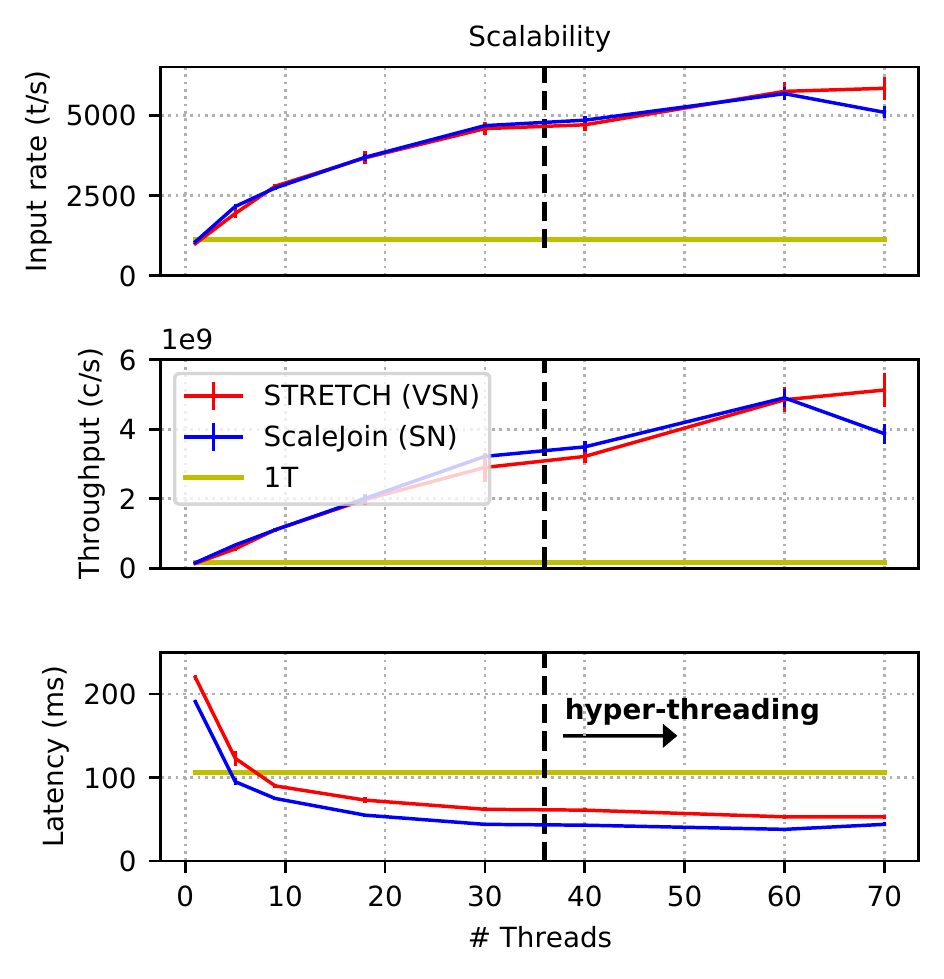}
	\end{subfigure}
	\caption{\small The performance of \xxx{} compared with those of ScaleJoin (SN) and 1T for \joinplusopinmath{}.}
	\label{fig:joinScalability}
\end{figure}

\subsection{\xxx{} -- elasticity overheads ($Q_4$)}
\label{ssc:q4}

We now move from static to elastic setups for the ScaleJoin benchmark (cf.~\autoref{ssc:q3}).
With $Q_4$, we measure the overhead and duration of individual elastic reconfigurations in isolation. 
We trigger one elastic reconfiguration per experiment using a simple controller.
The latter defines an upper, a target, and a lower CPU consumption threshold of 90\%, 70\%, and 45\%, respectively.
When the current processing load of active threads exceeds the upper threshold, the smallest amount of new threads needed to bring the average processing capacity below the target threshold is provisioned. 
When the current processing load of active threads is below the bottom threshold, the largest amount of underutilized threads that keep the average processing capacity below the target threshold is decommissioned.

To evaluate the elasticity of the framework, we increase (decrease) the load after filling the window and add (remove) threads while measuring the latency, throughput, and reconfiguration time.
For the provisioning experiments, we start with an input rate set to 70\% of the maximum rate sustainable by the corresponding number of $\joinplusop$ threads.
After 6 minutes, when the window is full and the system is stable, we increase the rate to 120\% of the maximum sustainable rate and therefore trigger an epoch switch.
For the decommissioning experiments, we start with 70\% of the maximum sustainable rate and, after 6 minutes, decrease the rate to 30\%.
\draft{Table~\ref{tab:elastic_threads} summarizes the various provisioning/decommissioning experiments. For each number of starting threads, it shows the resulting number of threads after provisioning/decommissioning or a ``-" when the corresponding number of starting threads does not allow for a provisioning/decommissioning action (e.g., it shows ``-" in post-decommissioning for 1 starting thread since no thread can be decommissioned in such a case).
The reconfiguration times for each action are shown on the left side of~\autoref{fig:reconfigtimes}. Each reconfiguration time is reported on the X axis based on the number of threads before the reconfiguration (e.g., the provisioning point, shaped as a circle, on 30 threads indicates it took approximately 12 ms to switch from 30 to 52 threads).
To stress that each reconfiguration is triggered for a set of threads in which no thread has more spare resources than others (i.e., when the load is balanced), we report on the right-side of~\autoref{fig:reconfigtimes} the coefficient of variation percentage of threads' workload (in this case one point for each number of starting threads before a provisioning or a decommissioning reconfiguration).
All in all, all reconfiguration times are negligible, always lower than 40 ms (an higher reconfiguration time is observed when $\Pi(\joinplusop)$ grows beyond the number of threads of a single socket), and there is at most 2\% of load imbalance among threads.
}

\begin{table}[h]
    \scriptsize
    \centering
    \caption{\draft{$\Pi(\joinplusop)$ values for~\autoref{fig:reconfigtimes}}}
    \label{tab:elastic_threads}
    \begin{tabular}{@{}lcccccccc@{}}
        \toprule
        starting $\Pi(\joinplusop)$ & 1 & 5 & 9 & 18 & 30 & 40 & 60 & 70\\
        \midrule
        $\Pi(\joinplusop)$ post-provisioning & 2 & 9 & 16 & 31 & 52 & 69 & - & - \\
        $\Pi(\joinplusop)$ post-decommissioning & - & 2 & 3 & 7 & 12 & 17 & 25 & 30 \\
        \bottomrule
    \end{tabular}
\end{table}

\begin{figure}[h]
	\centering
	\includegraphics[width=\linewidth]{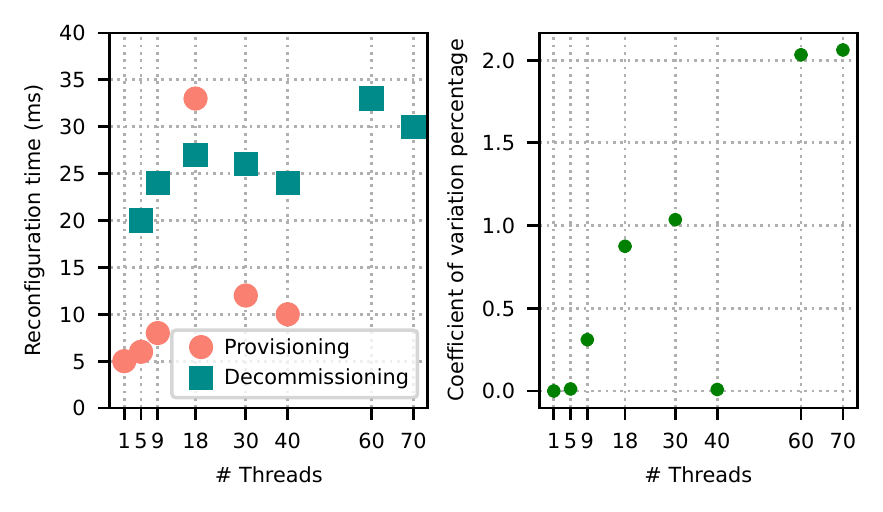}
	\caption{\draft{Reconfiguration times and coefficients of variation for provisioning/decommissioning elastic reconfigurations.}}
	\label{fig:reconfigtimes}
\end{figure}

\autoref{fig:provdecexample} shows the effect of increasing or decreasing the workload for \xxx{} and ScaleJoin, and consecutively provisioning or decommissioning threads for \joinplusopinmath{} in \xxx{}, when $\Pi(\joinplusop)$ is initially set to 18. As shown for the provisioning, \xxx{} sustains higher rates when adding new threads, resulting in higher throughput without affecting the latency.
In the decommissioning procedure, when decreasing the workload, \xxx{} achieves the same throughput as ScaleJoin but shows lower latency, as expected based on the model in~\cite{gulisano2017performance}.

\begin{figure}[t!]
	\includegraphics[width=\linewidth]{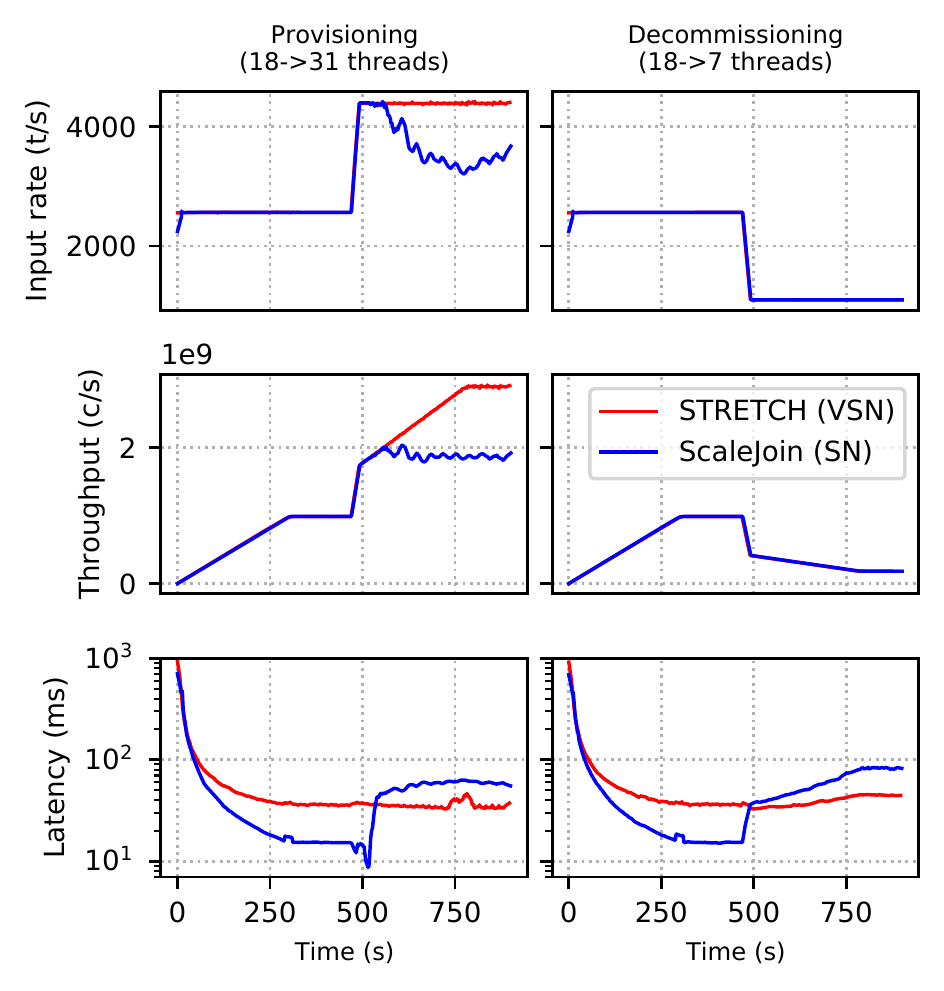}
	\caption{Input rate, throughput, and latency behavior during one provisioning/decommissioning elastic reconfiguration.}
	\label{fig:provdecexample}	
\end{figure}

\subsection{\xxx{} under multiple reconfigurations ($Q_5$)}
\label{ssc:q5}

We continue our evaluation with an experiment that aims at stress-testing \xxx{}'s reconfigurations.
First, we narrow the lower/upper thresholds to $[70\%,80\%]$ of the capacity.
Second, we modify the controller so that \joinplusopinmath{} cost is not matched only with its current CPU consumption, but accounts also for the pending and predicted workload, according to the model in~\cite{gulisano2017performance}.
Third, we decrease $\WS$ to one minute.
These changes imply more reconfigurations being frequently triggered (because of the narrower distance of lower/upper thresholds and because the controller proactively triggers reconfigurations on the predicted processing capacity in correspondence to rate changes) and higher sustainable input rates (because of the smaller $\WS$ and thus resulting comparisons per-tuple across windows).

Each experiment is 20 minutes long and consists of several sequential phases in which data tuples are injected with a constant rate, randomly chosen from the range $[500, 8\,000]$ t/s.
The length of each phase is at least 100 and at most 300 seconds (i.e., long enough for \joinplusopinmath{}'s window instances to reach the size corresponding to the phase's rate). 
The transition between phases is by an abrupt change in the input rate to stress timely reconfiguration. 
\autoref{fig:rateSeq} shows the measurements of a representative execution out of the ones conducted, to keep the illustration uncluttered. Results of additional experiments are shown in Appendix~\ref{apx:addexperiments}.

\autoref{fig:rateSeq}(a) shows the input rate throughout the experiment. 
As shown in \autoref{fig:rateSeq}(b), when the input rate increases, new processing threads are added accordingly. Likewise, in case of a decrease in the rate, unnecessary threads are removed without affecting throughput and latency.
\joinplusopinmath{}'s throughput is shown in~\autoref{fig:rateSeq}(c). 
The highlighted region indicates the upper and lower bound of computational capacity at each moment for the corresponding number of processing threads. 
When the workload exceeds the upper bound or falls short of the lower bound, the controller proceeds in applying reconfigurations.
\autoref{fig:rateSeq}(d) shows the average latency per second. 
There are a few spikes in the latency which are associated with abrupt changes in the input rate. However, as shown by the box plot on the right side, the controller keeps the average latency around 20 ms.
Also, as shown in the zoomed-in view, spikes are handled within 10 seconds.

\begin{figure}[ht!]
  \centering
  \includegraphics[width=\linewidth]{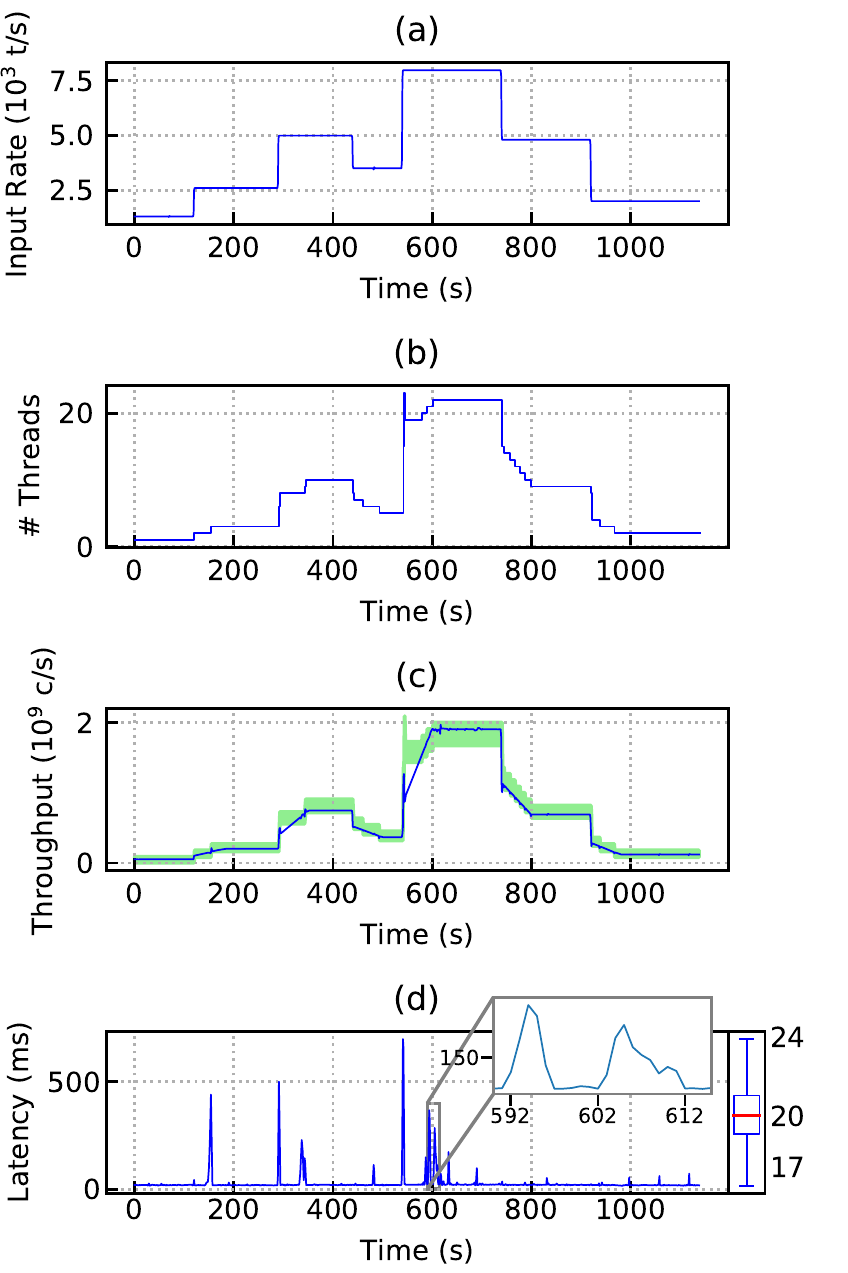}
  \caption{Results of adjusting the number of processing threads with respect to the input rate for synthetic data.}
  \label{fig:rateSeq}
\end{figure}

After showing \xxx{} results for this experiment, we want to further shed light on the benefits enabled by \xxx{}'s ultra-fast reconfigurations.
\autoref{fig:saso} shows a zoomed-in view of~\autoref{fig:rateSeq} at the time instants when the input rate increases at $290$ seconds.
We consider a transition duration starting at the moment the input rate changes until $\WS$ is exceeded (rectangular highlighted area in \autoref{fig:saso}).
Out of the transition duration, we show a baseline by extracting the desired number of processing threads for the specific input rate.
Within the transition duration, since \xxx{} can accommodate reconfigurations in negligible time, the desired number of threads can change gradually, following how the new rate affects the amount of computations until the window instances are filled with tuples coming with the new rate.

\begin{figure}[ht!]
    \centering
        \includegraphics[width=\linewidth]{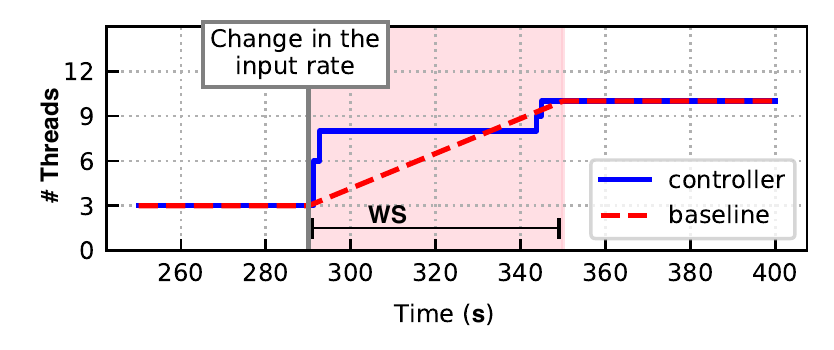}
    \caption{Zoomed-in view of the reconfigurations around second 290 from \autoref{fig:rateSeq}.}
    \label{fig:saso}
\end{figure}

\begin{figure*}
  \centering
  \includegraphics[width=\linewidth]{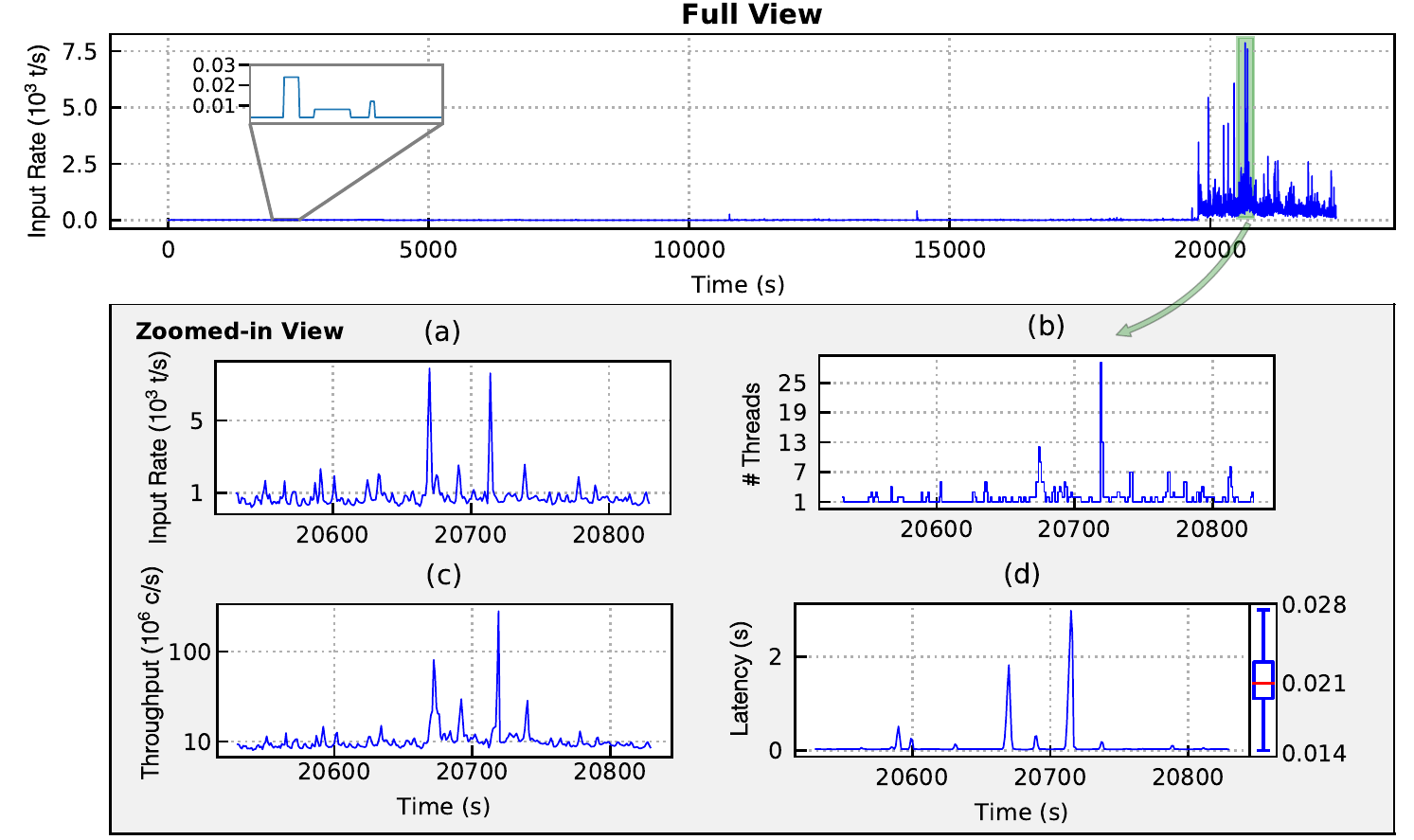}
  \caption{Results of adjusting the number of processing threads while processing the NYSE market data. The zoomed-in view indicates the details of the experiment during the time when the input rate is the highest.}
  \label{fig:financial}
\end{figure*}

\subsection{\xxx{} performance in a real-world setup ($Q_6$)}
\label{ssc:q6}

In this conclusive experiment, we evaluate \xxx{} using real-world data. We
use NYSE group reference data provided by the NYSE Market Data reporting authority (\url{ftp://ftp.nyxdata.com/}), containing six hours of trades registered in the NYSE and NASDAQ markets on July 30$^{th}$, 2018.
One major characteristic of this data is the abrupt and very frequent changes in the incoming rate.
This use-case examines correlations of the trades of the 
10 biggest companies of the day.
When looking into the characteristics of the data stream that these trades generate, 
we see its rate oscillates between $0$ and $8\,000$ t/s.
In this experiment, we run \joinplusopinmath{} as a self-Join, setting $\WS$ to 30 seconds and feeding twice a stream with schema $\langle \tau, \left[ \text{id}, \text{TradePrice}, \text{AveragePrice}\right] \rangle$,
where id, of type \texttt{string}, is the unique id of the company, TradePrice, of type \texttt{int}, is the price of the trade, and AveragePrice, of type \texttt{int}, is the average trade price of the corresponding company for the previous day.
The output schema $S_O$ is $\langle \tau, \left[ \text{l\_id}, \text{l\_price}, \text{r\_id}, \text{r\_price} \right] \rangle$, where l\_id, l\_price, r\_id, and r\_price are $t_L.\varphi[1]$, $t_L.\varphi[2]$, $t_R.\varphi[1]$ and $t_R.\varphi[2]$, respectively.

For the predicate function, we study the hedging (or negative correlation) of the stock prices among companies by computing the normalized distance of the stock price for each tuple with respect to its average price and then compare it with the other tuples. The normalized distance $ND_t$ of tuple $t$ is computed as 
$\frac{t.\varphi[2] - t.\varphi[3]}{t.\varphi[3]}$.
Consequently, the join results in an output tuple if:
$$t_R.\varphi[1] \neq t_L.\varphi[1] \bigwedge -1.05 \leq \frac{ND_{t_R}}{ND_{t_L}}$$

As shown in \autoref{fig:financial}, 
the hedge predicate can be run by a small number of threads most of the time, which also emphasizes the need for adjusting processing threads at runtime to utilize the resources.
Nonetheless, abrupt changes in the data rate require resource adjustments.
The zoomed-in view of \autoref{fig:financial} highlights the results for the time interval in which the highest peak of the input rate occurs.
\autoref{fig:financial}(a) indicates the input rate with the highest peak $\sim7\,600$ t/s.
The corresponding throughput and latency are also shown in \autoref{fig:financial}(c) and \autoref{fig:financial}(d), respectively. As shown in \autoref{fig:financial}(d), \xxx{} keeps the latency low (on average at 21 ms for the zoomed-in view, as shown by the box plot \autoref{fig:financial}(d), and 1 ms for the whole experiment) by frequently provisioning or decommissioning instances according to changes in the input rate. 

\separate
\section{Related Work}
\label{sec:rw}

\draft{Scalable and elastic stream processing are discussed in several related works (e.g.,~\cite{mayer2015predictable,mai2018chi,gulisano2012streamcloud}).
For a systematic review, we refer to \cite{gulisano2018enc}.
\xxx{} does not focus on a specific stateful operator~\cite{theodorakis2020lightsaber} nor on a specific policy to trigger reconfigurations~\cite{de2017proactive}.
Instead, it shows how shared-memory can be seamlessly leveraged to scale up stateful analysis while virtualizing the common APIs for SN parallelism (i.e., without altering how queries are usually composed and still being able to scale applications out via SN parallelism), something not previously discussed in the literature.}
\draft{In the following, we place particular focus on works about scale-up servers and intra-operator elasticity. Acknowledging that intra-operator elasticity is a way to boost performance in scale-up servers, we mostly discuss solutions that are not elastic when covering works dedicated to scale-up servers. For space reasons, we do not discuss elastic approaches not tailored to stream processing (e.g.,~\cite{de2016reconfiguration}).}

\draft{Focusing on scale-up servers, a first distinction can be made between works that aim at better leveraging the memory shared among CPU threads~\cite{zhang2019briskstream,mencagli2021windflow,theodorakis2020lightsaber} and works that, orthogonally to \xxx{}, focus on shared-memory in hybrid architectures, with a special focus on integrated~\cite{zhang2021fine,zhang2020finestream} and discrete~\cite{de2019gasser} CPU-GPU ones.}

\draft{For the former, differently from \xxx{}'s intra-operator memory sharing, \cite{zhang2019briskstream} focuses on the complementary aspect of inter-operator/intra-query memory sharing, to optimize operator placement for producer-consumer pairs e.g., based on NUMA distance. Complementary intra-query optimizations are also discussed in~\cite{mencagli2021windflow}, by formalizing basic stream processing tasks and defining composition/transformation rules that can be used to jointly boost performance metrics such as throughput and latency. An approach closer to \xxx{} is discussed in~\cite{theodorakis2020lightsaber}, with a special focus on streaming aggregation. Such an approach allows trading a higher degree of customization for the richer semantics and elasticity offered by \xxx{}.}

\draft{Focusing now on elasticity for intra-operator parallelization, we discuss related work based on goals and properties such as the number of threads that can change per reconfiguration, the roles of state transfer and the triggering mechanism, the support for correctness, and the reaction time vs overhead of frequent reconfigurations.}

Regarding changes in the number of threads, related works provide techniques for provisioning/decommissioning one thread at a time (e.g., \cite{schneider2009elastic,hochreiner2016elastic}) or more threads (e.g., \cite{heinze2014auto}), as in our case. Differently from \xxx{}, nonetheless, in \cite{heinze2014auto} the execution of operators being reconfigured is halted to ensure all required state transfers are completed before the processing of tuples resumes.

In connection to state transfers, a common challenge is that of deciding how to balance the load of a parallel operator, which is usually a combinatorial problem related to packing (cf. \cite{gulisano2012streamcloud,balkesen2013adaptive,heinze2013elastic,heinze2014auto} and references therein).
\draft{Some related work proposed techniques to make decisions that can reduce latency spikes~\cite{heinze2014latency}, recreating small states by replaying past tuples rather than serializing/deserializing states~\cite{gulisano2012streamcloud}, or by distributing the work to nodes through hashing, in ways that minimize the changes when rehashing~\cite{Gedik2014}. 
Since \xxx{} allows to re-balance work without any state transfer nor serialization/deserialization, it makes these issues orthogonal and existing techniques complementary.}

\draft{Regarding correctness, we note that such an aspect has been formalized by e.g., \cite{scalejoin,cederman2013concurrent} and is also referred to as \textit{determinism}~\cite{scalejoin,cederman2013concurrent}, \textit{safety}~\cite{Hirzel2014CSP} and \textit{semantic transparency}~\cite{gulisano2012streamcloud}. Here we show sufficient conditions for correct execution, also under reconfigurations.}

\draft{Focusing now on the issue of when to trigger reconfigurations, and the related trade-offs between overheads and time to react, related works can be distinguished into reactive or proactive approaches (cf.~\cite{gulisano2012streamcloud,martin2014scalable,hochreiner2016elastic,zacheilas2015elastic,Cardellini2016Elastic,de2017proactive,kumbhare2015reactive} and references therein). 
Earlier works proposed reactive strategies, e.g. with 
threshold-based rules based on CPU utilization~\cite{gulisano2012streamcloud, Fernandez2013Integrating, Heinze2014Scaling}, heuristic-based algorithms~\cite{Gedik2014Elastic}, and model-based approaches to enforce latency constraints~\cite{lohrmann2015elastic}.}
To make timely decisions, proactive approaches have gained more attention over the years. Model-based proactive methods such as~\cite{de2017proactive}, for instance, use a limited future time horizon to choose reconfigurations for timely execution, with two different resource usage characteristics to achieve high throughput and low latency.
Similar to~\cite{lohrmann2015elastic}, the technique proposed in~\cite{de2017proactive} uses queuing theory to model stream operators as G/G/1 queues.
More advanced queuing systems to model SPEs are discussed in~\cite{Cooper2019Queuing} to increase the accuracy of the results. However, the technique proposed in~\cite{Cooper2019Queuing} does not consider quality of service requirements and it is not optimized for low latency workload.
All in all, various complementary triggering mechanisms can be combined with \xxx{}.
As discussed by \cite{de2017proactive} with respect to SASO properties (i.e., Stability, Accuracy, Settling time, and Overshoot), elasticity builds on top of contrasting objectives. \xxx{} can imply better margins due to its ultra-fast reconfigurations.

\separate
\section{Conclusions and Future Work}
\label{sec:conc}

We introduced \xxx{}, a model as well as an implemented prototype for VSN parallel and elastic execution of stateful stream processing.
We started discussing how SN parallelism can potentially suffer data duplication and state transfer overheads, and how such overheads can be avoided when the instances of a parallel operator share memory.
For \xxx{} approach to be general while encapsulating the semantics of common stateful operators, we have introduced a generalized stateful operator \generalopplusinmath{}, and later discussed how VSN can be enabled for \generalopplusinmath{}'s instances by having the latter share their input tuples, output tuples, and state.
Together with an algorithmic description, we have provided a fully implemented prototype, which builds on the state-of-the-art stream-processing-tailored data structure ScaleGate.
Together with our theoretical contributions, we also provided an exhaustive evaluation, based both on synthetic and real data and comparing with various baselines.

To the best of our knowledge, this is the first work blending a formal approach with correctness proofs about parallel/elastic stateful stream processing with a model and prototype complementary to that of SN-based solutions. \xxx{} can pave the road for hybrid approaches that extend from intra- to inter-node parallel and elastic solutions.

\section*{Acknowledgments}
Work supported by the Swedish Foundation for Strategic Research (SSF) -- project ``FiC'' grant GMT14-0032 -- and the Swedish Research Council (Vetenskapsr\aa{}det) -- projects ``HARE'' grant 2016-03800, and ``PSI'' grant 2020-05094.

\separate
\bibliographystyle{IEEEtran}
\bibliography{refs}

\begin{IEEEbiography}[{\vskip -3\baselineskip\includegraphics[width=1in,height=1in,clip,keepaspectratio]{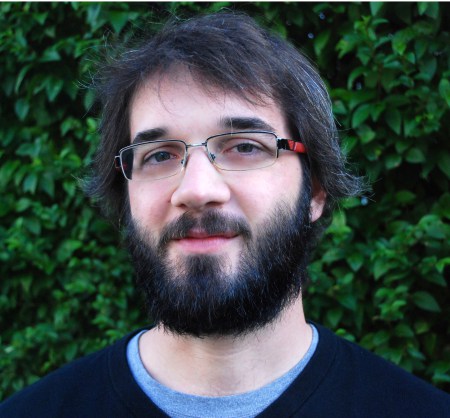}}]%
{Vincenzo Gulisano} is an Associate Professor at Chalmers University of Technology. His research focuses on data processing and distributed / parallel / elastic and fault-tolerant data streaming. Dr. Vincenzo Gulisano holds a PhD in Computer Science from the Polytechnic University of Madrid, Spain.
\end{IEEEbiography}
\vskip -2\baselineskip plus -1fil
\begin{IEEEbiography}[{\vskip -3\baselineskip\includegraphics[width=1in,height=1in,clip,keepaspectratio]{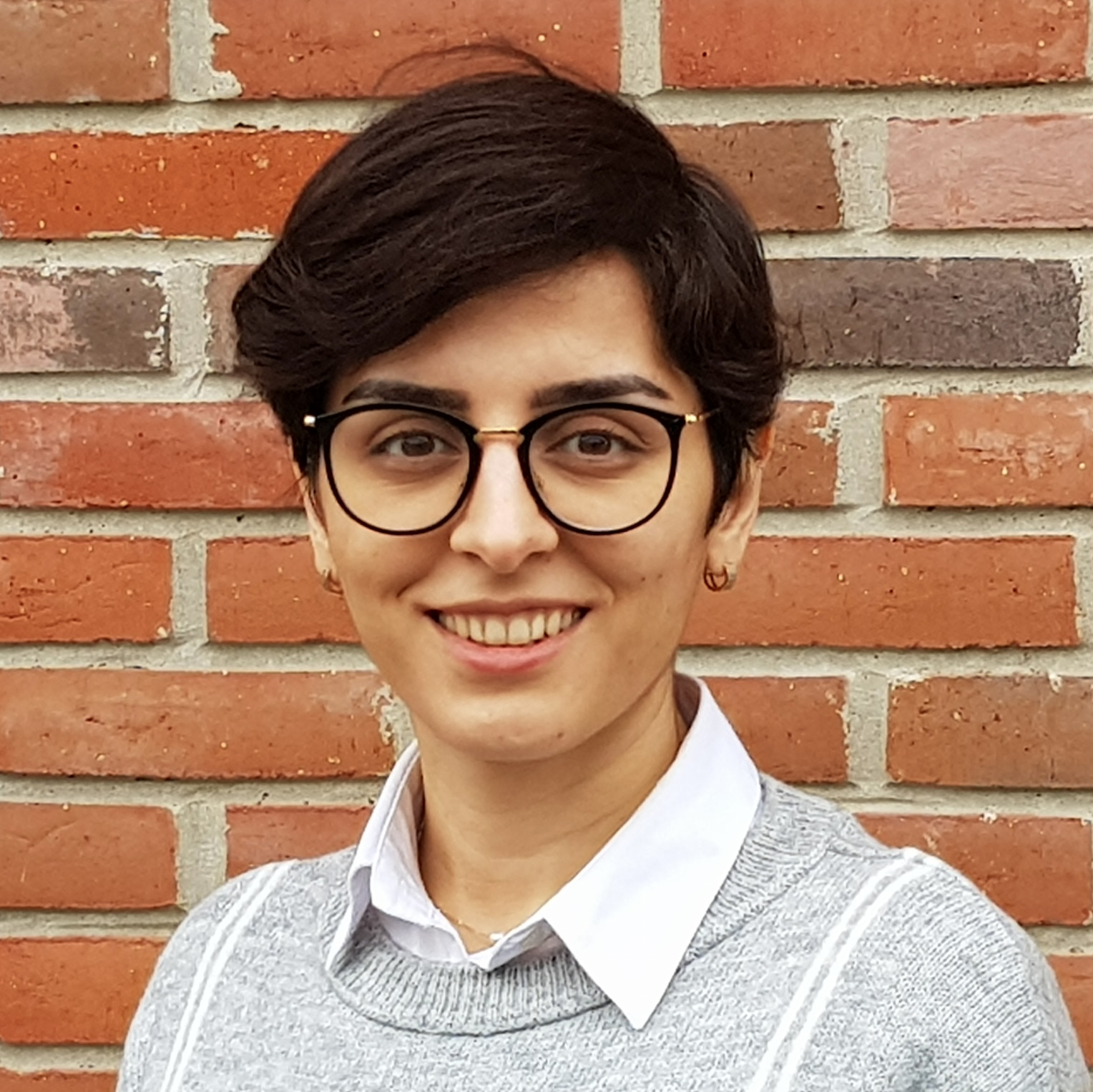}}]%
{Hannaneh Najdataei} received her BSc degree in Software Engineering and her MSc degree in Artificial Intelligence from Shiraz University in Iran. She is currently a PhD student at Chalmers University of Technology. Her research focuses on parallel programming, data stream processing and cyber–physical systems.
\end{IEEEbiography}
\vskip -2\baselineskip plus -1fil
\begin{IEEEbiography}[{\vskip -3\baselineskip\includegraphics[width=1in,height=1in,clip,keepaspectratio]{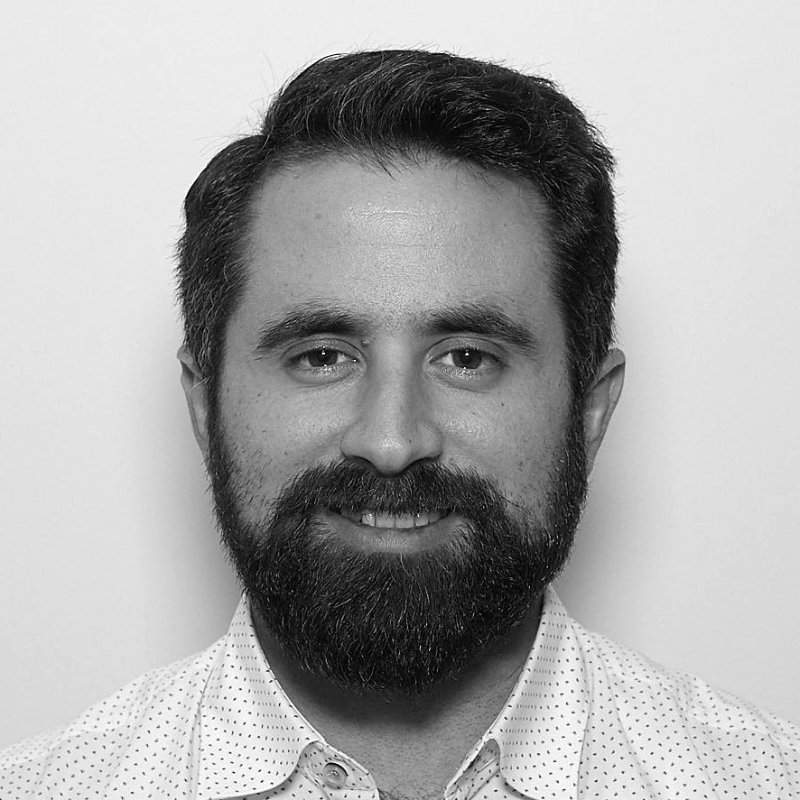}}]%
{Yiannis Nikolakopoulos} received the
B.Sc. in Mathematics from the National
and Kapodistrian University of
Athens, Greece and the Ph.D. degree
in Computer Science and Engineering
from Chalmers University of Technology,
Gothenburg, Sweden. Currently, he is a system software engineer at ZeroPoint Technologies AB. His research interests include memory management, memory compression, concurrent data structures and stream processing on multicore and many-core systems.
\end{IEEEbiography}
\vskip -2\baselineskip plus -1fil
\begin{IEEEbiography}[{\vskip -3\baselineskip\includegraphics[width=1in,height=1in,clip,keepaspectratio]{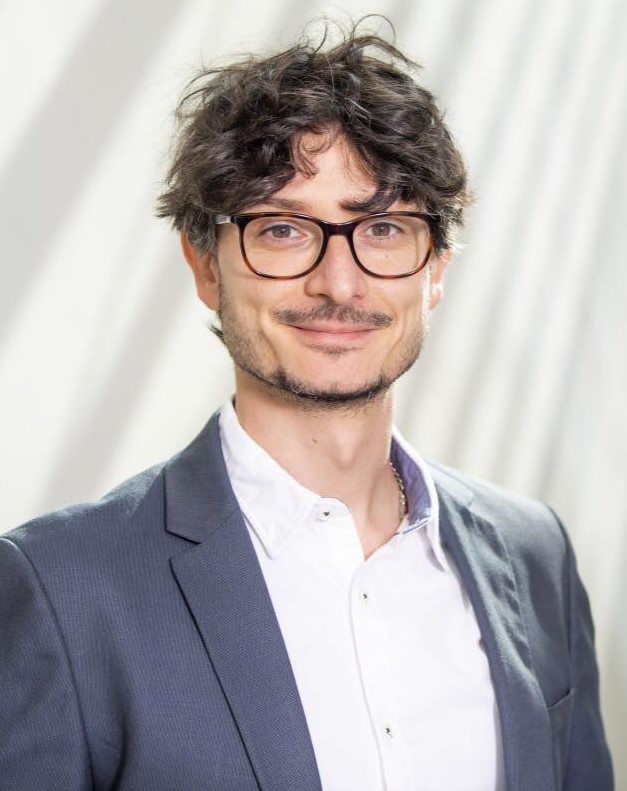}}]{Alessandro~V. Papadopoulos}
(Senior Member, IEEE) received the B.Sc. and M.Sc.
(summa cum laude) degrees in computer engineering
and the Ph.D. degree (Hons.) in information technology from the Politecnico di Milano, Italy, in 2008, 2010, and 2013, respectively.
He is an Associate Professor of Computer Science
with M{\"a}lardalen University, V{\"a}ster{\aa}s, Sweden. His
research interests include robotics, control theory,
real-time systems, autonomic computing.
\end{IEEEbiography}
\vskip -2\baselineskip plus -1fil
\begin{IEEEbiography}[{\includegraphics[width=1in,height=1in,clip,keepaspectratio]{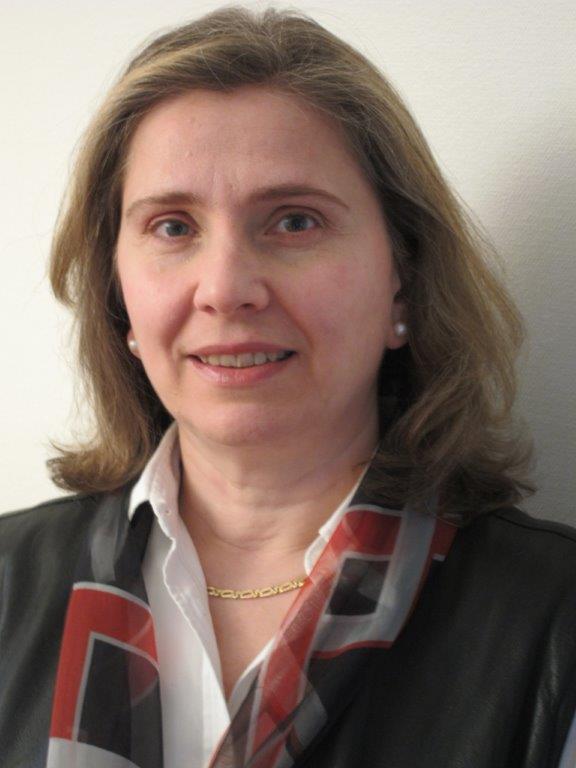}}]%
{Marina Papatriantafilou} is
Associate Professor at Chalmers University of Technology. Earlier, she was with the Max-Planck Institute for Computer Science, Saarbruecken and CWI, Amsterdam. She received her PhD degree from the Computer Science and Informatics Dept., Patras University and is a member of Network of National Contacts ACM-WE NeNaC.
Her research interests include:  efficient parallel, distributed stream processing and applications in multiprocessor, multicore and distributed, cyber-physical systems; synchronization, consistency, fault-tolerance.
\end{IEEEbiography}
\vskip -2\baselineskip plus -1fil
\begin{IEEEbiography}[{\includegraphics[width=1in,height=1in,clip,keepaspectratio]{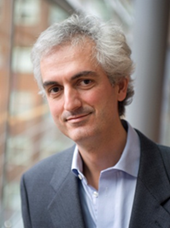}}]%
{\bf Philippas Tsigas} received the B.Sc. degree in mathematics and the Ph.D. degree in computer engineering and informatics from the University of Patras, Greece. He was at the National Research Institute for Mathematics and Computer Science, Amsterdam, The Netherlands (CWI), and at the Max-Planck Institute for Computer Science, Saarbrucken, Germany, before. At present, he is a professor at Chalmers University of Technology, Sweden. His research interests include concurrent data structures and algorithmic libraries for multiprocessor and many-core systems, communication and synchronization in parallel systems, power aware computing, fault-tolerant computing, autonomic computing, scalable data streaming.
\end{IEEEbiography}

\clearpage
\appendices
\section{Table of symbols (in alphabetical order)}
\label{apx:symbols}

{\small
\begin{tabular}{@{}lp{11cm}@{}}
\toprule
\textbf{Symbol} & \textbf{Description} \\
\midrule
$\langle \tau,\dots, \left[ \varphi[1], \varphi[2], \dots \right] \rangle$ & Combined notation for a tuple \\ 
$A(\WA,\WS,1,\keybysingle,\psipar,S,f_A,f_R)$ & Combined notation for Aggregate operator \\ 
$\aggplusop(\WA,\WS,1,\keybymulti,\psipar,S,f_A,f_R)$ & Combined notation for Aggregate operator that relies on $\keybymulti$ instead of $\keybysingle$ \\ 
$\gamma$ & Event time shared by \generalopplusinmath{} instances at which a reconfiguration takes place \\ 
$\delta$ & Smallest increment of event time of an SPE  \\ 
$D$ & Generic downstream operator/egress of $O$/$\generalopplus$  \\ 
DAG & Directed Acyclic Graph \\ 
$e$ & epoch during the execution of a certain operator \\ 
$f_A$ & Aggregate function of \aggop{} \\ 
$f_J$ & Join function of \joinop{} \\ 
\opmappinginmath{} & Function mapping keys to operators instances \\ 
$\keybymulti$ & Multi Key-by function that maps each tuple $t$ to an arbitrary number of key values \\ 
$f_O$ & Function used by \generalopplusinmath{} to produce the result from a window instance \\ 
$f_R$ & Reduce function of \aggop{} \\ 
$f_S$ & Function used by \generalopplusinmath{} to shift a window instance \\ 
$\keybysingle$ & Single Key-by function that maps each tuple $t$ to exactly one key value \\ 
$f_U$ & Function used by \generalopplusinmath{} to update a window instance upon reception of $t$ \\ 
\texttt{forward}/\texttt{forwardSN}/\texttt{forwardVSN} & Methods used by an operator/ingress instance to route tuples downstream (for $O$, \generalopplusinmath{} in SN, and \generalopplusinmath{} in VSN setups, respectively) \\ 
$J(\WA,\WS,2,\keybysingle,\psipar,S,f_J)$ & Combined notation for Join operator \\ 
$\joinplusop(\WA,\WS,2,\keybymulti,\psipar,S,f_J)$ & Combined notation for Aggregate operator that relies on $\keybymulti$ instead of $\keybysingle$ \\ 
\mapop{} & Map/FlatMap operator \\ 
$O(\WA,\WS,I,\keybysingle,\psipar,S,f_1,f_2,\ldots)$ & Combined notation for a generic stateful operator. Parameters $\WA$, $\WS$, $I$, $\keybysingle$, $\psipar$, and $S$ are covered in the table. Parameters $f_1,f_2$ refer to operator-specific functions to update $O$'s state and produce output tuples \\ 
$o_j$ & $j$-th instance of $O$ \\ 
$\generalopplus(\WA,\WS,I,\keybymulti,\opmapping,\psipar,S,f_U,f_O,f_S)$ & Combined notation for \xxx{}'s generalized operator \\ 
$\generalopplusinst_j$ & $j$-th instance of \generalopplusinmath{} \\ 
$\mathbb{O}$ & Set of parallel instances running for a certain during epoch $e$\\ 
$\Pi(X)$ & Parallelism degree of ingress/operator/egress $X$  \\ 
\texttt{process}/\texttt{processSN}/\texttt{processVSN} & Methods used by an operator instance to process each new input tuple (for $O$, \generalopplusinmath{} in SN, and \generalopplusinmath{} in VSN setups, respectively) \\ 
$q_{a,b}$ & Queue connecting operator instances (or ingress/egress) $a$ and $b$\\ 
$S$ & Schema of the tuples of the referenced stream  \\ 
$\opstate_j$ & State of $o_j/\generalopplusinst_j$ (when locally maintained at $o_j/\generalopplusinst_j$ \\ 
$\opstate$ & State of $\generalopplusinst_j$ (when shared by all $\generalopplus$ instances \\ 
SN & Shared Nothing \\ 
SPE & Stream Processing Engine \\ 
$t.\tau$, $t.\varphi$ & Timestamp and payload ($\ell$-th sub-attribute is $t.\varphi[\ell]$) of tuple $t$ \\ 
$U_i$ & The $i$-th generic upstream operator/ingress of $O$/$\generalopplus$  \\ 
$u_{i,j}$ & The $j$-th instance of $U_i$ \\ 
VSN & Virtual Shared Nothing \\ 
$w=\langle \zeta,l,k \rangle$ & Combined notation for a window instance $w$: state $\zeta$, left inclusive boundary $l$, and key $k$.\\ 
\watermarkofinmath{o_j} & Watermark of the $j$-th instance of operator $O$ at clock wall time $\omega$ \\ 
$\WA$ & Window advance \\ 
$\WS$ & Window size \\ 
$\psipar$ & Window Type maintained by a stateful operator (single or multi)  \\ 
$\X/\X_{in}/\X_{out}$ (Tuple Buffer) & Data object \xxx{} uses to connect $U_i$ with \generalopplusinmath{} and \generalopplusinmath{} with $D$ (cf.~\autoref{tab:xapi})\\ 
\bottomrule
\end{tabular}}
\clearpage
\section{Watermark example}
\label{apx:exampleofwatermark}

\begin{figure}[ht!]
  \centering
  \includegraphics[width=\linewidth]{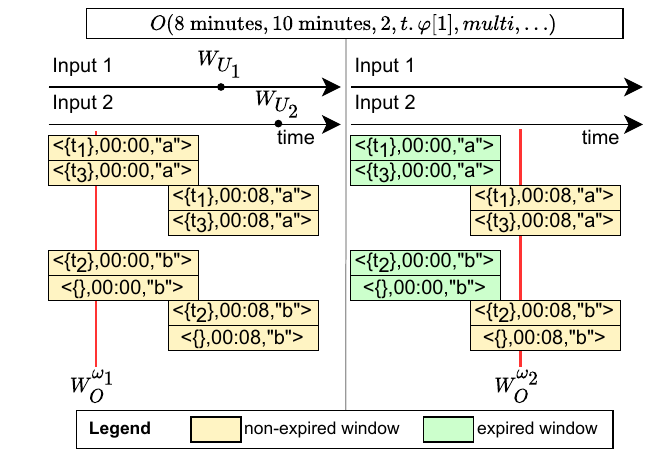}
  \caption{\draft{Internal state of operator $O$ before (left side of the figure) and after (right side of the figure) updating its watermark from $W_{O}^{\omega_1}$ to $W_{O}^{\omega_2}$.}}
  \label{fig:watermarkexample}
\end{figure}

\draft{
\autoref{fig:watermarkexample} shows how the internal state of Operator $O$ from~\autoref{fig:windows_examples} changes from wall-clock time $\omega_1$ to $\omega_2$, upon the processing of two tuples, one from each upstream peer, that carry new watermark values. As shown, none of the window instances maintained by $O$ before updating its watermark (left side of~\autoref{fig:watermarkexample}) is expired, since incoming tuples, with a timestamp greater than $W_{O}^{\omega_1}$, could still contribute to any of such window instances. Once $O$'s watermark changes to $W_{O}^{\omega_2}$, no new input tuple can contribute to window instances starting at 00:00. Hence, such windows are expired and their corresponding output tuple can be created and forwarded to $O$'s downstream peer(s).
}
\section{Example for Corollary~\ref{cor:duplicationviamap}}
\label{apx:appendixcorollary}

\begin{figure}[ht!]
  \centering
  \includegraphics[width=\linewidth]{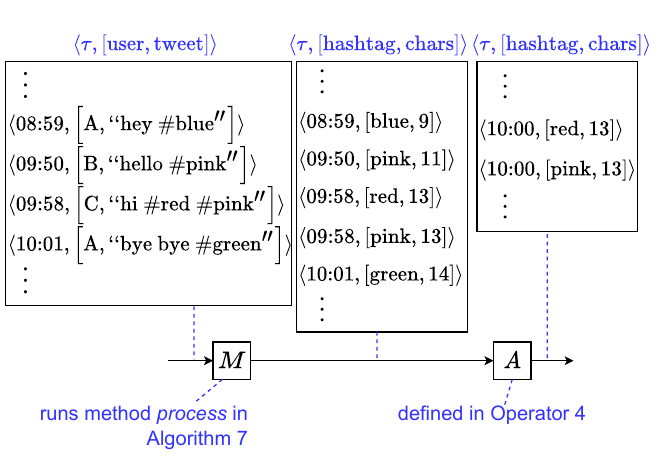}
  \caption{\draft{Sample execution of a $\mapop{}$ and an $\aggop{}$ that, together, implement the semantics of the running example from~\autoref{sec:introduction}}}
  \label{fig:corollary1example}
\end{figure}

\renewcommand*{\algorithmcfname}{Algorithm}
\begin{algorithm}[h]
\setcounter{algocf}{6}
\footnotesize
\SetAlgoLined
\DontPrintSemicolon
\SetKwInput{KwParameters}{Instance-local variables}
\SetKwInput{KwAuxiliaryFunctions}{Auxiliary functions}
\SetKwProg{Proc}{Function}{}{}
\SetKwProg{Class}{Class}{}{}
\BlankLine
\Proc{\FuncSty{process}($t$)}
{
    \lFor{$h \in $\FuncSty{ hashtags($t.\varphi[2]$)}}{
        \FuncSty{forward}($\langle t.\tau, \left[ h, \FuncSty{length}(t.\varphi[2]) \right] \rangle$)\label{ex:corollaryduplicateviamap_map:output} 
    }
}
\caption{\draft{Method \texttt{process} for the $\mapop{}$ used in~\autoref{fig:corollary1example} to compute the longest tweet  on a per-hashtag basis.}}
\label{ex:corollaryduplicateviamap_map}
\end{algorithm}

\renewcommand*{\algorithmcfname}{Operator}

\begin{algorithm}[h]
\setcounter{algocf}{0}
\footnotesize
\SetAlgoLined
\DontPrintSemicolon
\SetKwInput{KwParameters}{Instance-local variables}
\SetKwInput{KwAuxiliaryFunctions}{Auxiliary functions}
\SetKwProg{Proc}{Function}{}{}
\SetKwProg{Class}{Class}{}{}
\BlankLine
\nonl $\aggop(30m,60m,1,\keybysingle,\text{multi},\opmapping,S_O,f_A,f_R)$, where:\;  
\BlankLine
\Proc{$\keybysingle(t)$}
{
    \Return $t.\varphi[1]$\;
}
\Proc{$\opmappingfun(k)$}
{
    \Return \FuncSty{hash($k$)}\%$\Pi(\aggop)$ \;
}
\Class{$\zeta$}
{
    \FuncSty{long count}\;
}
\Proc(\tcp*[h]{Update count based on $t.\varphi[2]$}){$f_A(w,t)$}
{
    \If{$t.\varphi[2]$$>w.\zeta.$\FuncSty{count}}{
        $w.\zeta.$\FuncSty{count}$\xleftarrow{}$$t.\varphi[2]$\;
    }
    \Return \{$w.\zeta$\}\;
}
\Proc{$f_R(w,t)$}
{
    \Return $\{w.k,w.\zeta.\text{count}\}$\;
}
\caption{\draft{User-defined parameters for the $\aggop{}$ used in~\autoref{fig:corollary1example} to compute the longest tweet on a per-hashtag basis.}}
\label{ex:corollaryduplicateviamap_agg}
\end{algorithm}

\draft{
\autoref{fig:corollary1example} illustrates how a $\mapop{}$ and an $\aggop{}$ operator can be jointly used to implement the semantics of an $\aggplusop{}$, according to Corollary~\ref{cor:duplicationviamap}.
For each stream, the figure shows an excerpt of the tuples observed as input/output of the two operators.
$\mapop{}$ implements the \texttt{process} function presented in~\autoref{ex:corollaryduplicateviamap_map}, creating and forwarding one output tuple for each hashtag found in an input tuple (\autoref{ex:corollaryduplicateviamap_map} L\ref{ex:corollaryduplicateviamap_map:output}). For instance, forwarding tuples $\langle \text{09:58}, \left[\text{red}
, 13\right] \rangle$ and $\langle \text{09:58}, \left[\text{pink}
, 13\right] \rangle$ upon processing of $\langle \text{09:58}, \left[\text{C}, \text{``hi \#red \#pink''}\right] \rangle$.
}

\draft{
$\aggop{}$ is defined according to Operator~\ref{ex:corollaryduplicateviamap_agg}.
$\aggop{}$ aggregates tuples over a sliding window with $\WA{}$ and $\WS{}$ of 30 and 60 minutes, respectively.
According to $\keybysingle$, tuples sharing the same hashtag are aggregated together, and routed to the same instance of $\aggop{}$ by $\opmapping$ (cf.~\autoref{ssc:snparallelism}). In the example, output tuples $\langle \text{10:00}, \left[\text{red}, 13\right] \rangle$ and $\langle \text{10:00}, \left[\text{pink}, 13\right] \rangle$ are the results of aggregating the input tuples covering the event time $[\text{09:00},\text{10:00})$, $\langle \text{09:50}, \left[\text{pink}, 11\right] \rangle$ and $\langle \text{09:58}, \left[\text{pink}, 13\right] \rangle$ for key \textit{pink}, and $\langle \text{09:58}, \left[\text{red}, 13\right] \rangle$ for key \textit{red}.
}

\section{Example Operators}
\label{apx:examples}

\begin{algorithm}[h]
\footnotesize
\SetAlgoLined
\DontPrintSemicolon
\SetKwInput{KwParameters}{Instance-local variables}
\SetKwInput{KwAuxiliaryFunctions}{Auxiliary functions}
\SetKwProg{Proc}{Function}{}{}
\SetKwProg{Class}{Class}{}{}
\BlankLine
\nonl $\aggplusop(\WA,\WS,1,\keybymulti,\text{multi},S_O,\opmapping,f_U,f_O,-)$, where:\;  
\BlankLine
\Proc{$\keybymulti(t)$}
{
    $\mathbb{K}\xleftarrow{}\{\}$\; 
    \lFor{$h \in $\FuncSty{ hashtags($t.\varphi[2]$)}}{
        $\mathbb{K} \xleftarrow{} \mathbb{K} \cup h$ \label{ex:a:hash} 
    }
    \Return $\mathbb{K}$\;
}
\Proc{$\opmappingfun(k)$}
{
    \Return \FuncSty{hash($k$)}\%$\Pi(\aggplusop)$ \;
}
\Class{$\zeta$}
{
    \FuncSty{long count}\;
}
\Proc(\tcp*[h]{Update count based on $t.\varphi[2]$}){$f_U(w,t)$}
{
    \If{\FuncSty{length($t.\varphi[2]$)}$>w.\zeta.$\FuncSty{count}}{
        $w.\zeta.$\FuncSty{count}$\xleftarrow{}$\FuncSty{length($t.\varphi[2]$)}\;
    }
    \Return \{$w.\zeta$\}\;
}
\Proc{$f_O(w,t)$}
{
    \Return $\{w.k,w.\zeta.\text{count}\}$\;
}
\caption{User-defined parameters (cf.~\autoref{sec:generalizedmodel}) for an \aggplusopinmath{} computing the longest tweet  on a per-hashtag basis.}
\label{ex:a}
\end{algorithm}

\draft{
This appendix provides several complete examples.
It begins by providing one example for the \aggplusopinmath{} and one example for the \joinplusopinmath{} operators introduced in~\autoref{sec:generalizedmodel}.
The latter operator is one of the operators used to evaluate \xxx{} in~\autoref{sec:eval}.
The remaining examples are for other operators that are also used in~\autoref{sec:eval}.
}
\subsubsection*{\aggplusopinmath{} -- compute longest tweet per hashtag}
This \aggplusopinmath{} operator is the running example from~\autoref{sec:introduction}.
We assume that the input schema is defined as $\langle \tau, \left[ \text{user}, \text{tweet} \right] \rangle$, while the output schema $S_{O}$ as $\langle \tau, \left[ \text{hashtag}, \text{chars} \right] \rangle$, where \textit{hashtag} and  \textit{chars} represent the number of characters \textit{chars} of the longest tweet observed for each \textit{hashtag}.
As shown in Operator~\ref{ex:a}, \keybymultiinmath{} returns one key for each hashtag contained in the $tweet$ attribute of $t$ (L\ref{ex:a:hash}). Each hashtag is then matched with one instance by \opmappingfuninmath{}, hashing the key and returning its value modulo the parallelism degree of \aggplusopinmath{}.
The state associated with each window instance $w$ is a counter. Function $f_U$ is invoked to update such count, while $f_O$ is invoked to forward the hashtag ($k$) and count of each expired window instance. Since $\psipar = \text{multi}$, each expired window instance is simply removed after being passed as parameter to $f_O$.

\begin{algorithm}[h]
\footnotesize
\SetAlgoLined
\DontPrintSemicolon
\SetKwInput{KwParameters}{Instance-local variables}
\SetKwInput{KwAuxiliaryFunctions}{Auxiliary functions}
\SetKwProg{Proc}{Function}{}{}
\SetKwProg{Class}{Class}{}{}

\BlankLine
\nonl $\joinplusop(\WA,\WS,2,\keybymulti,\text{single},S_O,\opmapping,f_U,-,-)$, where:\;  
\BlankLine

\Proc{$\keybymulti(t)$}
{
    \Return $\{1,\ldots,1000\}$\;\label{alg:j:keyby} 
}
\Proc{$\opmappingfun(k)$}
{
    \Return \FuncSty{hash($k$})\%$\Pi(\joinplusop)$\; \label{alg:j:fmu}
}
\Class{$\zeta$}
{
    \FuncSty{long $c$} \tcp{tuple counter}\label{alg:j:zetastart}
    \FuncSty{queue<$t$> $T$} \tcp{previous tuples}\label{alg:j:zetastop}
}
\Proc{$f_U(\{w_1,w_2\},t)$}
{
    $\mathbb{T}_O\xleftarrow{}\{\}$ \tcp{create set for output tuples}
    $w_1.\zeta.c\xleftarrow{}w_1.\zeta.c+1$ \tcp{increase $c$ of win inst.}
    $w_2.\zeta.c\xleftarrow{}w_2.\zeta.c+1$\;
    \If(\tcp*[h]{set this/opposite win. inst.}){$t$ from $U_1$}{\label{alg:j:joinstart}
        \FuncSty{this}$_w\xleftarrow{}w_1$\;
        \FuncSty{opp}$_w\xleftarrow{}w_2$\;
    } \Else {
        \FuncSty{this}$_w\xleftarrow{}w_2$\;
        \FuncSty{opp}$_w\xleftarrow{}w_1$\;
    }
    \While(\tcp*[h]{purge}){\FuncSty{opp}$_w.\zeta.T[0].\tau+\WS<t.\tau$}{
        \FuncSty{opp}$_w.\zeta.T$.\FuncSty{dequeue()}
    }
    \For(\tcp*[h]{match}){$t' \in $\FuncSty{ opp}$_w.\zeta.T$}{
        \lIf{$t'$ and $t$ match}{
        $\mathbb{T}_O \xleftarrow{} \mathbb{T}_O \cup \{t'.\varphi,t.\varphi\}$
        }
    }
    \If(\tcp*[h]{store $t$}){$c\%1000=$\FuncSty{this}$_w.k$}{
        \FuncSty{this}$_w.\zeta.T$.\FuncSty{enqueue($t$)}\;
    }
    \Return $\mathbb{T}_O$ \tcp{return results}\label{alg:j:joinstop}
}
\caption{User-defined parameters (cf.~\autoref{sec:generalizedmodel}) for a \joinplusopinmath{} implementing ScaleJoin~\cite{scalejoin}.}
\label{ex:j}
\end{algorithm}

\renewcommand*{\algorithmcfname}{Algorithm}

\begin{algorithm}[h]
\setcounter{algocf}{7}
\footnotesize
\SetAlgoLined
\DontPrintSemicolon
\SetKwInput{KwParameters}{Instance-local variables}
\SetKwInput{KwAuxiliaryFunctions}{Auxiliary functions}
\SetKwProg{Proc}{Function}{}{}
\SetKwProg{Class}{Class}{}{}
\BlankLine
\Proc{\FuncSty{process}($t$)}
{
    \lFor{$w \in $\FuncSty{split($t.\varphi[2]$)}}{
        \FuncSty{forward}($\langle t.\tau, \left[w\right] \rangle$)
    }
}
\caption{\draft{Method \texttt{process} for the $\mapop{}$ used in~\autoref{ssc:q1} in Flink's \texttt{wordcount} implementation.}}
\label{ex:flink:wordcout:map}
\end{algorithm}

\renewcommand*{\algorithmcfname}{Operator}

\begin{algorithm}[h]
\setcounter{algocf}{3}
\footnotesize
\SetAlgoLined
\DontPrintSemicolon
\SetKwInput{KwParameters}{Instance-local variables}
\SetKwInput{KwAuxiliaryFunctions}{Auxiliary functions}
\SetKwProg{Proc}{Function}{}{}
\SetKwProg{Class}{Class}{}{}
\BlankLine
\nonl $\aggop(60s,120s,1,\keybysingle,\text{multi},\opmapping,S_O,f_A,f_R)$, where:\;  
\BlankLine
\Proc(\tcp*[h]{implementation for \FuncSty{wordcount}}){$\keybysingle(t)$}
{
    \Return $t.\varphi[1]$\;
}\Proc(\tcp*[h]{implementation for \FuncSty{paircount}}){$\keybysingle(t)$}
{
    \Return $\langle t.\varphi[1],t.\varphi[2] \rangle$\;
}
\Proc{$\opmappingfun(k)$}
{
    \Return \FuncSty{hash($k$)}\%$\Pi(\aggop)$ \;
}
\Class{$\zeta$}
{
    \FuncSty{long count}\;
}
\Proc{$f_A(w,t)$}
{
    $w.\zeta.$\FuncSty{count}$\xleftarrow{}w.\zeta.$\FuncSty{count}$+1$\;
    \Return \{$w.\zeta$\}\;
}
\Proc{$f_R(w,t)$}
{
    \Return $\{w.k,w.\zeta.\text{count}\}$\;
}
\caption{\draft{User-defined parameters for the $\aggop{}$ used in~\autoref{ssc:q1} in Flink's \texttt{wordcount}/\texttt{paircount} implementation.}}
\label{ex:flink:paircount:aggregate}
\end{algorithm}

\renewcommand*{\algorithmcfname}{Algorithm}

\begin{algorithm}[h]
\setcounter{algocf}{8}
\footnotesize
\SetAlgoLined
\DontPrintSemicolon
\SetKwInput{KwParameters}{Instance-local variables}
\SetKwInput{KwAuxiliaryFunctions}{Auxiliary functions}
\SetKwProg{Proc}{Function}{}{}
\SetKwProg{Class}{Class}{}{}
\KwParameters{}
$B$ \tcp{Maximum dist. for 2 words to form a pair}
\BlankLine
\Proc{\FuncSty{process}($t$)}
{
    $\mathbb{W}\xleftarrow{}$\FuncSty{split($t.\varphi[2]$)} \;
    \For{$i \in 0,\ldots,|\mathbb{W}|-1$}{
    \For{$j \in i+1,\ldots,|\mathbb{W}|-1$}{
    \lIf{$j-i\leq B$}{
        \FuncSty{forward}($\langle t.\tau, \left[\mathbb{W}[i],\mathbb{W}[j]\right] \rangle$)
    }
    }
    }
}
\caption{\draft{Method \texttt{process} for the $\mapop{}$ used in~\autoref{ssc:q1} in Flink's \texttt{paircount} implementation.}}
\label{ex:flink:paircount:map}
\end{algorithm}

\subsubsection*{\joinplusopinmath{} -- ScaleJoin}

To provide an example of a \joinplusopinmath{}, we now show how \generalopplusinmath{} can be used to implement ScaleJoin~\cite{scalejoin}, which performs a Cartesian join of all tuples belonging to two windows. 

To run in a parallel and skew-resilient fashion, it delivers each input tuple (from any of the two input streams) to all instances, and has each instance compare it with a share of previous tuples (from any of the two input streams). Each tuple is stored, in a round-robin fashion, by exactly one of the instances. 
We assume in this case that $S_O$ is the concatenation of schemas from the two input streams.

As shown in Operator~\ref{ex:j}, this implementation maintains a fixed number of keys, larger than $\Pi(\joinplusop)$ (1000 in the example). All keys are returned for each tuple by \keybymultiinmath{} (L\ref{alg:j:keyby}). Hence, each instance will be given the chance of running $f_U$ for their share of keys (L\ref{alg:j:fmu}). The state associated with each window instance consists of counter $c$ and a \texttt{queue} of tuples (L\ref{alg:j:zetastart}-\ref{alg:j:zetastop}).
Since $I=2$, function $f_U$ is invoked passing two window instances together with $t$.
The counter of each pair of window instances (for all keys) is consistently increased by one for each window instance upon reception of the same tuple for all $j^+_i$, because \elasticscalegate{}$_{out}$ delivers all tuples in the same order to all instances. Afterward, $t$ is used to purge all stale tuples from the window instance opposite to $t$'s stream and subsequently matched with all remaining tuples in such window instance.
Finally, $t$ is stored, in a round-robin fashion based on $c$, in the window instance associated with exactly one key by one instance (L\ref{alg:j:joinstart}-\ref{alg:j:joinstop}).

\subsubsection*{\draft{Other evaluation operators}}

\draft{This section covers additional examples for other operators used in~\autoref{sec:eval}.
Algorithm~\ref{ex:flink:wordcout:map} and Operator~\ref{ex:flink:paircount:aggregate} refer to the \texttt{process} method of $\mapop{}$ and the implementation of $\aggop{}$ in Flink, respectively, for the \texttt{wordcount} experiment in \autoref{ssc:q2}, while Algorithm~\ref{ex:flink:paircount:map} and Operator~\ref{ex:flink:paircount:aggregate} are the ones used in Flink for $\mapop{}$ and $\aggop{}$, respectively, for the \texttt{paircount} experiment in \autoref{ssc:q2}.
In this case, we re-use the same Operator and specialize function $\keybysingle$ for compact notation.
Operator~\ref{ex:stretch:wordpaircount} is the $\aggplusop{}$ used in \xxx{} for the \texttt{wordcount} and \texttt{paircount} experiments in \autoref{ssc:q2}.
Also in this case, we re-use the same Operator and specialize $\keybymulti$ for compact notation.
Finally, Operator~\ref{ex:o2} is used to measure the maximum throughput/minimum latency when the performance bottleneck is given by data sharing/sorting (\autoref{ssc:q2}).}

\renewcommand*{\algorithmcfname}{Operator}

\begin{algorithm}[h]
\setcounter{algocf}{4}
\footnotesize
\SetAlgoLined
\DontPrintSemicolon
\SetKwInput{KwParameters}{Instance-local variables}
\SetKwInput{KwAuxiliaryFunctions}{Auxiliary functions}
\SetKwProg{Proc}{Function}{}{}
\SetKwProg{Class}{Class}{}{}
\BlankLine
\nonl $\aggplusop(\WA,\WS,1,\keybymulti,\text{multi},S_O,\opmapping,f_U,f_O,-)$, where:\;  
\BlankLine
\Proc(\tcp*[h]{implementation for \FuncSty{wordcount}}){$\keybymulti(t)$}
{
    $\mathbb{K}\xleftarrow{}\{\}$\;
    \lFor{$w \in $\FuncSty{split($t.\varphi[2]$)}}{
        $\mathbb{K} \xleftarrow{} \mathbb{K} \cup w$
    }
    \Return $\mathbb{K}$\;
}
\Proc(\tcp*[h]{implementation for \FuncSty{paircount}. Parameter $B$ (the max. dist. for 2 words to form a pair) is a local variable of $\keybymulti$.}){$\keybymulti(t)$}
{
    $\mathbb{K}\xleftarrow{}\{\}$\;
    $\mathbb{W}\xleftarrow{}$\FuncSty{split($t.\varphi[2]$)} \;
    \For{$i \in 0,\ldots,|\mathbb{W}|-1$}{
    \For{$j \in i+1,\ldots,|\mathbb{W}|-1$}{
    \lIf{$j-i\leq B$}{
         $\mathbb{K} \xleftarrow{} \mathbb{K} \cup \langle \mathbb{W}[i],\mathbb{W}[j] \rangle$
    }
    }
    }
    \Return $\mathbb{K}$\;
}
\Proc{$\opmappingfun(k)$}
{
    \Return \FuncSty{hash($k$)}\%$\Pi(\aggplusop)$ \;
}
\Class{$\zeta$}
{
    \FuncSty{long count}\;
}
\Proc{$f_U(w,t)$}
{
    $w.\zeta.$\FuncSty{count}$\xleftarrow{}w.\zeta.$\FuncSty{count}$+1$\;
    \Return \{$w.\zeta$\}\;
}
\Proc{$f_O(w,t)$}
{
    \Return $\{w.k,w.\zeta.\text{count}\}$\;
}
\caption{\draft{User-defined parameters for the $\aggplusop{}$ used in~\autoref{ssc:q1} for \xxx{}'s \texttt{wordcount}/\texttt{paircount} implementation.}}
\label{ex:stretch:wordpaircount}
\end{algorithm}

\begin{algorithm}[h]
\footnotesize
\SetAlgoLined
\DontPrintSemicolon
\SetKwInput{KwParameters}{Instance-local variables}
\SetKwInput{KwAuxiliaryFunctions}{Auxiliary functions}
\SetKwProg{Proc}{Function}{}{}
\SetKwProg{Class}{Class}{}{}

\BlankLine
\nonl $\generalopplus(\delta,\delta,2,\keybymulti,\text{single},S_O,\opmapping,f_U,-,-)$, where:\;  
\BlankLine

\Proc{$\keybymulti(t)$}
{
    \Return $\{1,\ldots,n\}$\;
}
\Proc{$\opmappingfun(k)$}
{
    \Return $k$\;
}
\Proc{$f_U(\{w_1,w_2\},t)$}
{
    \Return $\{\emptyset,\emptyset,t.\varphi\}$ \tcp{return empty states for $w_1$ and $w_2$ and $t$'s payload}
}
\caption{\generalopplusinmath{} for $Q_2$ and $\Pi(\generalopplus)=n$.}
\label{ex:o2}
\end{algorithm}

\newlength{\commentWidth}
\setlength{\commentWidth}{11.4cm}
\newcommand{\atcp}[1]{\tcp*[r]{\makebox[\commentWidth]{#1\hfill}}}

\renewcommand*{\algorithmcfname}{Execution Trace}

\begin{algorithm}[h]
\footnotesize
\setcounter{algocf}{0}
\SetAlgoLined
\DontPrintSemicolon

\SetKwInput{KwStart}{Initial State}

\KwStart{}
\nonl $W=\text{09:00}$\;
\nonl $\sigma=\{\langle 11,\text{09:00},``\text{pink}" \rangle,\langle 11,\text{09:30},``\text{pink}" \rangle\}$\;

\BlankLine
\FuncSty{update$W$}($t$) \tcp{$W\xleftarrow{}\text{09:58}$} \label{alg:exec:update} 
\FuncSty{handleInputTuple}($t$)\;\label{alg:exec:handleinputtuple} 
\FuncSty{earliestWinL}($t$) \tcp{$\tau_1 \xleftarrow{} \text{09:00}$}
\FuncSty{latestWinL}($t$) \tcp{$\tau_2 \xleftarrow{} \text{09:30}$} \label{alg:exec:latestWinL} 
\FuncSty{$\sigma_j$.check\&Create($``\text{red}",\text{09:00}$)}\; 
\tcp{$\sigma \xleftarrow{} \{\langle 11,\text{09:00},``\text{pink}" \rangle,\langle 0,\text{09:00},``\text{red}" \rangle$,\\$ \langle 11,\text{09:30},``\text{pink}" \rangle\}$} \label{alg:exec:updatestart}
\FuncSty{$\sigma_j$.set($``\text{red}",0,\{13\}$)}\; 
\tcp{$\sigma\xleftarrow{}\{\langle 11,\text{09:00},``\text{pink}" \rangle,\langle 13,\text{09:00},``\text{red}" \rangle$,\\$\langle 11,\text{09:30},``\text{pink}" \rangle\}$}
\FuncSty{$\sigma_j$.check\&Create($``\text{red}",\text{09:30}$)}\;
\tcp{$\sigma\xleftarrow{}\{\langle 11,\text{09:00},``\text{pink}" \rangle,\langle 13,\text{09:00},``\text{red}" \rangle$,\\$\langle 11,\text{09:30},``\text{pink}" \rangle,\langle 0,\text{09:30},``\text{red}" \rangle\}$}
\FuncSty{$\sigma_j$.set($``\text{red}",1,\{13\}$)}\;
\tcp{$\sigma\xleftarrow{}\{\langle 11,\text{09:00},``\text{pink}" \rangle,\langle 13,\text{09:00},``\text{red}" \rangle$,\\$\langle 11,\text{09:30},``\text{pink}" \rangle,\langle 13,\text{09:30},``\text{red}" \rangle\}$}
\FuncSty{$\sigma_j$.check\&Create($``\text{pink}",\text{09:00}$)}\;
\FuncSty{$\sigma_j$.set($``\text{pink}",0,\{13\}$)}\;
\tcp{$\sigma\xleftarrow{}\{\langle 13,\text{09:00},``\text{pink}" \rangle,\langle 13,\text{09:00},``\text{red}" \rangle$,\\$\langle 11,\text{09:30},``\text{pink}" \rangle,\langle 13,\text{09:30},``\text{red}" \rangle\}$}
\FuncSty{$\sigma_j$.check\&Create($``\text{pink}",\text{09:30}$)}\;
\FuncSty{$\sigma_j$.set($``\text{pink}",1,\{13\}$)}\;
\tcp{$\sigma\xleftarrow{}\{\langle 13,\text{09:00},``\text{pink}" \rangle,\langle 13,\text{09:00},``\text{red}" \rangle$,\\$\langle 13,\text{09:30},``\text{pink}" \rangle,\langle 13,\text{09:30},``\text{red}" \rangle\}$} \label{alg:exec:updateend}

\caption{Execution trace for the $\aggplusop$ in Appendix~\ref{apx:examples}, upon invocation of $\bm{\texttt{processSN}(t=\langle \text{09:58}, \left[\text{C}, ``\text{hi \#red \#pink}"\right] \rangle})$}
\label{alg:exec}
\end{algorithm}

\section{Example for Theorem~\ref{thm:generalizationworks}}
\label{apx:appendixexecution}

\draft{
In here, we continue the example introduced in Appendix~\ref{apx:appendixcorollary}, focusing on the execution path (in terms of invoked functions) that is triggered for the $\aggplusop$ of Appendix~\ref{apx:examples} (which implements the same semantics of the $\mapop{}$ and $\aggop{}$ from \autoref{fig:corollary1example}) upon the reception of tuple $t=\langle \text{09:58}, \left[\text{C}, ``\text{hi \#red \#pink}"\right] \rangle$.
We assume in this case that the \WA{} and \WS{} parameters for $\aggplusop$ are set to 30 minutes and 1 hour, respectively.
As we show next, the events triggered on $\generalopplus$'s state upon reception of $t$ are the same triggered for $\aggop{}$ in \autoref{fig:corollary1example} upon the reception of the two tuples produced by $\mapop{}$ from $t$, i.e. $\langle \text{09:58}, \left[\text{red}, 13\right] \rangle$ and $\langle \text{09:58}, \left[\text{pink},13\right] \rangle$.}

\draft{
For simplicity, we assume all the input tuples' timestamps from \autoref{fig:corollary1example} are valid watermarks for $\aggplusop$ (as stated in \autoref{sec:smps} such information can be carried in tuples' metadata). Hence, given the \WA{}/\WS{} parameters of $\aggop{}$/$\aggplusop$, and since the tuple previously processed by $\aggop{}$/$\aggplusop$ was $\langle \text{09:50}, \left[\text{B}, ``\text{hello \#pink}"\right] \rangle$, only window instances such that $l \geq 09:00$ are maintained by $\aggplusop$ upon reception of $t$. 
}

\draft{
The initial state and the execution path following the invocation of \texttt{processSN} for tuple $t$ are shown in Listing~\ref{alg:exec}. 
The first invoked method is \texttt{update$W$}, and $W$ accordingly updated to 09:58 (L\ref{alg:exec:update}).
Since $\psipar{}=\text{multi}$, $\tau_1$ and $\tau_2$ are set to 09:00 and 09:30 within the execution of method \texttt{handleInpuTuple} (L\ref{alg:exec:handleinputtuple}-\ref{alg:exec:latestWinL}).
Subsequently, all window instances for key ``red" and ``pink", and for $l=\text{09:00}$ and $l=\text{09:30}$, are updated (L\ref{alg:exec:updatestart}-\ref{alg:exec:updateend}). Since the window instances for key ``pink" are already in place before the processing of $t$, only the counter maintained in their $\zeta$ is updated. As shown in L\ref{alg:exec:updateend}, the final state maintained by $\aggplusop$, contains the window instances that will then result in the production of tuples $\langle \text{10:00}, \left[\text{red}, 13\right] \rangle$ and $\langle \text{10:00}, \left[\text{pink},13\right] \rangle$ (\autoref{fig:corollary1example}) once their timestamp is set to $l+\WS{}$ (i.e., to 10:00).
}
\section{Additional Experiments}
\label{apx:addexperiments}

\draft{
This section contains additional experiments, similar to the one presented in \autoref{ssc:q5}. These 20 minutes long experiments stress-test \xxx{}’s reconfigurations with several phases in which the input rate is subject to abrupt changes.
}

\begin{figure}[h]
  \centering
  \includegraphics[width=0.8\linewidth]{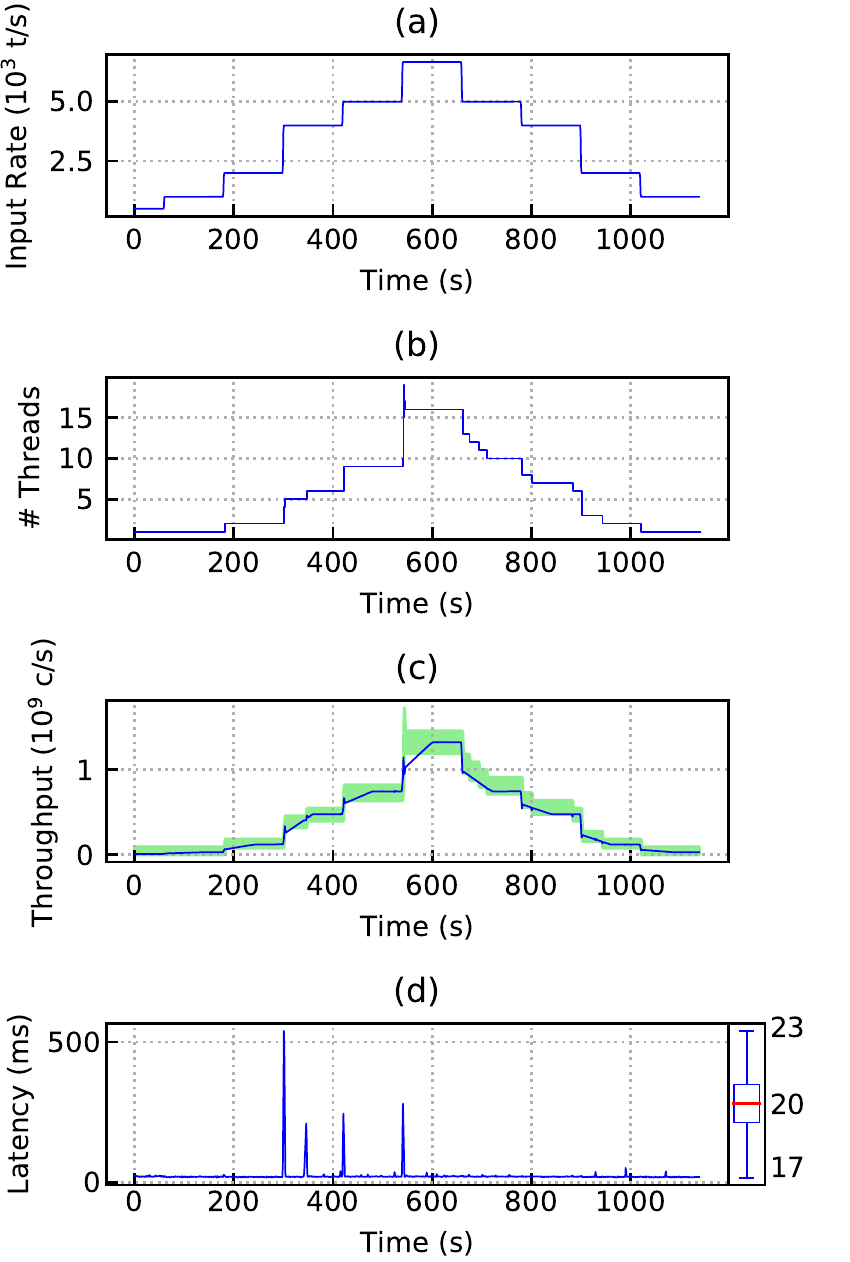}
  \caption{Results of adjusting the number of processing threads with respect to the input rate for synthetic data.}
  \label{fig:rateSeq2}
\end{figure}

\begin{figure}[h]
  \centering
  \includegraphics[width=0.8\linewidth]{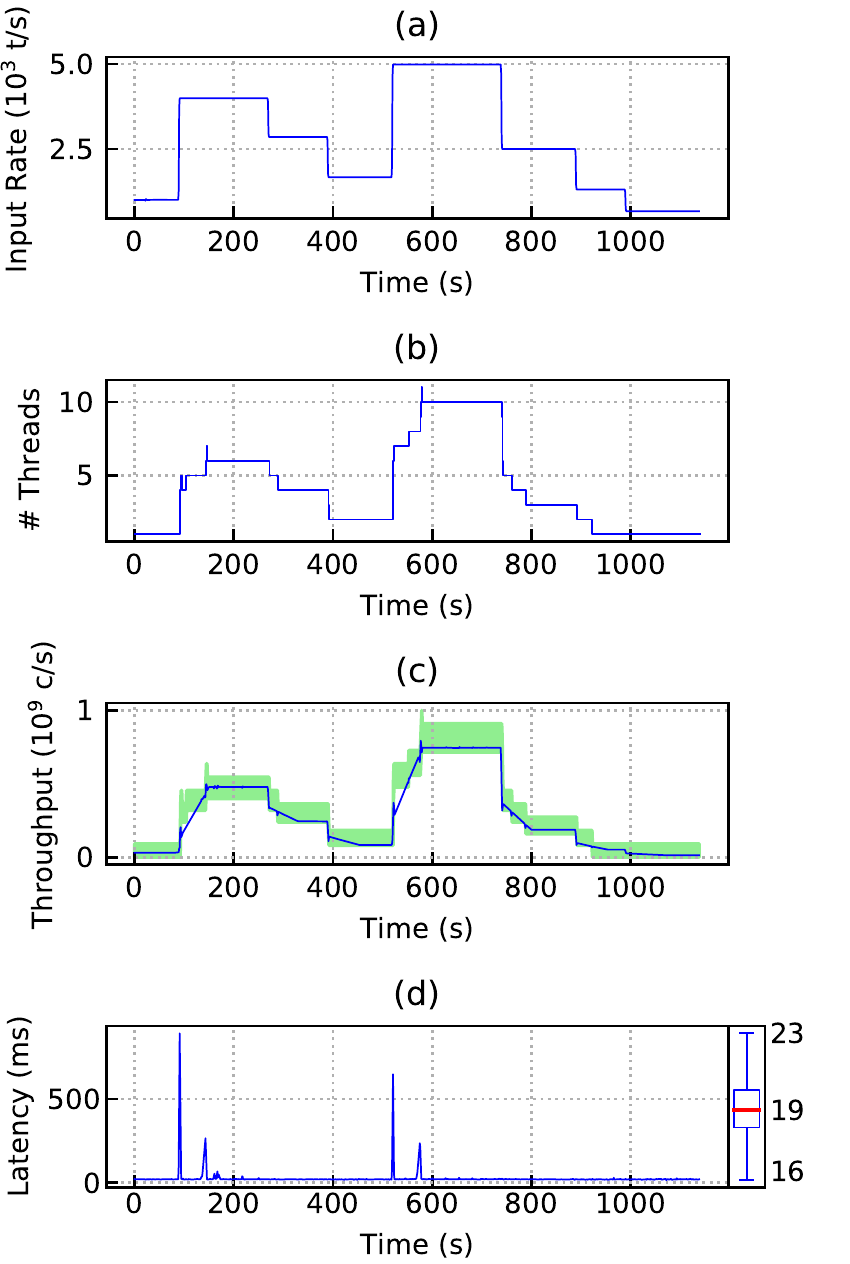}
  \caption{Results of adjusting the number of processing threads with respect to the input rate for synthetic data.}
  \label{fig:rateSeq3}
\end{figure}

\begin{figure}[h]
  \centering
  \includegraphics[width=0.8\linewidth]{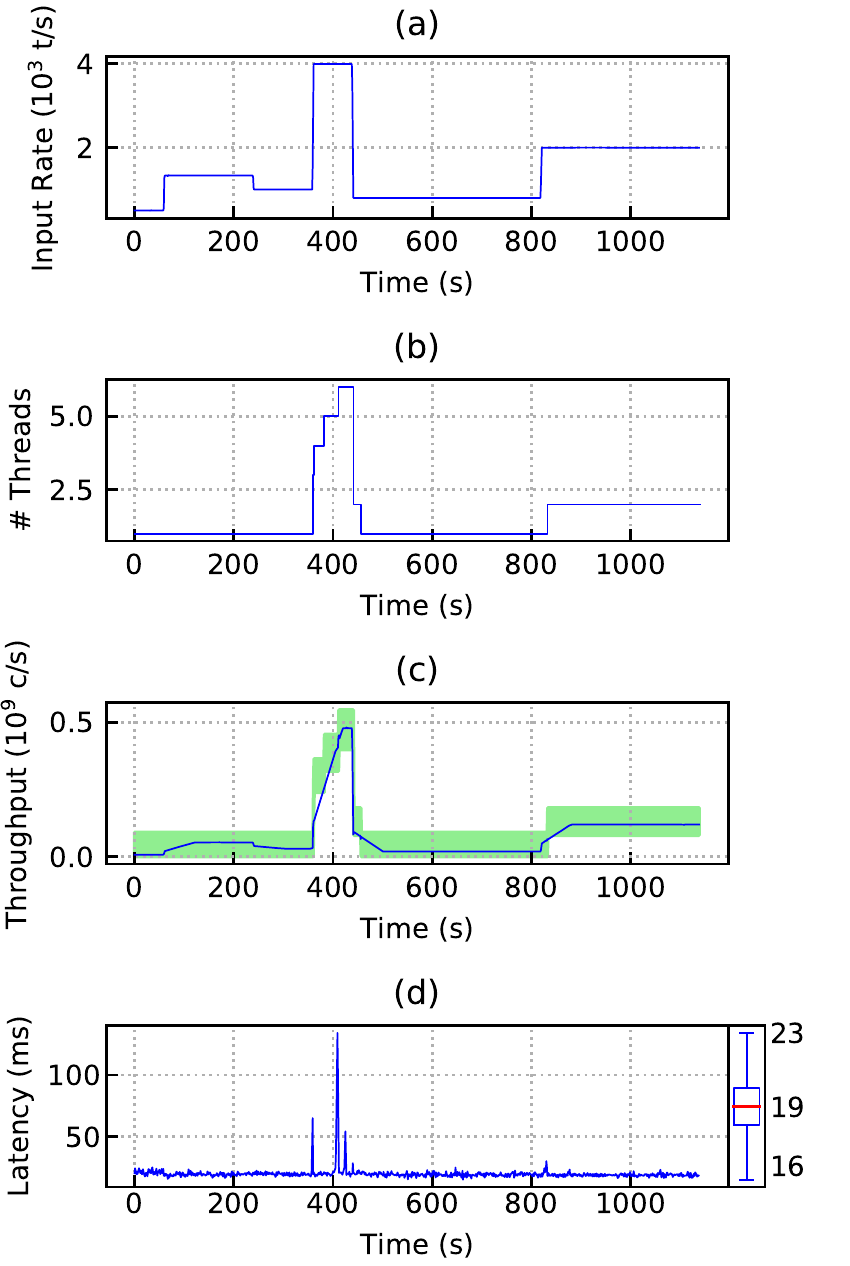}
  \caption{Results of adjusting the number of processing threads with respect to the input rate for synthetic data.}
  \label{fig:rateSeq4}
\end{figure}

\begin{figure}[h]
  \centering
  \includegraphics[width=0.8\linewidth]{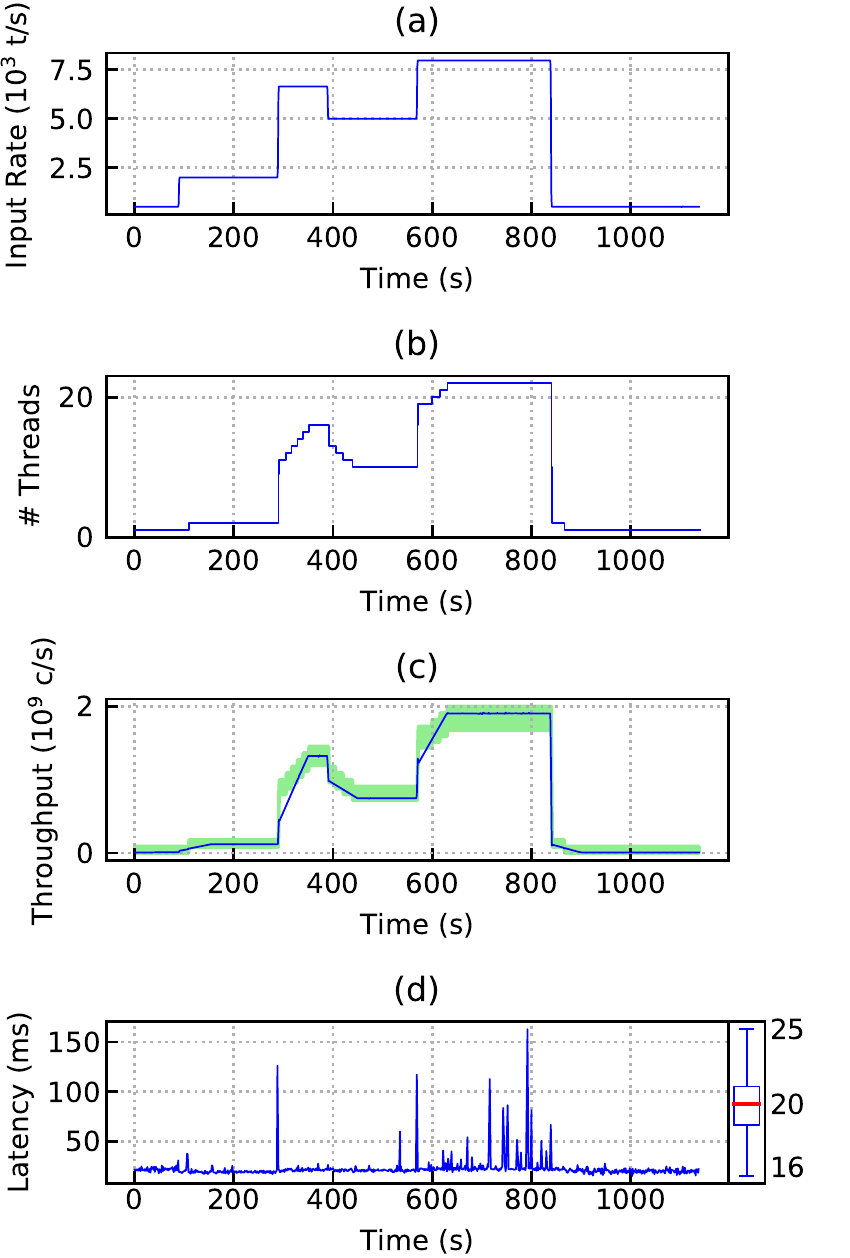}
  \caption{Results of adjusting the number of processing threads with respect to the input rate for synthetic data.}
  \label{fig:rateSeq5}
\end{figure}


\end{document}